\documentclass[12pt]{article}
\usepackage[english]{babel}
\usepackage{tabulary}
\usepackage{longtable}
\usepackage{breqn}
\usepackage{graphicx}
\usepackage[left=1.00in,right=1.00in,top=1.5in,bottom=1.5in]{geometry}
\usepackage[symbol]{footmisc}
\usepackage{comment}

\newcommand{\sym}[1]{#1}
\usepackage{changepage}

\usepackage{etoolbox}

\usepackage{siunitx}
\usepackage{threeparttable}
\usepackage{subcaption}
\usepackage{booktabs}
\usepackage{multirow}
\usepackage{tabularx}
\usepackage{setspace}
\usepackage{pdflscape}
\usepackage{amsthm}
\usepackage{bm}
\usepackage[12pt]{moresize}
\usepackage{amsmath}
\mathchardef\mhyphen="2D 

\usepackage{changepage}
\usepackage{ragged2e}
\usepackage{bbm}
\usepackage{dsfont}
\usepackage{float}
\usepackage{nicefrac}
\usepackage{setspace}
\usepackage[titletoc,title]{appendix}
\usepackage{rotating}
\usepackage{microtype}
\usepackage[12pt]{moresize}
\usepackage{anyfontsize}
\usepackage{graphicx}
\usepackage{enumitem}
\usepackage{amssymb}

\makeatletter
\renewcommand*{\@fnsymbol}[1]{\ensuremath{\ifcase#1\or *\or \dagger\or \ddagger\or
    \mathsection\or \| \or \|\or **\or \dagger\dagger
    \or \ddagger\ddagger \else\@ctrerr\fi}}
\makeatother

\theoremstyle{plain}

\newtheorem{proposition}{Proposition}
\newtheorem{lemma}{Lemma}
\makeatother
\providecommand{\lemmaname}{Lemma}
\providecommand{\propositionname}{Proposition}

\usepackage{array}
\usepackage[labelfont = bf, labelsep = period, font = small]{caption}

\usepackage{lineno}

\usepackage[T1]{fontenc}
\usepackage[adobe-utopia]{mathdesign}

\usepackage[table,xcdraw,svgnames]{xcolor}
\usepackage{natbib}  
\usepackage[colorlinks]{hyperref}


\hypersetup{colorlinks = true}
\AtBeginDocument{%
\hypersetup{%
  	linkcolor =     blue!65!black,%
   	anchorcolor=    blue!65!black,%
   	citecolor =     blue!65!black,%
   	filecolor =     blue!65!black,%
   	urlcolor =      blue!5!black,%
   	menucolor =     blue!65!black,%
	}%
	}%

\usepackage{etoolbox}
\makeatletter
\patchcmd{\NAT@citex}
  {\@citea\NAT@hyper@{%
     \NAT@nmfmt{\NAT@nm}%
     \hyper@natlinkbreak{\NAT@aysep\NAT@spacechar}{\@citeb\@extra@b@citeb}%
     \NAT@date}}
  {\@citea\NAT@nmfmt{\NAT@nm}%
   \NAT@aysep\NAT@spacechar\NAT@hyper@{\NAT@date}}{}{}
\patchcmd{\NAT@citex}
  {\@citea\NAT@hyper@{%
     \NAT@nmfmt{\NAT@nm}%
     \hyper@natlinkbreak{\NAT@spacechar\NAT@@open\if*#1*\else#1\NAT@spacechar\fi}%
       {\@citeb\@extra@b@citeb}%
     \NAT@date}}
  {\@citea\NAT@nmfmt{\NAT@nm}%
   \NAT@spacechar\NAT@@open\if*#1*\else#1\NAT@spacechar\fi\NAT@hyper@{\NAT@date}}
  {}{}
\makeatother

\date{\normalsize\today}

\usepackage{verbatim}

\begin{document}

\title{\vspace{-0.5in}{\LARGE The Dynamic Trade-Off of Dual-Class Shares}\noindent\thanks{\fontsize{9.5}{10}\selectfont{We appreciate comments from Alejandro Drexler, Camilo Garcia-Jimeno, Todd Gormley, Christian Hellwig, Wei Jiang, Michelle Lowry, Thomas Philippon, Oliver Spalt, and seminar and conference participants from Bristol, CELS (Emory), Chicago Entrepreneurship Workshop, Chicago Fed (Finance and Micro), Drexel Corporate Governance Conference, EFA (Bratislava), Erasmus Corporate Governance Conference, KAIST-Korea University, Melbourne, Seoul National University, and Texas A\&M, and excellent research assistance from Roger Chen, Jee-Hun Choi, Sophie Croome, Daniel Gallego, Max Gillet, Hassan Ilyas, Kenny Jing, Julia Kharzeev, Dawoon Kim, Eric Kim, Seungsoo Kim, Yejun Kim, Boyao Li, Cindy Lu, Pradeep Muthukrishnan, Felicity Phelan, Ahmad Sabbagh, Ethan Sun, Brian Wickman, and Zakary Yudhishthu. This paper supersedes an earlier paper entitled ``Sticking around Too Long? Dynamics of the Benefits of Dual-Class Voting.'' Any views expressed are those of the authors and not those of the Federal Reserve System. All errors are our own.} \looseness=-1} \\ $\;$ \\} 

\author{%
\vspace{-1.0cm} \\
\begin{tabular}{ccccc}
\href{https://sites.google.com/view/hyunseobkim/}{\normalsize\textsc{Hyunseob Kim}}\thanks{\fontsize{9.5}{10}\selectfont{Federal Reserve Bank of Chicago, Economic Research Department. \href{mailto:hyunseob.kim@chi.frb.org}{hyunseob.kim@chi.frb.org}.}}  & \hspace{0.5cm} & \href{https://sites.google.com/view/doronlevit}{\normalsize\textsc{Doron Levit}}\thanks{\fontsize{9.5}{10}\selectfont{University of Washington, Foster School of Business. \href{mailto:dlevit@uw.edu}{dlevit@uw.edu}.}} & \hspace{0.5cm} & \href{https://www.hkubs.hku.hk/people/roni-michaely/}{\normalsize\textsc{Roni Michaely}}\thanks{\fontsize{9.5}{10}\selectfont{The University of Hong Kong, Faculty of Business and Economics. \href{mailto:ronim@hku.hk}{ronim@hku.hk}.}} \\
\normalsize \textit{Federal Reserve Bank of Chicago} & \hspace{0.5cm} & \normalsize \textit{University of Washington} & \hspace{0.5cm} & \normalsize \textit{University of Hong Kong}
\end{tabular}
\vspace{3mm}
}
 
\makeatletter
\patchcmd{\@maketitle}{\begin{center}}{\begin{adjustwidth}{-1in}{-1in}\begin{center}}{}{}
\patchcmd{\@maketitle}{\end{center}}{\end{center}\end{adjustwidth}}{}{}
\makeatother
\maketitle

\renewenvironment{abstract}
 {\small
  \begin{center}
  \bfseries \abstractname\vspace{-.5em}\vspace{0pt}
  \end{center}
  \list{}{%
    \setlength{\leftmargin}{7mm}
    \setlength{\rightmargin}{\leftmargin}%
  }%
  \item\relax}
 {\endlist}

\begin{abstract}

\noindent
Dual-class shares allocate control to founders whose firm-specific investments drive firm value but separate control from ownership, raising agency costs. We analyze this trade-off dynamically. Using new data on US dual-class firms spanning 52 years and difference-in-differences designs, we show that valuations rise following dual-class recapitalizations but decline over time, whereas innovative output increases persistently. These effects are concentrated in industries with greater firm-specific investments. We find corresponding results for stock unifications. Investment by mature dual-class firms is less sensitive to opportunities and voting premia increase with maturity. Our results support dynamic treatment effects and yield new policy implications.

\textit{Keywords:} Dual-class shares; dynamic agency; private benefits of control; sunset provisions \\
\textit{JEL Classification:} G34, D86, K22 \\
\end{abstract}
\thispagestyle{empty}

\newpage
\pagenumbering{arabic}  
\fontsize{12}{14}\selectfont
\setstretch{1.3}

\section{Introduction}

A rising share of newly public firms has adopted dual-class shares in the US over the past several decades---Figure \ref{fig:IPOs} shows that over 40\% of the initial public offerings (IPOs) in 2025 had such structures. Accordingly, dual-class firms, including many mega firms such as Alphabet, represented over 9\% of US stock market value from 2013--2022---a more than fourfold increase from just over 2\% in the early 1980s.\footnote{This number may increase with SpaceX's dual-class IPO. See, e.g., \href{https://www.wsj.com/business/how-spacexs-ipo-cements-elon-musks-grip-on-the-company-90ef1ddd}{{\color{blue}{``How Supervoting Shares Tighten Musk's Iron Grip on SpaceX,''}}} \emph{The Wall Street Journal}, May 22, 2026.} Consequently, it has become crucial to understand how the allocation of voting rights---the choice between dual-class and single-class shares in particular---affects economic efficiency. When Meta's market value fell in 2022, for example, investors blamed its dual-class shares for the firm's poor performance.\footnote{See, e.g., \href{https://www.economist.com/leaders/2022/11/03/big-tech-big-trouble}{{\color{blue}{``Facebook and the Conglomerate Curse,''}}} \emph{The Economist}, November 3, 2022.} The SEC's then-Commissioner Robert \citet{jackson2018} had urged US stock exchanges to consider limiting use of dual-class shares. Consistent with increasing controversy surrounding the impacts of dual-class shares, the number of shareholder voting proposals related to dual-class structures between 2021 and 2023 nearly tripled relative to the annual average of the preceding decade.\footnote{Data on shareholder proposals are obtained from ISS Voting Analytics, covering Rule 14a-8 proposals targeting the elimination of dual-class structures.} If dual-class shares lead to poor performance through increased agency costs---as is often argued---why are they widely and increasingly used by young firms? What are the impacts of policies that limit their use? In this paper, we construct a new panel dataset on US public firms' voting and ownership structures that spans a half-century to study these increasingly important questions.\looseness=-1 

To organize our empirical analysis, we develop a theoretical framework in which dual-class shares encourage the founder's firm-specific investment (``effort''), but the value added from the investment is eroded by private benefits that rise with firm maturity---creating a dynamic trade-off. We show private benefits are an important source of incentives for the founder's effort. Because dual-class shares protect the founder's private benefits \emph{ex post}, he has stronger incentives to exert firm-specific effort \emph{ex ante}. In contrast, single-class shares limit his ability to extract private benefits down the road---because shareholders cannot commit not to replace him once his effort is sunk---which in turn weakens incentives. This holdup problem (e.g., \citet{grossman_hart1986}; \citet{hart_moore1990}) is mitigated by dual-class shares.\footnote{See, e.g., \citet{iyer2015ex, iyer2024contracting} for evidence on the role of holdup in contracting.} In essence, dual-class structures allocate control to the founder whose firm-specific investment is important for the firm's success.\looseness=-1 

The model predicts a transitory dual-class premium: young dual-class firms have higher value than comparable single-class firms, ceteris paribus, but this initial value gap shrinks and ultimately reverses as firms mature. We test these model predictions using a stacked difference-in-differences (DD) event-study design (e.g., \citet{gormley2016managers}; \citet{cengiz2019minwage}), separately examining two events: (i) dual-class recapitalizations, in which single-class firms adopt a dual-class structure, and (ii) dual-class stock unifications, in which dual-class firms transition to a single-class structure. To this end, we construct a new event-study dataset spanning 1971--2022, drawing from public disclosures, news sources, \emph{Moody's Manuals}, and CRSP--Compustat. 

We first show that these changes in corporate control structure are common: among about 1,150 firms that have ever had dual-class shares, about 230 and 350 firms have recapitalized with and unified dual-class shares, respectively. We use Tobin's \textit{q} as our measure of firm valuation (e.g., \citet{cronqvist2003agency}; \citet*{gompers2010extreme}) and patent outputs to proxy for the founder's firm-specific investment.\footnote{This measure assumes that the founder's investment, as reflected in patent outputs, is not easily redeployable outside the firm, leading to a holdup problem. See \cite{rajan2012} for an argument that entrepreneurship inherently involves producing differentiated products with little outside market value. } We find that in the years leading up to a recapitalization or unification, both Tobin's $q$ and patent output exhibit parallel pre-trends between switching and non-switching firms. In the year firms recapitalize into a dual-class structure, Tobin's $q$ rises by 0.21 relative to the control group of firms remaining single-class, and this valuation premium remains largely significant in the next five years. The premium then starts declining---by ten years post-recapitalization, the valuation gap essentially disappears. Also consistent with our framework, a recapitalization leads to a persistent increase in patent output---our proxy for founder effort. We find corresponding results for stock unifications: Tobin's $q$ \emph{declines} initially before reverting to its pre-unification level, while patent output experiences a persistent decline, all relative to firms staying dual-class. \looseness=-1

We bolster our main difference-in-differences analysis of the recapitalization by employing another DD design that exploits a regulatory difference between the New York Stock Exchange and other exchanges in the early-1980s on listing dual-class shares, combined with rising external governance forces then. We find a similar magnitude increase in valuation post-recapitalization, as well as no evidence of differential pre-trends. \looseness=-1

Complementing the analysis of valuations, we examine how the market's perceived value of dual-class shares varies with firm maturity. We find that when single-class firms younger than or equal to 12 years (``young firms'')---the sample median age for dual-class firms---announce a recapitalization, the average cumulative abnormal return (CAR) is 2.9\%.\footnote{To streamline the exposition, we refer to firms with ages less than or equal to the median as ``young'' and greater than the median as ``mature'' throughout the paper. Our empirical analysis employs alternative measures of firm maturity including a set of age indicators.} This positive market reaction early in the life cycle is consistent with the benefit of stronger founder incentives. However, when single-class firms aged greater than 12 years (``mature firms'') announce a recapitalization, the average CAR is 3.5 percentage points lower. This finding suggests that the market views dual-class recapitalizations as increasingly value-reducing as firms mature, plausibly driven by increasing private benefits extraction.\footnote{Table \ref{tab:voting_premium0} provides evidence that private benefits of control increase with firm maturity.} Analogously, we find that the average CAR for the announcement of a stock unification is 5.2 percentage points higher for mature firms compared to young firms. \looseness=-1

Crucially, our framework allows us to assess whether the dynamic effects of dual-class shares reflect treatment or selection. The model shows that if selection were to drive the dynamics, private benefit extraction should grow at a slower rate and founder effort would be \emph{smaller} in dual-class firms---the opposite of the prediction under treatment. Intuitively, when his private benefits are expected to grow slowly, the founder does not need to commit to good behavior through single-class shares, and so chooses dual-class shares.\footnote{In general, for selection to generate the transitory dual-class valuation premium, its effect must be time-varying. We address another type of selection related to departure of the founder in Section \ref{sec:q_robustness}.}\looseness=-1

We find several pieces of evidence consistent with treatment effects of dual-class shares rather than selection. First, the patent-output results described above indicate that founders of dual-class firms make greater firm-specific investments than their single-class counterparts. Second, to measure private benefits of control, we examine firms' investment decisions. The idea is that entrenched founders have stronger preferences for preserving what they have invested in than shareholders do (e.g., \citet{bertrand2003enjoying}). Consistent with this private benefit being more pronounced in mature dual-class firms than in single-class firms, we find that investment by mature dual-class firms is less sensitive to investment opportunities than that of their single-class counterparts, when firms are near a threshold to disinvest as proxied by low sales growth. Third, we find that the effects of dual-class recapitalization and unification on Tobin's $q$ and innovation are concentrated in industries where investments are more likely to be firm-specific---consistent with the core mechanism of treatment effects in the model. Fourth, the pre-trends in these outcomes are parallel between switchers and non-switchers. \looseness=-1

Given growing private benefits with maturity, founders of dual-class firms may have little incentive to unify multiple classes as their firms mature. Consequently, some academics and policymakers argue that dual-class shares should include a sunset provision---under which the passage of time or the transfer of the founder's ownership triggers unification (e.g., \citet{bebchuk2017untenable}; \citet{jackson2018}). We analyze the impact of such provisions by comparing unifying dual-class firms with and without sunset provisions. We find that unification has no discernible effect on valuation for firms with sunsets, whereas it has the negative effect described above for firms without a sunset. A plausible interpretation is that sunset provisions weaken founders' \emph{ex ante} incentives by reducing expected private benefits---which consequently dampens the negative valuation effect of unification. This highlights the tight dynamic link between private benefits, incentives, and firm value, as shown in our framework. Our analysis suggests that policymakers should weigh the potential costs of sunset provisions---weaker incentives for firm-specific investments---which have been largely overlooked in the current debate on dual-class structures. Importantly, recent research indicates that the nature of firms' investment and production has become more specific rather than redeployable across firms (e.g., \cite{ekerdt_wu2025}; \cite{barry2026mobility}). \looseness=-1

This paper contributes to the literature on the allocation of control rights. Empirical research has primarily examined the average cross-sectional relationship between voting rights, private benefits, and firm valuation (e.g., \citet{claessens2002disentangling}; \citet*{masulis2009agency}; \citet{gompers2010extreme}).\footnote{\citet{cronqvist2003agency} estimate the within-firm effect of controlling shareholders' voting rights on valuation for Swedish firms. See, e.g., \citet{adams2008one} for a review of this literature.} \citet{DEANGELO198533} pointed out that dual-class shares can incentivize managers to undertake firm-specific investment. However, to our knowledge, no prior work has characterized how the incentive effects of dual-class shares relate to the dynamics of private benefits, firm productivity, and value. \looseness=-1 

Our paper differs from this existing work in two key and related ways. First, we develop a theoretical model that integrates the incentive and private benefit effects of dual-class shares over the firm's life cycle, generating a rich set of \emph{dynamic} predictions regarding private benefits and firm productivity. Second, we provide long-run evidence consistent with these dynamic predictions by estimating the effects of dual-class recapitalization and stock unification using a difference-in-differences (DD) event-study design. This analysis is enabled by our new panel and event datasets spanning 1971--2022. These datasets on dual-class firms will be available on the authors' websites. \looseness=-1
A recent paper by \citet*{cremers2024life} documents evidence consistent with a transitory dual-class premium for IPO firms from 1980--2019 using pooled OLS (without firm fixed effects). The main empirical difference is that we estimate the effects of dual-class shares using \emph{changes} in governance structures in a DD event-study design with explicit control groups, whereas they examine the time-series evolution of single- and dual-class firm valuations since IPO. Our research design, including evidence on parallel trends, and our model jointly allow for a treatment effect interpretation of our estimates. We also offer new policy implications that are different from those of existing work, including \citet{cremers2024life}. \looseness=-1

More broadly, this paper contributes to the growing literature on the dynamic effects of corporate governance and control, including work on boards of directors (\citet*{field2013busy}); \citet{field2022}) and takeover defenses (\citet*{johnson2022lifecycle}). Relative to these papers focusing on the post-IPO evolution of firm and governance outcomes, this paper exploits governance switches, thanks to the new long-run dataset. In addition, dual-class shares are arguably among the most effective mechanisms for consolidating control rights (e.g., \citet{ruback1988}; \citet{gompers2010extreme}). Thus, dual-class structures provide a particularly powerful context in which to study the dynamic consequences of governance arrangements for firms and their controllers.\footnote{Our analysis focuses on the effects of dual-class relative to single-class shares, setting aside alternative contractual mechanisms. Implicitly, we recognize that under incomplete contracting, compensation schemes cannot fully specify managerial behavior, leaving control arrangements, such as dual-class shares, a meaningful role to play.} \looseness=-1

\section{Theoretical Framework and Predictions}

\label{sec:theory}In this section, we present a theoretical framework that motivates our empirical analysis.

\subsection{Setup}

Consider a firm run by a risk-neutral founder with an infinite horizon. At
the outset, the founder holds a fraction $\alpha \in (0,1]$ of the shares,
with the remainder owned by outside investors. At $t=0$, the founder (i)
chooses the firm's governance structure, which is publicly observable; (ii)
exerts unobservable, non-contractible firm-specific effort $x\geq 0$ at a private cost $%
\frac{1}{2}\kappa x^{2}$, which increases firm productivity and determines
its expected cash flows $y_{t}=xe^{gt}$ in each period $t>0$, where $g>0$; and (iii)
sells $ \alpha \left( 1-\Lambda \right)$ shares at the prevailing market
price, where $\Lambda \in \left( 0,1\right) $.\footnote{%
Assuming $\Lambda < 1$ implies that the founder partially internalizes outside shareholder value for $t > 0$, either because he actually sells shares, expects future liquidity needs, or anticipates that the firm may issue shares at the prevailing market price to raise capital or compensate employees.} Throughout, and under either governance structure, the
stock price at time $t$ equals the present value of all future dividends.
The time $t=0$ may be interpreted as the time of a dual-class recapitalization or stock unification, or the IPO.

We assume an agency conflict between the founder and shareholders.
Specifically, before distributing dividends, the founder diverts a fraction $%
b_{t}=\phi \left( 1-e^{-\tau t}\right) $ of cash flows as private benefits,
where $\phi \in (0,1]$ and $\tau >0$, leaving shareholders a dividend $%
\left( 1-b_{t}\right) y_{t}$. These private benefits can be understood as
the avoidance of personally costly actions that would benefit shareholders
or the pursuit of negative-NPV projects that generate perks for the founder.
The founder's utility from diversion is $\gamma b_{t}y_{t}$, with $\gamma
\in \left( \alpha \Lambda ,1\right) $. The upper bound ensures diversion is
socially wasteful, while the lower bound guarantees that diversion remains
individually rational despite the diluted value of shares that the founder keeps. The market discounts the shares he sells, $\alpha (1-\Lambda)$, by the expected diversion. The founder thus pays the agency costs upfront through lower sale proceeds.

Crucially, $b_{t}$ increases over time: the longer the founder holds office, the more
entrenched he becomes and the more he diverts (e.g., \citet*{gkl2020}), with 
$\tau >0$ governing the speed of entrenchment. Section \ref%
{sec:voting-premium} provides supporting evidence for this premise from voting premium
dynamics.

Shareholders may dismiss the founder and replace him with a new manager who
neither diverts nor adds value, so dividends equal $y_{t}${} under new
management. The governance structure determines majority voting control:
under single-class shares (one share--one vote), shareholders may remove the
founder at any time; upon dismissal, he retains his equity stake but loses
the ability to divert.\footnote{%
Alternatively, single-class shares may enable shareholders to monitor and
directly constrain diversion. Implicitly, we assume that the founder does not hold majority control under single-class shares, that is, $\alpha \Lambda < 0.5$.} Under dual-class shares, the founder holds
permanent majority control and cannot be removed. Appendix \ref%
{appendix:theory} extends the analysis to richer governance structures,
including dual-class shares with time-based sunset provisions.

\subsection{Analysis}

\subsubsection{Founder's productivity and treatment effects}

We begin by analyzing the model under the assumption that the firm's
governance structure is exogenous. Agency
costs arise in our model because the founder can extract private benefits that grow over
time. Crucially, once the founder's effort is sunk, shareholders have an
incentive to use their voting power to remove him. This \emph{ex post}
expropriation risk weakens the founder's \emph{ex ante} incentives to exert
effort---a classic holdup problem.\footnote{%
The mechanism parallels that in \citet*{burkart1997} who study holdup in
the presence of large shareholders.}

The founder's incentives to exert effort derive from two sources. First,
higher effort increases firm value and, therefore, the value of the
founder's equity stake. Second, because the founder's expected private
benefits of control rise with firm value, they too increase with effort. As
we show formally in Appendix \ref{appendix:theory}, Lemma \ref{res:effort}, the latter channel implies that
equilibrium effort is higher under dual-class than under single-class
shares, that is, $x_{\text{dual}}^{\ast }>x_{\text{single}}^{\ast }$. We provide evidence consistent with dual-class shares encouraging the founder's effort in Section \ref{sec:innovation}. The
central cost of single-class shares is thus the shareholders' inability to
commit not to replace the founder after effort has been exerted.

Let $p_{\text{dual}}^{\ast }\left( t\right) $ and $p_{\text{single}}^{\ast
}\left( t\right) $ be the equilibrium share prices at period $t\geq 0$ of a
dual-class firm and a single-class firm, respectively. The following
proposition presents our prediction on the effect of dual-class shares on firm
value over maturity (all proofs are in Appendix \ref%
{appendix:theory}).

\begin{proposition}
\label{price_dynamic}%
\hspace{-2.2 mm}%
\textit{ Suppose }$\gamma -\alpha \Lambda >\alpha \Lambda $.\textit{\ There
exist }$\underline{\phi }<1$\textit{\ and }$\overline{\phi }>0$\textit{\
where }$\underline{\phi }<$\textit{\ }$\overline{\phi }$\textit{, such that
if }$\phi \in (\underline{\phi },\overline{\phi })$\textit{\ then }$p_{\text{%
dual}}^{\ast }(t)>p_{\text{single}}^{\ast }(t)$\textit{\ if and only if }$%
t<T^{\ast }$\textit{\ for some }$T^{\ast }\in (0,\infty )$\textit{. }
\end{proposition}

Proposition \ref{price_dynamic} shows that as long as the founder places a larger weight on his ``net'' private benefit than his stake in the firm (i.e., $\gamma -\alpha \Lambda >\alpha \Lambda $),\footnote{Absent this condition, the positive effect of dual-class shares on the founder's incentives is outweighed by the diversion of private benefits, so firm value is always lower under dual-class shares than single-class shares.} and the cost of private
benefits extraction on shareholders is not too large (i.e., $\phi <\overline{%
\phi }$), firm value is higher under dual-class than single-class shares
when the firm is relatively young. Over time, however, this premium
shrinks and eventually reverses. We refer to this dynamic pattern as the 
\textbf{treatment effect} of dual-class shares on firm value: the
causal effect that would arise if governance structure were randomly assigned. Online Appendix Figure \ref{fig:prop1} visualizes the treatment effect by plotting how $p_{\text{dual}}^{\ast }\left( t\right) -p_{\text{single}}^{\ast
}\left( t\right) $ changes over time.

Intuitively, dual-class shares strengthen the founder's incentives by
assuring him that he will retain control and continue to extract private
benefits. In the early years, this effect dominates, and the firm commands a
valuation premium. Over time, however, accumulating agency costs erode the
gains from heightened effort. Provided the agency cost is not too small
(i.e., $\underline{\phi }<\phi $), a dual-class firm eventually trades at a
discount relative to an otherwise identical single-class firm, i.e., $p_{%
\text{dual}}^{\ast }(t)<p_{\text{single}}^{\ast }(t)$ for $t>T^{\ast }$. In
this way, Proposition \ref{price_dynamic} rationalizes the key empirical
findings we document below.

\paragraph{Governance changes, sunset provisions, and voting premium.}

When a single-class firm unexpectedly recapitalizes to a dual-class
structure, two opposing forces shape the market reaction. On one hand, the
founder's strengthened incentives raise firm productivity, exerting upward
pressure on the share price. On the other hand, the share price falls by the
present value of the private benefits the founder expects to extract going
forward. For young firms, where the present value of future private benefits
is relatively small, the incentive effect dominates and the abnormal return
is positive; as the firm matures and $b_{t}${} rises, the private benefit
effect increasingly dominates, turning the abnormal return negative. The
mirror image holds for an unexpected unification: the loss of the founder's
incentives depresses firm productivity, while shareholders simultaneously
gain the present value of private benefits the founder would otherwise have
extracted. For young firms the productivity loss dominates, yielding a
negative abnormal return, whereas for more mature firms the recovery of
private benefits dominates, yielding a more positive one. As we show formally in Online
Appendix \ref{appendix:theory_online}, these predictions imply that the abnormal
return decreases (increases) as firms mature following a dual-class recapitalization
(unification). This implication is consistent with both the long-run dynamic
difference-in-differences estimates for governance switchers and the short-run announcement returns evidence in
Section \ref{sec:q}.

In Online Appendix \ref{appendix:theory_online}, we also analyze dual-class shares with
time-based sunset provisions, which allow shareholders to fire the founder
only $n$ periods after the IPO. We show that the abnormal return following an unexpected unification is higher (e.g., less negative) for dual-class firms with sunset provisions than for those without,  which is consistent with the
evidence in Figure \ref{fig:appendix_event_study_sunsets}. Intuitively, founders of dual-class firms with sunset provisions exert less effort than those without such provisions. As a result, the reduction in effort---and the corresponding loss in firm value---following unification is smaller, leading to a less negative abnormal return.\footnote{This pattern is pronounced for young firms, in which the effort channel dominates and private benefit extraction remains relatively small.}

Lastly, our model enables us to study how the difference between the value
of voting and non-voting shares (referred to as the voting premium) changes
over maturity. In our framework, the difference between the value of a
voting share (which is held by the founder) and a non-voting share (held by
shareholders) is the present value of the founder's future private benefits
of control. In Online Appendix \ref%
{app:vp}, we show that the voting premium is positive and increases over
maturity, consistent with the evidence in Section \ref{sec:voting-premium}.

\subsubsection{Founder's governance choices and selection effects\label%
{sec:theory_selection}}

Our second result characterizes the founder's choice of governance structure.

\begin{proposition}
\label{optimal_gov}%
\hspace{-2.2 mm}%
\textit{ Suppose }$\gamma -\alpha \Lambda >\alpha \Lambda $.\textit{\ There
exists }$\phi ^{\ast }>\overline{\phi }$\textit{\ such that the founder
prefers dual-class shares over single-class shares at $t$ = 0 if and only if $\gamma\geq\alpha$ or}%
\begin{equation}
\phi <\phi ^{\ast }.  \label{eq:eq2}
\end{equation}%
\textit{Moreover, }$\phi ^{\ast }$\textit{\ increases in }$\gamma $\textit{\
and decreases in }$\tau $\textit{.}
\end{proposition}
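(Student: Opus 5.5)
The plan is to compute the founder's $t=0$ payoff in closed form under each governance structure, and then compare the two payoffs as functions of $\phi$. I will use the equilibrium effort levels from Lemma \ref{res:effort} and let $r>g$ denote the market discount rate used in pricing. Define the discounted integrals
\[
V=\int_0^\infty e^{(g-r)t}dt=\frac{1}{r-g},\qquad K(\tau)=\int_0^\infty (1-e^{-\tau t})e^{(g-r)t}dt=\frac{1}{r-g}-\frac{1}{r-g+\tau},
\]
so that the present value of diversion per unit of effort is $B=\phi K$ and of dividends is $D=V-\phi K>0$.

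\textbf{Single-class payoff.} Once effort is sunk and $b_t>0$, shareholders fire the founder immediately, so dividends equal $y_t$ and the price is $xV$. The founder takes the sale price as given, so his marginal return to effort comes only from his retained stake: $x_{\text{single}}^{\ast}=\alpha\Lambda V/\kappa$. His payoff is then
\[
U_s=\frac{\alpha\Lambda V}{\kappa}\left[\alpha(1-\Lambda)V+\tfrac{1}{2}\alpha\Lambda V\right].
\]

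\textbf{Dual-class payoff.} Here the marginal return to effort is $\alpha\Lambda D+\gamma B$, so $x_{\text{dual}}^{\ast}=(\alpha\Lambda D+\gamma B)/\kappa$. The price rationally anticipates diversion, $p(0)=x^{\ast}D$. Substituting, the payoff is
\[
U_d=\frac{1}{\kappa}\bigl(\alpha\Lambda V+(\gamma-\alpha\Lambda)\phi K\bigr)\Bigl[\alpha(1-\Lambda)V+\tfrac12\alpha\Lambda V+\tfrac12\phi K(\gamma+\alpha\Lambda-2\alpha)\Bigr].
\]

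\textbf{Comparison.} $U_d-U_s$ is a quadratic in $\phi$ that vanishes at $\phi=0$. Collecting terms, the linear coefficient is proportional to $\alpha V(\gamma-2\alpha\Lambda+\alpha\Lambda^2)$, which is strictly positive by the hypothesis $\gamma-\alpha\Lambda>\alpha\Lambda$. The quadratic coefficient is proportional to $(\gamma-\alpha\Lambda)(\gamma+\alpha\Lambda-2\alpha)$. Hence $U_d>U_s$ if and only if
\[
\alpha V(\gamma-2\alpha\Lambda+\alpha\Lambda^2)+\tfrac12\phi K(\gamma-\alpha\Lambda)(\gamma+\alpha\Lambda-2\alpha)>0 .
\]
This splits into two cases. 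When the curvature term is nonnegative, dual-class is preferred for every $\phi\in(0,1]$; this is the ``$\gamma\geq\alpha$'' branch of the statement, and I would recheck the exact cutoff against the paper's single-class specification. Otherwise, the inequality is equivalent to $\phi<\phi^{\ast}$ with
\[
\phi^{\ast}=\frac{2\alpha V(\gamma-2\alpha\Lambda+\alpha\Lambda^2)}{K(\gamma-\alpha\Lambda)(2\alpha-\gamma-\alpha\Lambda)} .
\]

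\textbf{Remaining steps.} To show $\phi^{\ast}>\overline{\phi}$, I would use the construction of $\overline{\phi}$ in the proof of Proposition \ref{price_dynamic}. There $\overline{\phi}$ is the cutoff below which $p_{\text{dual}}^{\ast}(0)>p_{\text{single}}^{\ast}(0)$. When the price is higher under dual-class, the founder gains on both sale proceeds and private benefits, so $U_d>U_s$; therefore $\phi<\overline{\phi}$ implies $\phi<\phi^{\ast}$, which gives $\phi^{\ast}\geq\overline{\phi}$, with strict inequality by continuity. Monotonicity in $\tau$ is immediate because $K(\tau)$ is increasing in $\tau$.

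The main obstacle is monotonicity in $\gamma$. The numerator of $\phi^{\ast}$ rises with $\gamma$, but the product $(\gamma-\alpha\Lambda)(2\alpha-\gamma-\alpha\Lambda)$ in the denominator can also rise with $\gamma$. I would first try differentiating $\log\phi^{\ast}$ and signing the derivative using $\gamma>2\alpha\Lambda$ and the restriction to the case where the curvature term is negative. If that does not close, I would go back and check whether the paper's single-class timing (firing immediately versus later) changes the form of the linear coefficient, and hence the derivative.
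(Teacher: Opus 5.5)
Your algebra is correct and follows the paper's own computation. Multiply the paper's reduced condition
\[
\frac{1}{1-\Lambda}-\frac{\Lambda}{\gamma/\alpha-\Lambda}>\Bigl(1-\frac{1}{2}\,\frac{\gamma/\alpha-\Lambda}{1-\Lambda}\Bigr)\phi\,\frac{\tau}{r-g+\tau}
\]
by $\alpha^{2}(1-\Lambda)(\gamma/\alpha-\Lambda)$ and use $K/V=\tau/(r-g+\tau)$. The result is exactly your linear inequality, and your $\phi^{\ast}$ equals the paper's, so the single-class specification needs no rechecking. The $\tau$-monotonicity and the price-dominance argument for $\phi^{\ast}\geq\overline{\phi}$ also match the paper.

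The genuine gap is the case split. Your curvature term $\gamma+\alpha\Lambda-2\alpha$ is nonnegative iff $\gamma\geq(2-\Lambda)\alpha$, which is strictly stronger than $\gamma\geq\alpha$ because $\Lambda<1$. On $\alpha\leq\gamma<(2-\Lambda)\alpha$ your argument only gives ``dual iff $\phi<\phi^{\ast}$''. The ``$\gamma\geq\alpha$'' clause needs $\phi^{\ast}>1$ there, so that every admissible $\phi\in(0,1]$ qualifies. As written, that clause is unproved, and the discrepancy you flagged has nothing to do with firing timing.

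The missing step comes from the monotonicity in $\gamma$ you left open, which is not really an obstacle. Pair the factor $\gamma-\alpha\Lambda$ with the numerator:
\[
\phi^{\ast}=\frac{2\alpha V}{K}\cdot\frac{1-\alpha\Lambda(1-\Lambda)/(\gamma-\alpha\Lambda)}{2\alpha-\alpha\Lambda-\gamma}.
\]
The numerator is positive and increasing in $\gamma$. The denominator is positive and decreasing whenever the curvature is negative. Equivalently, in your log-derivative
\[
\frac{1}{\gamma-2\alpha\Lambda+\alpha\Lambda^{2}}-\frac{1}{\gamma-\alpha\Lambda}+\frac{1}{2\alpha-\gamma-\alpha\Lambda},
\]
the first term already exceeds the second. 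At $\gamma=\alpha$ you have $\gamma-2\alpha\Lambda+\alpha\Lambda^{2}=\alpha(1-\Lambda)^{2}$ and $\gamma-\alpha\Lambda=2\alpha-\gamma-\alpha\Lambda=\alpha(1-\Lambda)$, so $\phi^{\ast}=2V/K=2(r-g+\tau)/\tau>2$. By monotonicity, $\phi^{\ast}>1\geq\phi$ on all of $[\alpha,(2-\Lambda)\alpha)$, which closes the branch.

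This is precisely the paper's route. It splits on $\gamma/\alpha\geq1$ directly and checks the reduced inequality at $\gamma/\alpha=1$, where the left side is $1$ and the right side is $\frac{\phi}{2}\frac{\tau}{r-g+\tau}<1$. It then uses that the left side rises and the right side falls in $\gamma/\alpha$.

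A smaller point: strictness of $\phi^{\ast}>\overline{\phi}$ comes from the effort gain, not continuity. At $\phi=\overline{\phi}$ prices coincide while $x^{\ast}_{\text{dual}}>x^{\ast}_{\text{single}}$. Concretely, substituting $\overline{\phi}K=V(\gamma-2\alpha\Lambda)/(\gamma-\alpha\Lambda)$ into your left-hand expression gives $\tfrac12V\gamma(\gamma-\alpha\Lambda)>0$. Since $\overline{\phi}$ can exceed $1$, do this on the affine-in-$\phi$ inequality rather than only for $\phi\in(0,1]$.
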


To understand Proposition \ref{optimal_gov}, recall that dual-class shares
enable the founder to extract private benefits of control. Were this the
only consideration, the founder would always prefer a dual-class structure.
However, the founder also sells shares to outside investors who do not value private benefits, and hence partly
internalizes how the governance structure affects firm value for shareholders. By Proposition \ref{price_dynamic}, when $\phi <
$ $\overline{\phi }$, a dual-class firm is valued with a premium in $t=0$, so
the founder faces no trade-off: dual-class shares allow him to both consume
private benefits and maximize the market value of the shares he sells. 

However, when $\phi >\overline{%
\phi }$, the agency friction is severe enough that a dual-class structure
instead carries a valuation discount, confronting the founder with a
trade-off between extracting private benefits of control and the resulting
reduction in proceeds from selling shares. Proposition \ref{optimal_gov} characterizes how this trade-off is resolved. If the agency friction is not too severe, i.e., $\overline{\phi }<\phi <\phi ^{\ast }$, the utility from private benefits dominates and the founder still prefers dual-class shares. A low deadweight loss from diversion (high $\gamma $) and slow entrenchment of the founder (low $\tau $) reinforce this choice by increasing $\phi ^{\ast }$. Conversely, if agency costs are sufficiently large, i.e., $\phi >\phi ^{\ast}$, the negative impact on sale proceeds dominates, and the founder chooses single-class shares. Proposition \ref{optimal_gov} also shows that the founder chooses single-class shares only if $\gamma <\alpha $. This condition implies that private benefits he enjoys under dual-class shares do not compensate him for the losses on his entire initial stake. 

In Appendix \ref{appendix:theory} we show that if shareholders were to make the governance
decision, Proposition \ref{optimal_gov} continues to hold, with the threshold
$\phi^{\ast}$ replaced by $\overline{\phi}$. Since
$\overline{\phi} < \phi^{\ast}$, shareholders are less likely to adopt dual-class shares: they do not internalize the consumption of
private benefits, but only the positive effect on the founder's incentives.

Proposition \ref{optimal_gov} highlights that the choice of the firm's
governance structure is likely non-random and driven by factors that might
also determine how firm value evolves over time. Crucially, the model allows us to analyze whether the
dynamic pattern in Proposition \ref%
{price_dynamic} can also be attributed to a \textbf{selection effect}. The next, final result shows that in general, a selection effect alone can generate the
dynamic pattern but only under specific
conditions.\footnote{We focus attention on one-dimensional firm-level heterogeneity.}$^,$\footnote{Proposition \ref{selection_experiment} holds regardless of whether the founder
or shareholders choose the governance structure.}

\begin{proposition}
\label{selection_experiment}%
\hspace{-2.2 mm}%
\textit{ A selection effect can generate the dynamic pattern in Proposition \ref{price_dynamic} only if a selected dual-class firm has lower }$\tau $\textit{, the rate at which its founder becomes entrenched over time. In this case, its founder (i) extracts smaller private benefits, especially when the firm is young, and (ii) exerts less effort than an otherwise identical founder of a selected single-class firm.}
\end{proposition}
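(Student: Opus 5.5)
Throughout, I would work with the closed forms behind Lemma \ref{res:effort} and Proposition \ref{price_dynamic}. Let $r>g$ be the discount rate. Under dual-class shares the founder's first-order condition gives $x_{\text{dual}}^{\ast}=\frac{1}{\kappa}\bigl[\alpha\Lambda A+(\gamma-\alpha\Lambda)\phi B(\tau)\bigr]$. Here $A=\int_0^\infty e^{(g-r)s}ds$ and $B(\tau)=\int_0^\infty (1-e^{-\tau s})e^{(g-r)s}ds$. Under single-class shares the founder is removed once effort is sunk, so $x_{\text{single}}^{\ast}=\alpha\Lambda A/\kappa$.

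Prices are then $p_{\text{single}}^{\ast}(t)=x\,e^{gt}A$ and $p_{\text{dual}}^{\ast}(t)=x\,e^{gt}\int_0^\infty\bigl(1-\phi+\phi e^{-\tau(t+s)}\bigr)e^{(g-r)s}ds$. The second expression decreases in $t$ toward $x\,e^{gt}(1-\phi)A$.

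I would formalize a \emph{selection effect} as follows. Firms differ in one primitive $\theta\in\{\phi,\gamma,\tau,\kappa,\alpha,\Lambda\}$, but governance has no causal effect, so both firms are valued under a common governance structure. The selected dual-class firm has $\theta_d$ and the selected single-class firm has $\theta_s$, with the direction of $\theta_d$ versus $\theta_s$ dictated by Proposition \ref{optimal_gov} (or by its shareholder-choice analogue with threshold $\overline{\phi}$). The question is when the gap $\Delta(t)=p(t;\theta_d)-p(t;\theta_s)$ is positive if and only if $t<T^{\ast}$ for some $T^{\ast}\in(0,\infty)$.

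The key structural step is to factor $p(t;\theta)=e^{gt}\,x^{\ast}(\theta)\,D(t;\theta)$. The sign of $\Delta(t)$ is the sign of $\frac{x^{\ast}(\theta_d)D(t;\theta_d)}{x^{\ast}(\theta_s)D(t;\theta_s)}-1$. Parameters that enter only through $x^{\ast}$, namely $\kappa$, $\gamma$, $\alpha$ and $\Lambda$, leave $D$ unchanged. For these the ratio is constant in $t$, so $\Delta$ has a constant sign and no crossing is possible.

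For $\phi$ the ratio does vary with $t$. Proposition \ref{optimal_gov} forces $\phi_d<\phi_s$, so I would check the sign of the ratio at its endpoints.
\begin{itemize}
\item At $t=\infty$ the ratio is proportional to $x^{\ast}(\phi)(1-\phi)$.
\item At $t=0$ the ratio is proportional to $x^{\ast}(\phi)\bigl(A-\phi B(\tau)\bigr)$.
\end{itemize}
A "premium then discount" pattern requires the ratio to exceed one at $t=0$ and fall below one at $t=\infty$. Since $A-\phi B$ shrinks in $\phi$ more slowly than $A(1-\phi)$, the lower-$\phi$ firm's relative standing improves as $t$ grows, which is the opposite ordering. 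This step is the main obstacle: the effort term $x^{\ast}$ is increasing in $\phi$ (because $\gamma>\alpha\Lambda$), so the products are not monotone. I would first try to show that the ratio $\frac{A-\phi B}{A(1-\phi)}$ is increasing in $\phi$. That property alone rules out the premium-then-discount ordering for $\phi_d<\phi_s$, whatever the $x^{\ast}$ factors are.

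That leaves $\tau$. Proposition \ref{optimal_gov} says $\phi^{\ast}$ decreases in $\tau$, so selection gives $\tau_d<\tau_s$. I would verify the pattern directly.
\begin{itemize}
\item $B(\tau)$ is increasing in $\tau$, so $x^{\ast}(\tau_d)<x^{\ast}(\tau_s)$. This is claim (ii).
\item $b_t=\phi(1-e^{-\tau t})$ is increasing in $\tau$ with a gap that vanishes as $t\to\infty$. So diversion is smaller for the dual firm, especially early. This is claim (i).
\item Hence $D(t;\tau_d)/D(t;\tau_s)$ is largest at $t=0$ and decreases monotonically to $1$. Monotonicity follows from a log-derivative comparison of $\int(1-\phi+\phi e^{-\tau(t+s)})e^{(g-r)s}ds$ across $\tau$.
\end{itemize}
It follows that the overall ratio starts above one when the diversion advantage outweighs the effort shortfall, and ends at $x^{\ast}(\tau_d)/x^{\ast}(\tau_s)<1$. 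This yields a unique crossing $T^{\ast}\in(0,\infty)$.

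Finally, I would note that none of these steps depends on who chooses governance, since only the direction of selection matters.
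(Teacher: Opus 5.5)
Your architecture matches the paper's: one-dimensional heterogeneity, both firms valued under a common counterfactual governance structure, and the direction of selection read off $\phi^{\ast}$ (or $\overline{\phi}$).

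Your $\phi$ step is a valid and somewhat cleaner alternative to the paper's. Call your ratio $R(t)$. Because effort is fixed at $t=0$, $x^{\ast}$ cancels in $R(\infty)/R(0)=\frac{(1-\phi_d)(A-\phi_sB)}{(1-\phi_s)(A-\phi_dB)}$. Since $\frac{d}{d\phi}\frac{A-\phi B}{1-\phi}=\frac{A-B}{(1-\phi)^2}>0$, you get $R(\infty)>R(0)$, which already rules out a premium followed by a discount. The paper instead solves the inequality exactly and finds that the low-$\phi$ firm can lead only for large $t$.

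\textbf{First gap: $g$ and $r$ are missing.} Your parameter list omits $g$ and $r$, which the paper includes. Selection along them is possible, since $\phi^{\ast}$ depends on $r-g$. Your factorization argument does not dispose of them, because they shift $e^{gt}$ and $D$, not only $x^{\ast}$. The paper handles them by monotonicity: $p^{\ast}_{\text{single}}(t)$ and $p^{\ast}_{\text{dual}}(t)$ are increasing in $g$ and decreasing in $r$ at every $t$, so the sign of the price gap cannot change.

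\textbf{Second gap: the $\tau$ step.} The ratio $D(t;\tau_d)/D(t;\tau_s)$ is \emph{not} maximal at $t=0$, and the log-derivative comparison you propose gives the opposite sign. Let $c=r-g$, $K=(1-\phi)/c$, $a_i=\phi/(c+\tau_i)$, and write $D_i=D(t;\tau_i)=K+a_ie^{-\tau_it}$. The derivative of $\ln(D_d/D_s)$ at $t=0$ equals
\[
\frac{\tau_s\phi}{K(c+\tau_s)+\phi}-\frac{\tau_d\phi}{K(c+\tau_d)+\phi}>0,
\]
since $\tau\phi/(K(c+\tau)+\phi)$ increases in $\tau$. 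Intuitively, the fast-entrenching firm's dividends erode faster at first. In the paper's own example ($c=0.01$, $\phi=0.85$, $\tau_d=0.03$, $\tau_s=0.3$), the ratio rises from about $2.04$ at $t=0$ to about $2.13$ at $t=5$.

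Your conclusion can be rescued by a different argument. The expression
\[
e^{\tau_dt}\bigl(D_d'D_s-D_dD_s'\bigr)=-\tau_da_dK+\tau_sa_sKe^{-(\tau_s-\tau_d)t}+(\tau_s-\tau_d)a_da_se^{-\tau_st}
\]
is strictly decreasing in $t$, positive at $t=0$, and negative in the limit when $\phi<1$. So the ratio, and hence $R(t)$, is single-peaked. Then $R(0)>1>R(\infty)=x^{\ast}(\tau_d)/x^{\ast}(\tau_s)$ yields exactly one crossing.

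You also never show that the $\tau$ channel is non-vacuous. You need parameters that satisfy both the selection condition $\phi^{\ast}(\tau_s)<\phi<\phi^{\ast}(\tau_d)$ and $R(0)>1$; the paper supplies these with an explicit numerical example. Strictly, the ``only if'' claim and claims (i)--(ii), which you handle as the paper does, need none of this crossing analysis.
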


Note these predictions for the founder's private benefits extraction and effort under the selection are the opposite of those under a treatment effect above. When the smaller private benefits effect dominates for young firms (implying a higher firm value for dual-class firms), while the less effort effect dominates for mature firms (a lower firm value for dual-class firms), selection alone could generate the transitory dual-class valuation premium. Appendix \ref{appendix:theory} shows that such cases exist. Intuitively, for a selected dual-class firm with lower $\tau $, private benefits are smaller early on but marginal increases over time are larger when it is mature, contributing to a relative decline in value. Once the founder is fully entrenched in the long run, i.e., $b_t\to \phi$, the negative effect of smaller effort on the dual-class firm dominates.

In contrast, when selection is due to other parameters than $\tau $, the selection
effect is either (i) constant over time---i.e., the value of a selected
single-class firm is higher or lower than that of a selected
dual-class firm throughout; or (ii) the opposite from Proposition \ref{price_dynamic}---
i.e., dual-class firms' value is higher than single-class firms' when they
are mature, as opposed to when they are young (see the proof of Proposition \ref{selection_experiment} in Appendix \ref%
{appendix:theory}).

\subsection{Mapping to reduced-form analysis.}
\label{sec:theory_mapping}
The model posits that dual-class shares allocate control to the founder whose firm-specific investment is crucial for the firm's value. The key mechanism is that dual-class structures protect the founder from losing private benefits \emph{ex post}, providing him with strong incentives to exert firm-specific effort \emph{ex ante}. This model has two major predictions. First, young dual-class firms have higher value relative to otherwise identical single-class firms but this premium shrinks as they mature (Proposition \ref{price_dynamic}). We test this implication in Section \ref{sec:q}. Second, if this transitory dual-class premium is due to treatment (selection), then both the incentives for effort and private benefits are greater for dual-class (single-class) firms. In addition, under treatment (selection), the difference in private benefits between dual-class and single-class firms grows (shrinks) as firms mature (Proposition \ref{selection_experiment}). We test these implications in Sections \ref{sec:innovation} (for effort) and \ref{sec:sesitivity} (private benefits). Section \ref{sec:redeployability} tests a core mechanism of the model by examining the heterogeneity by the specificity of investments. Finally, Section \ref{sec:voting-premium} provides evidence for the model premise that private benefits increase over firm maturity.
\vspace{-2mm}
\section{Data Construction and Descriptive Evidence} \label{sec:data}

\subsection{Constructing panel of dual-class firms including governance changes} \label{subsec:dcfirms}

We construct a panel dataset of dual-class firms in the United States from 1971 to 2022. We hand-collect information on dual-class firms drawing from several sources including \emph{Moody's Manuals}, the SEC's EDGAR database, and news articles, as well as merging to CRSP--Compustat data. For the nine-year period 1994--2002, we build upon data from \cite{gompers2010extreme}\footnote{We thank Andrew Metrick for making the dataset on dual-class firms available on his website.} but collect further information necessary for our key empirical analysis such as difference-in-differences event studies. The resulting dataset represents (i) the longest panel on US dual-class firms, covering more than a thousand unique dual-class firms over a half-century and to our knowledge, (ii) the most extensive dataset on both dual-class recapitalization and unification events. We present several highlights of the dataset crucial for the empirical analysis below. Online Appendix \ref{appendix:data} provides details of the data construction.

First, key variables in our dataset include voting rights, such as votes per share and rights to elect directors, and the number of shares outstanding by share class. Importantly, it provides comprehensive history of companies that we identify have ever been dual-class firms by collecting information on the firm's first dual-class year through either IPO or recapitalization, as well as the firm's last dual-class year through either delisting or unification.

Second, and related, for our difference-in-differences analyses we carefully verify whether and when a dual-class firm has unified its share classes, or a single-class firm has recapitalized with dual-class shares.\footnote{For both recapitalizations and unifications, firms go through standard processes to change their governance structure. Recapitalizations must be approved by shareholders, typically after the board of directors approves such a change. Similarly, unifications are typically approved by shareholders across both stock classes.} In the years that the SEC's EDGAR database is available (1994 to 2022), we collect this information from 10-K filings, and in earlier years we use annual \emph{Moody's Manuals}, which list all equity classes of a firm and usually discuss recapitalization or unification, if it happened. We also collect the exact dates of these switches when available, as reported in the 10-K or \emph{Moody's}, and verify ambiguous cases by searching for online company histories and news articles that document the governance changes. Finally, we cross-reference these switches using CRSP data on the firm's shares outstanding by class and delisting codes. 

Third, to study short-term stock market reactions to dual-class recapitalizations, we collect the dates of firms' first public announcements in our own news searches using Factiva and LexisNexis. When our own search does not yield news articles on recapitalizations, we supplement with information from \cite{partch1987creation} and \cite{jarrell1988dual} (covering 1971--84 and 1976--87, respectively).\footnote{For cases that overlap in the two papers, we prioritize event dates from \cite{jarrell1988dual}, who verify and update some dates from \cite{partch1987creation}.} Importantly, we exclude events that are confounded by announcements of other major corporate events, such as mergers and acquisitions, spinoffs, and emergence from bankruptcy. For unification announcements, on which we are not aware of existing US data,\footnote{\cite{ang1989dual} examine dual-class recapitalization and unification cases for UK firms in daily and monthly return event studies, and show positive effects of the former and mixed effects of the latter on stock price.} we collect our own data by examining whether each dual-class firm's terminal year is due to a share unification using the sources mentioned above. We then search for news articles that publicly announce these unifications for the first time in major news outlets. \looseness=-1

\subsection{Additional data sources} \label{sec:addl_data}
\textbf{Patents.} A key outcome variable of interest in our analyses is patent citations, which capture innovative output as a proxy for the effort that a firm's founder exerts. We combine two patent datasets: data compiled by \cite{kogan2017technological}, and data directly from the US Patent Office (USPTO) combined with a patent-firm linkage created by \cite{ma2025obsolescence}. We refer to these datasets as KPSS and SM, respectively.\footnote{We thank Noah Stoffman for making the KPSS patent dataset available, and Song Ma for sharing his patent-firm linkages.} We aggregate the patent-permno-level KPSS data at the firm (i.e., gvkey) level, combining patents across multiple permnos in a given year. However, KPSS is disproportionately missing dual-class firms, as the authors drop patents that match with multiple permnos. Thus, we complement with the SM data between 1975 and 2020 (the years in which it is most complete) as follows: for firms recorded as dual-class at any point in our data, if KPSS does not record patents for a given firm-year, we fill in patent information using the SM-USPTO data. This combined data improves our measurement of patenting and citations for dual-class firms. In our empirical analysis, we use a forward-looking measure that combines the number of citations one, two, and three years in the future as an outcome variable (e.g., \citet{bernstein2015innovation}).

\textbf{Ownership structure.}
To calculate the Shapley value of ownership structure---a measure of the control-contest probability (e.g., \cite{rydqvist1987pricing}) and an important variable for analysis of the voting premium in Section \ref{sec:voting-premium}---we construct ownership variables for corporate insiders and institutional investors. We calculate ownership by corporate insiders, defined as the firm's executives and directors (e.g., \cite{fabisik2021}), using the SEC's Form 3, 4, and 5 as provided by Refinitiv. (Form 3 is the initial filing of ownership, Form 4 reports ownership changes, and Form 5 is an annual summary of ownership.) We use the total shares held by an insider following the last transaction to capture his or her latest holdings in a given year. The resulting insider ownership data span from 1988--2022. We forward- and backward-fill an insider's holdings for a given firm if a firm-insider-year's ownership is missing.\footnote{Form 4 is only filed if an insider transaction takes place, and Form 5 is generally sparsely populated (i.e., an insider might not have a Form 5 at the end of every year). We stop the backward-filling in 1980 when the institutional ownership data begin.} We calculate institutional ownership for a given firm-institution using the SEC's Form 13F provided by Refinitiv and WRDS following \cite{lewellen2022institution}. Form 13F is a mandatory report filed by institutional investors with at least \$100 million worth of assets under management. The dataset spans from 1980--2022 and records the total number of shares held by all reporting institutional investors in a given firm. We focus on firm-institutions with at least 5\% of a firm's equity (``block'' institutional ownership), which likely have a meaningful influence on the firm.

\subsection{Sample selection} \label{sec:sample}
We merge the panel and event datasets on dual-class firms with CRSP--Compustat. We require that firm-year observations have the following variables constructed: Tobin's \textit{q}, book assets, market leverage, research and development (R\&D) expenses scaled by lagged book assets, asset tangibility, return on assets (ROA), payout ratio, sales growth rate, and SIC codes. We impute missing values of R\&D to zero (for a similar adjustment see \citet{brav2018how}). A key variable for our analysis is firm age. Similar to \citet{pastor2003}, we define the firm's birth (hence age is zero) as public firm as the first year in which the firm appears in Compustat  with a valid share price. Appendix \ref{appendix:variables} shows the definitions of the variables. 

We exclude firms in the financial (SIC 6000-6999), utilities (SIC 4900-4999), and unclassified (SIC 9900-9999) industries. To mitigate the influence of outliers, we exclude firm-years with book assets less than \$20 million in 2000 constant dollars (adjusted using CPI) and Winsorize potentially unbounded variables at the 2$^{nd}$ and 98$^{th}$ percentiles. We obtain quantitatively similar results by Winsorizing at an alternative level such as 1\%. These sample selection criteria lead to a final sample of 9,950 dual-class firm-years across 1,151 unique firms from 1971 to 2022. By adding 150,752 single-class firm-years, we have 160,702 firm-years in total.

Figure \ref{fig:DC_counts} shows the number of dual-class firm-years with the aforementioned variables from 1971 through 2022, in comparison with the Compustat universe. The fraction of firms with dual-class shares ranged between 2.8\% and 3.4\% before the early 1980s, increased to 7.2\% to 7.4\% in the early 1990s, and has stayed between 6.0\% and 8.0\% since then. The rapid increase in the number of dual-class firms during the 1980s reflects many firms adopting dual-class shares in the period of heightened hostile takeover activities (e.g., \cite{jarrell1988dual}). Section \ref{sec:event_study_add} exploits these dual-class recapitalizations in the 1980s, along with different rules for listing dual-class shares between exchanges, to bolster our main DD design.

\subsection{Descriptive evidence: Valuation of dual-class vs. single-class firms} \label{sec:Firm-maturity}

Table \ref{tab:descriptive_stats} presents descriptive statistics for the samples of dual-class and single-class firm-years in the full panel. Dual-class firms tend to be larger, older, and more highly levered, and to have higher ROA than single-class firms. They also have lower capital expenditures and R\&D expenses than single-class firms. 

The table shows that on average, Tobin's \textit{q} for dual-class firms is 0.28 higher than for single-class firms, a statistically significant difference. In Appendix \ref{appendix:average}, we further examine this relation in a regression framework. Appendix Table \ref{tab:appendix_c1} shows that valuations are significantly higher for dual-class firms than otherwise comparable single-class firms in the cross-section, controlling for firm-level covariates and industry-by-year fixed effects. We then examine how this cross-sectional valuation gap evolves over time since firms' IPOs. Appendix Figure \ref{fig:appendix_c1_q_dynamics} shows that the difference in Tobin's $q$ is significantly positive for most of the first decade post-IPO but declines through age 50, when the difference becomes significantly negative. This initial, descriptive evidence is consistent with a ``transitory dual-class premium,'' as well as evidence in \citet{cremers2024life}, albeit on an extended sample. Nonetheless, it is likely affected by both observable and unobservable heterogeneities between dual-class and single-class firms that may vary over time. To more credibly test the model implication, we turn to the difference-in-differences analysis of changing governance structures in Section \ref{sec:dynamics}. \looseness=-1
 B, respectively.

\section{The Dynamic Effects of Dual-Class Shares} \label{sec:dynamics}

\subsection{Effects on valuation} \label{sec:q}
\subsubsection{Research design: Difference-in-differences event study} \label{sec:event_study}
Our main empirical analysis uses a stacked dynamic difference-in-differences (DD) event-study design (e.g., \citet{gormley2016managers}; \citet{cengiz2019minwage}). For recapitalization, the treated group includes firms that recapitalize with dual-class shares, whereas the baseline control group includes firms that remain single-class. For unification, the treated group includes firms that unify share classes, whereas the baseline control group includes firms that remain dual-class. For each event type, we stack the treated and control firms by event year. We construct alternative, matched control groups using pre-event observable firm characteristics below. The key identifying assumption for a treatment effect interpretation of the DD estimates is parallel trends---the outcomes of treated and control firms would have followed parallel paths in the absence of a treatment---which we test using pre-event observations. 

We estimate the following difference-in-differences equation for dual-class recapitalization and stock unification: \looseness=-1
\begin{align} \label{eq:dd}
    \textit{q}_{ijth} = \alpha_{ih} + \alpha_{jth} + \sum_{\substack{\tau=-5 \\ \tau \neq -1}}^{11} \lambda_{\tau} d[t+\tau]_{ith} + \sum_{\substack{\tau=-5 \\ \tau \neq -1}}^{11} \delta_{\tau} \textit{Treat}_{ih} \times d[t+\tau]_{ith} + \gamma'\textit{X}_{it} + \epsilon_{ijth},
\end{align}
where $\textit{q}_{ijth}$ is Tobin's \textit{q} for firm $i$ in three-digit SIC industry $j$ in year $t$ of event-year cohort $h$; $\alpha_{ih}$ and $\alpha_{jth}$ represent firm $i$ by cohort $h$ and industry $j$ by year $t$ by cohort $h$ fixed effects; $\textit{Treat}_{ih}$ is an indicator variable equal to one if firm $i$ experiences a recapitalization (or unification), and zero if remains single-class (dual-class) in event-year cohort $h$; $d[t+\tau]_{ith}$ is an indicator variable equal to one if year $t$ is $|\tau|$ years (-5 $\leq \tau \leq$ 11) before or after a recapitalization (unification) for firm $i$ in cohort $h$, and zero otherwise. Year $t-1$ is the baseline year and thus $d[t-1]_{ith}$ is equal to zero by construction; $\textit{X}_{it}$ is a vector of control variables in Eq. (\ref{eq:q}); and $\epsilon_{ijth}$ represents random errors clustered at the firm level. We require that both the treated and control firms exist over the [$t-2$, $t+2$] window, and exclude firm observations once they make another governance switch after year $t$. The resulting DD sample contains 186 recapitalizations and 224 unifications, representing a substantial share of 1,151 unique dual-class firms.  \looseness=-1

\subsubsection{Dynamic effects on Tobin's $q$} \label{sec:q_dynamics}
Figure \ref{fig:event_study_q} presents the estimation results for recapitalization (Panel A) and unification (Panel B). First, both panels show no evidence of pre-trends. All estimates on $\textit{Treat} \times d[t+\tau]$ for the pre-event period (i.e., $\tau$ = -5 through -2) are statistically insignificant and economically small. This result supports the key identifying assumption for a treatment effect interpretation of the event study estimates, and is inconsistent with alternative interpretations including anticipated switches reflected in $q$ and changes in $q$ affecting the switching decision---an instance of reverse causality. The lack of pre-trends is also inconsistent with other confounding events (such as restructuring) occurring before the recapitalization or unification. \looseness=-1

Estimates in Panel A show that firms that recapitalize from single-class to dual-class shares experience an immediate increase in Tobin's $q$ by 0.21 in that year, relative to a control group of firms that remain single-class ($p$-value = 0.018). The treated firms' $q$ remains elevated until year $t+5$ but declines from year $t+6$. By the last two years of the event window, the declines from year $t+5$ are statistically significant: e.g., $d[t+11]$ -- $d[t+5]$ = -0.366 ($p$-value = 0.004). These post-recapitalization dynamics of valuation mimic those in Proposition \ref{price_dynamic}---in which the stronger incentive effect of dual-class shares dominates in the early years, while accumulating agency costs erode this premium over time. \looseness=-1

Turning to unifications in Panel B, the estimated dynamics are essentially a mirror image of the corresponding estimates for recapitalizations. Firms that unify share classes experience a \emph{decrease} in Tobin's $q$ relative to a control group of firms that remain dual-class. The unifying firms' $q$ remains lower than the control firms' but the estimated treatment effects become closer to zero after year $t+5$. Again, these dynamic patterns are consistent with Proposition \ref{price_dynamic}. The somewhat lower precision of the estimates for unifications, compared to recapitalizations, is in part due to the much smaller number of firms in the control group---firms that continue to be dual-class. \looseness=-1

\subsubsection{Robustness tests} \label{sec:q_robustness}
Table \ref{tab:q_robustness} presents the DD estimates, summarizing the event-study results in fewer coefficients. Panels A and B present the pre- and post-treatment coefficients for recapitalization and unification, respectively. Column (1) in each panel shows the baseline estimates. Each of columns (2) through (8) uses an identical specification to column (1) except for one change to the baseline. Column (2) employs a matching approach to address a potential concern with the baseline event study that switching firms might have different pre-treatment characteristics than the control groups, which may bias the estimated treatment effects (e.g., \cite{Abadie2005}). Online Appendix \ref{appendix:matching} details the matching approach. Online Appendix Tables \ref{tab:recap_balance_q}-\ref{tab:unif_balance_cites} show that none of the average firm characteristics is significantly different between matched treated and control groups. Column (2) shows similar results to the baseline, suggesting that differences in observed firm characteristics do not drive our baseline results. Column (3) excludes dual-class firms that are equity carveouts identified using data from \citet{field2022} and \citet{Allen1998carveouts}.\footnote{We thank Michelle Lowry and John McConnell for sharing and making the datasets on equity carveouts available.} These carveout firms differ from the dual-class firms our model posits in that carveouts are typically controlled by other firms, instead of founders or their teams. For both types of events, the results are virtually identical after excluding carveout dual-class firms. 

In column (4), we consider an alternative assumption regarding the relative prices of superior and inferior share classes. Our baseline measure of $q$ uses the observed prices for both classes when available and otherwise assumes a 5\% premium for superior shares relative to inferior shares.\footnote{A 5\% premium is consistent with the average superior-class premium observed in our data (see Section \ref{sec:voting-premium}).} The column instead assumes equal prices across classes, and shows that our main results are robust to this alternative assumption. 

In column (5), we perform the event-study analyses using the \citet{sunabraham2021eventstudy} estimator. This approach avoids issues with two-way fixed effect models, such as not aggregating the treatment effects with appropriate weights. In our setting, two facts already mitigate this issue: (i) there are never-treated firms in the control group and (ii) our stacked-DD design aligns event-year cohorts. Nonetheless, we find our DD analysis is robust to using this alternative estimator. We also perform the DD regression weighted by firm size, proxied by the log of total assets, and find our results are largely unchanged in column (6).

In column (7) in Panel B, we address a selection concern specific to our event study of unification---firms might unify share classes when the founder (or his family member as heir) leaves the firm. In this case, the changes in $q$ may reflect the effect of the founder's departure, in addition to the effect of stock unification. We find that only 15\% of the unification cases we study have a founder or controlling family member stepping down from top management between two years before and after the unification.\footnote{We identify firms controlled by founders or their family heirs, and when they step down using the following data sources: family firm data from \cite{anderson2009} and \cite{anderson2012}, news sources, SEC filings, and officer and board member data from \cite{gkl2020}.} We then re-estimate the DD equation on a sub-sample that excludes these potentially confounded events. The result shows that the effect of unification on Tobin's $q$ is consistent with what we find on the full event sample. Finally, we Winsorize potentially unbounded variables at the $1^{st}$ and $99^{th}$ percentiles and find similar results with the baseline (column (9)).

Overall, the DD event-study estimates show dynamic effects of recapitalization and unification on valuations consistent with the model predictions. The lack of pre-trends provides credence to a treatment effect interpretation of the changes in Tobin's $q$ after firms making changes to voting structure. 

\subsubsection{Additional analyses: NYSE's one share-one vote policy and IPO sample} \label{sec:event_study_add}

We provide two additional pieces of evidence to bolster our main difference-in-differences analysis. First,  we exploit regulatory differences across stock exchanges on adopting dual-class shares. The New York Stock Exchange (NYSE) maintained its one share--one vote listing policy until 1986 (since 1926), while the American Stock Exchange (AMEX) and NASDAQ did allow listing with dual-class shares and such recapitalizations with little to no restriction throughout (\cite{Shorter2021DualClass}). Importantly, rising external governance forces during the early-1980s, especially the increasing threat of takeovers, created strong incentives for some firms to recapitalize.\footnote{The strengthening of external governance and the resulting increase in recapitalization can be interpreted through the lens of our model as follows: suppose that before the early 1980s, when external governance forces were weak, single-class shares allowed founders to extract \emph{some} private benefits---making them, in effect, dual-class shares with a finite sunset horizon $n < \infty$. In this case, the strengthening of external governance amounts to $n\to0$ for single-class firms. Since founders prefer a higher $n$---that is, stronger control rights---this shift would push some single-class firms to adopt dual-class shares (see Online Appendix~\ref{app:effect_n} for the conditions).} Under the assumption that this external governance shift influences firms' decisions to recapitalize similarly across the exchanges, ceteris paribus, non-recapitalizing firms on the NYSE (driven by a regulation) provide a plausible counterfactual for recapitalizing firms on the AMEX and Nasdaq in this period.

Accordingly, we define the treated group in this analysis as firms listed on the AMEX and Nasdaq that recapitalized from 1980 to September 1986, when the NYSE began allowing dual-class recapitalization. We construct a matched control group drawing from firms listed on the NYSE that remain single-class using an approach similar to that used in the main matched-sample DD analysis. A key assumption is that the outcomes of the matched NYSE firms capture the counterfactual outcomes of comparable AMEX/Nasdaq recapitalizing firms. Thus, to the extent that observable firm characteristics influence firms' outcomes in response to the external governance shocks, it is important to form the counterfactual with matched characteristics. Online Appendix Table \ref{tab:nyse_balance} shows that the matched treated and control firms are well-balanced on observable characteristics. We stop the analysis sample in September 1986, when the NYSE began allowing dual-class recapitalization, and thus non-recapitalizing firms on that exchange do not provide a plausive counterfactual for recapitalizing firms on the other exchanges anymore.

Given this short time frame, we estimate an adapted version of Eq. (\ref{eq:dd}) that uses quarterly data instead of annual. In addition, the resulting smaller event sample (n = 22) forces us to use less granular fixed effects---namely firm, matched pair, and year-quarter fixed effects.\footnote{We cluster standard errors at the matched pair level (\cite{Abadie_Spiess2022}). We address potential small sample bias from having 22 clusters by computing $p$-values using the wild cluster bootstrap (\cite{cameron2008bootstrap}). Online Appendix Table \ref{tab:event_study_nyse_bootstrap} shows that $p$-values obtained from standard clustering adjustments are very similar.}

Table \ref{tab:event_study_nyse} reports the estimation results. Prior to recapitalization, treated and control firms exhibit parallel pre-trends. Post-recapitalization, firms that recapitalize on the AMEX or Nasdaq exhibit an increase in Tobin's $q$ relative to firms that do not recapitalize but on the NYSE. By five to 11 quarters post-recapitalization, the increases in Tobin's $q$ are significant with the magnitudes being around 0.28 ($p$-value = 0.047 in column (2) with matched pair fixed effects). These magnitudes are similar to the short-run magnitudes from the baseline DD analysis of recpitalization, averaging 0.18 between years $t+1$ and $t+3$. The similar economic magnitudes lend credibility to our estimates. 

The second piece of evidence is motivated by our model, Proposition \ref{price_dynamic} in particular, which predicts related within-firm dynamics post-IPO---the valuation of dual-class firms declines faster over time relative to that of single-class firms. In Table \ref{tab:ipo_dynamics}, column (1) we perform a within-firm analysis of Tobin's $q$ after IPOs. We find that post-IPO, Tobin's $q$ for dual-class firms declines significantly faster than that for single-class firms, controlling for firm and industry-by-year-by-IPO cohort fixed effects and firm-level controls. While the nature of this test does not allow for identifying a treatment effect of dual-class shares, it nonetheless provides evidence consistent with the model prediction and the difference-in-differences event study estimates of governance switches.

\subsubsection{Market reactions to dual-class recapitalizations and stock unifications} \label{sec:market-reactions}

As a complementary approach to testing the model prediction for dynamics of the dual-class premium, we, in this section, examine how the market value of equity changes in the short run in response to dual-class recapitalizations and stock unifications. By using short-run stock returns, this approach sidesteps potential measurement issues of Tobin's $q$ related to, e.g., the replacement cost of assets (see, e.g., \citet{dybvig2015tobin} and \citet{bartlett2020misuse}). See Section \ref{subsec:dcfirms} for the description of our collection of announcement dates for these events. \looseness=-1

For the sample of recapitalization and unification announcements, we compute excess daily stock returns using a market-adjusted-return model: 
${\epsilon}_{it} =  \;R_{it} - R_{mt},$
where ${\epsilon}_{it}$ is the rate of abnormal return and ${R}_{it}$ is the rate of return for firm-stock $i$, and ${R}_{mt}$ is the rate of return for the market portfolio, proxied by \emph{vwretd} from CRSP, on day $t$. We use cumulative abnormal returns (CARs) over the [--3, +3] window in the analysis, and find similar results using alternative event windows [--3, +2] and [--2, +2] (see Online Appendix Table \ref{tab:recaps_and_unifications_robust}). To increase the sample size, for this analysis only we include firms in the financial, utilities, and unclassified sectors. Requiring the CAR provides a sample of 116 dual-class recapitalizations and 61 unifications announced between 1976 and 2018. \looseness=-1

We estimate a regression in which the CAR is the dependent variable and the independent variable is a $\mathbb{1}[\text{age $>$ 12}]$ indicator. The key identifying assumption is that recapitalizing or unifying firms are otherwise comparable with each other. The assumption includes, for example, that the market had similar expectations about the probability of recapitalization or unification before the announcement. Our effort to collect the earliest public announcement of a given event in major news outlets supports this assumption.

Table \ref{tab:recaps_and_unifications} presents the estimation results. The positive regression constant in column (1), 0.029 (\textit{p}-value = 0.012), implies a positive value effect of dual-class recapitalizations for young firms. This positive effect is consistent with the incentive role of dual-class shares outweighing private benefits of control early in the firm's life. However, the negative coefficient on $\mathbb{1}[\text{age $>$ 12}]$, --0.035 (\textit{p}-value = 0.017), shows that for mature firms, the value effect of recapitalization is lower by 3.5 percentage points. This result is consistent with the model implication that a dual-class recapitalization has an increasingly negative effect on firm value over maturity---due to increasing private benefits extraction---as well as the dynamic DD estimates in the previous section. The heterogeneous results across firm maturity potentially explain the insignificant average effect of the recapitalization announcement shown in previous research (e.g., \cite{partch1987creation}). \looseness=-1 

Next, we examine the market's reaction to the announcement of dual-class stock unifications. The positive coefficient on $\mathbb{1}[\text{age $>$ 12}]$ in column (2), 0.052 (\textit{p}-value = 0.017), shows that the market perceives a 5.2 percentage-point higher value effect when mature firms unify dual-class shares compared to when young firms unify. The economic magnitudes of the effect of switching between dual-class and single-class structures on mature, relative to young, firms estimated using the two events are comparable---estimates on $\mathbb{1}[\text{age $>$ 12}]$ = --0.035 and 0.052 in columns (1) and (2). These comparable magnitudes suggest the estimates capture a common economic force---such as the present value of private benefits of control as in the model. \looseness=-1

\subsection{Effects on technological innovation: evidence on firm-specific effort}\label{sec:innovation}

Section \ref{sec:q} provides evidence consistent with a transitory dual-class premium in which dual-class firms have higher valuations than single-class firms when they are young, but the valuation gap shrinks over maturity. In this section, we test the model implications for the founder's effort by employing patent output as an outcome of the firm-specific investment.\footnote{We recognize that the founder's investments in innovation can be either firm-specific or general. The firm-specificity of innovation may stems from two sources: (i) the founder's attachment to the firm he has built and (ii) the economic costs of redeploying the product of the investment elsewhere. For example, the costs of moving a social networking or ride sharing platform, whose value depends on network effects among both providers and users would be very high. Thus, we posit that the founder's firm-specific investment is an important driver of the firm's innovative output.} Under the treatment effect, the founder of a dual-class firm exerts a greater effort than the single-class counterpart, resulting in greater innovative output for the firm.\footnote{Related, \citet*{acharya2014wdl} show evidence that wrongful discharge laws encourage employees' firm-specific effort and spur innovation by limiting the firm's ability to holdup.} But under the selection effect that replicates the transitory premium, dual-class firms would produce \emph{less} innovative output (Proposition \ref{selection_experiment}). 

We estimate a version of difference-in-differences Eq. (\ref{eq:dd}) in which the dependent variable is the number of citations the firm-year's patents generate in one, two, or three years ahead (e.g., \citet{bernstein2015innovation}). Consistent with the literature on firm-level determinants of innovation output (e.g., \cite*{fang2014stock}), we include the following controls: log book assets, market leverage, R\&D, asset tangibility, and ROA. To account for both the extensive- and intensive-margin effects, we use the inverse hyperbolic sine transformation. In addition, we examine the two margins separately by employing (i) indicator variables equal to one if the firm-year has any patents that generate citations, and zero otherwise; and (ii) the log of the number of the citations the firm-year's patents generate as the dependent variables. \looseness=-1

Figure \ref{fig:event_study_cites} presents the event-study estimates for the transformed patent citations. Both panels show no evidence of pre-trends for recapitalization and unification, supporting the identifying assumption. Panel A shows that post-recapitalization, the switching firms experience an increase in patent output, as measured by their citations, relative to firms that stay single-class. In contrast, Panel B shows that post-unification, the switching firms experience a decline in patent citations, relative to firms that stay dual-class. The magnitude of the effects increases over time for both types of events, consistent with a time lag between (firm-specific) investment and the patent outcome. This ramp-up (-down) of patenting activities post-event, combined with parallel pre-trends, appears inconsistent with recapitalizing (unifying) firms selecting on pre-event changes in innovation.\footnote{\cite*{liang2022Sunset} and \cite*{baran2023dc_innovation} find that in the cross-section, dual-class shares and the wedge between insiders' voting and cash-flow rights, respectively, are positively associated with firms' patent output.} \looseness=-1

These results are instead consistent with a treatment effect in which the founder exerts greater firm-specific effort under dual-class than single-class structures---which in turn increases the probability of ``transforming'' the firm, as measured by generating greater innovative outputs. We further explore this implication by separately examining the extensive and the intensive margins of innovation in Table \ref{tab:patent_robustness}. In columns (6)-(7) in Panel A, we find that the positive effect of dual-class recapitalization on innovative output is largely driven by the extensive margin---i.e., the recapitalizing firms become to generate patents that are cited. For unifications (in Panel B), innovative output declines at both the extensive and intensive margins, although the latter effect is statistically insignificant.\footnote{Following \cite{lee2009bounds} and \citet{chen2024logzeros}, we estimate lower and upper bounds on the intensive margin effect based on different assumptions on which firm select to produce patents.}

Table \ref{tab:patent_robustness} also reports these DD estimates on innovation output using the series of robustness tests that we run in Section \ref{sec:q_robustness} for $q$: (i) using matched samples (column (2)); (ii) excluding dual-class firms with equity carveouts (column (3)); (iii) using the \citet{sunabraham2021eventstudy} estimator (column (4)); (iv) weighting the regression by the log of total assets (column (5)); and (v) excluding unification cases where the controlling founder or family member leaves around the event (column (8)). Finally, column (9) reports the event study results that use supplementary firm-patent linkages from \citet{ma2025obsolescence} for all firms, instead of only ever-dual firms (see Section \ref{sec:addl_data} for more details). Again, our results on firm innovation across all these tests are similar to those in column (1), the baseline.

Overall, the evidence on the dynamics of innovative output from the difference-in-differences analysis is consistent with a treatment effect of dual-class shares rather than selection. In particular, the results are consistent with the heightened incentives dual-class shares provide having a larger impact on firms' transitions into innovating (the extensive margin) than on the level of innovative output once they become innovative (the intensive margin). Arguably, the former requires greater firm-specific investment and transformation, for which dual-class shares are well suited. \looseness=-1

Finally, similar to the analysis of $q$ above, we conduct a within-firm analysis of innovative output for firms that go public with dual-class versus single-class share structures. We note that this test does not identify a treatment effect of dual-class shares on firm-specific investment. Nonetheless, the results in Table \ref{tab:ipo_dynamics}, column (2) show that post-IPO, dual-class firms' innovative output rises faster, especially when their age is in the third and fourth quartiles, relative to single-class firms', consistent with the model prediction.

\subsection{Effects of dual-class shares by specificity of investments} 
\label{sec:redeployability}

Next, we test a core mechanism of our model: dual-class shares incentivize founders to undertake firm-specific investments (``effort'') by protecting them from the loss of private benefits that result from the investments. This mechanism implies that the effects of dual-class shares are, ceteris paribus, more pronounced when founders' investments are more firm-specific; if all investments are general, there is no holdup, and thus no effect of dual-class shares we posit.

Given the lack of a readily available measure of the importance of firm-specific investments by founders, we construct a proxy using a combination of the specificity of key production inputs: capital and labor. The idea is that the founder's investment is likely more specific when the capital and labor that he manages are more specific. For capital, we employ the measure of asset redeployability from \citet{kim2017redeployability}. Assets that are used by fewer firms in the economy are considered more specific. For labor, we use the index of labor mobility from \citet{barry2026mobility}. If workers in an industry move to a more concentrated set of other industries, they are considered more specific. As a final measure, we employ the weighted average of the standardized labor and capital specificity measures using the labor share and [1 -- labor share] as the respective weights. We construct a measure of the labor share at the industry level following \cite*{elsby2013laborshare}.\footnote{The asset redeployability measure is at the firm level, weighted-averaged across industry segments. We fill in missing firm-level measures using BEA industry-level measures, crosswalked to three-digit SIC codes. The labor mobility measure is at the 1990 Census industry code level, crosswalked to three-digit SIC codes following \citet{autor2019indxwalks}. The labor share data are from the Bureau of Labor Statistics KLEMS. We measure the labor share using BEA industry codes, which we crosswalk to two-digit SIC industries from 1971 to 1986, and to three-digit NAICS industries from 1987 to 2022.}

Table \ref{tab:dd_redeployability} shows results for the difference-in-differences analysis for Tobin's \textit{q} and patent citations, with the sample split at the median of investment specificity. In Panel A, we find that the effects of recapitalization and unification on $q$ are concentrated among firms in which the founder's specific investments are likely more important---when their capital and labor are more specific. In contrast, the effects are generally smaller in magnitude and insignificant for firms with less specialized production inputs. In Panel B, we find parallel concentration of the effect of unification on patent outputs among firms with higher specificity but not for recapitalization.

\subsection{Sensitivity of investment to opportunities: evidence for private benefits} \label{sec:sesitivity}

We now turn to testing the model implication for private benefits of control. The model shows that if the dynamics of dual-class firms' valuation are due to treatment (selection), then founders of dual-class firms consume greater (smaller) private benefits than those of single-class firms---and this gap increases (decreases) over maturity. Motivated by work that shows entrenched founder-managers have stronger preferences for preserving existing capital and hence reducing ``creative destruction'' than shareholders do (e.g., \cite{morck2000inherited}; \cite{bertrand2003enjoying}), we employ the extent to which firm investment decisions are \emph{insensitive} to opportunities as a measure of private benefits.\footnote{See, e.g., \citet{foster2001productivity} for a discussion of theoretical frameworks and evidence on the link between creative destruction and productivity.}\looseness=-1

Specifically, we estimate how the $q$-sensitivity of investment differs between dual-class and single-class firms over maturity. Following the extensive literature on investment, we use Tobin's \textit{q} as a proxy for marginal $q$ and also include cash flows in an investment equation.\footnote{This measurement of expected marginal returns to capital using Tobin’s $q$ follows the literature on neoclassical $q$-theory of investment, which typically assumes no agency frictions (e.g., \cite{hayashi1982}; \cite{Kaplan1997investment}). Thus, one caveat of interpreting this analysis is that the effects of dual-class shares on market valuations, as we document earlier, might affect the estimated $q$-sensitivity of investment over maturity.} The resulting equation is: \looseness=-1
\begin{align} \label{eq:inv}
\textit{Investment}_{it} = \alpha_{i \times \text{Dual}} + \alpha_{t \times \text{Dual}} + \beta_1\textit{q}_{it-1} + \beta_2\textit{q}_{it-1} \times \textit{Dual}_{it} + \beta_3\textit{CF}_{it} + \beta_4\textit{CF}_{it} \times \textit{Dual}_{it} + \epsilon_{it},
\end{align}
where $\textit{Investment}_{it}$ is capital expenditures scaled by lagged book assets; $\textit{Dual}_{it}$ is an indicator variable equal to one if firm $i$ has multiple classes of shares with different voting rights in year $t$; $\alpha_{i \times \text{Dual}}$ and $\alpha_{t \times \text{Dual}}$ represent firm and year fixed effects interacted with $\textit{Dual}_{it}$; $\textit{q}_{it-1}$ is beginning-year Tobin's \textit{q}; $\textit{CF}_{it}$ is cash flows scaled by lagged book assets for firm $i$ in year $t$; and $\epsilon_{it}$ represents random errors clustered at the firm level. 

To examine whether dual-class and single-class firms exhibit different sensitivities over maturity, we estimate Eq. (\ref{eq:inv}) on subsamples of young (mature) firms with ages less than or equal to (greater than) 12 years, the median age. Given that the relevant private benefits involve reluctance to cut capital, we estimate these sensitivities on samples of firms facing weak demand conditions---when they are likely near the disinvestment threshold---proxied by sales growth in the first quintile (e.g., \cite{achyuta2013firing}). The results are robust to other cutoffs (see below). \looseness=-1

Table \ref{tab:invest_q_sensitivity} reports the estimation results for Eq. (\ref{eq:inv}). All columns show a strong positive correlation between investment and opportunities for single-class firms, as captured by significantly positive baseline coefficients on $q$---whether firms are relatively young or mature. Column (3) shows that mature dual-class firms have significantly lower investment-$q$ sensitivities than mature single-class firms ($p$-value = 0.011). In contrast, column (2) shows there is no such difference for young firms. The bottom row shows that the difference in the sensitivity between dual-class and single-class firms across the two columns is significant ($p$-value = 0.014). In Online Appendix Table \ref{tab:invest_q_sensitivity_robust} we show that the results are robust to the cutoff for sales growth to define the analysis sample. 

Overall, these results are consistent with a treatment effect of dual-class shares in which founder-managers of dual-class firms extract larger private benefits of control than those of single-class firms especially when the firms are mature. 

\subsection{Voting premium: evidence that private benefits increase with maturity} \label{sec:voting-premium}
The previous section shows evidence that private benefits consumption is larger for dual-class firms relative to single-class firms, particularly when they are mature. In this section, we test a related general premise: private benefits of control increase over firm maturity. This is crucial for the model, in which growing private benefits lead the valuation premium for dual-class firms to shrink in the long run, and also provide \emph{ex-ante} incentives to the founder. \looseness=-1

We use the voting premium---the difference between market prices of superior-voting shares relative to inferior-voting shares for dual-class firms---as a measure of private benefits of control. In the model, the voting premium is the expected present value of private benefits. In practice, other factors such as the probability of a control contest, the difference in liquidity across share classes, and heterogeneous beliefs among shareholders also affect the premium.\footnote{See, e.g., \cite{lease1983market}; \cite{zingales1995determines}; \cite{nenova2003value}; and \cite{doidge2004us}. \cite{zingales1995determines} shows both theoretically and empirically that the voting premium is determined by a combination of the probability of a control contest and the private benefits of control. Similarly, \cite{nenova2003value} argues that the value of control-block votes is a lower bound for the private benefits to the controlling shareholder. See \citet*{levit2026voting} for a theoretical analysis showing that the voting premium may also reflect heterogeneous preferences or beliefs among shareholders, and capture the marginal rather than the average benefit from additional voting
rights.} Therefore, our analysis controls for measures of other determinants of the voting premium, to the extent possible. \looseness=-1

The analysis relies on a subsample of dual-class firms that trade both the superior- and inferior-voting shares, and have information on ownership structure which begins in 1980. To mitigate the influence of illiquidity and microstructure noise, we exclude firms with superior or inferior share prices below \$5. The resulting sample includes 1,182 dual-class firm-years (for 114 unique firms) from 1980 through 2022. To analyze the relationship between firm maturity and the voting premium, we estimate the following equation:
\begin{align} \label{eq:vp}
\textit{VP}_{it} = \alpha_i + \alpha_t + \sum_{n=2}^{4}\beta_n{\Large \mathbb{1}}[\text{age in quartile}\, n]_{it}  + \gamma' \textit{X}_{it} + \epsilon_{it},
\end{align}
where $\mbox{\textit{VP}}_{it}$ is the voting premium for dual-class firm $i$ in year $t$, computed as $(P_A-P_B)/(P_B-rP_A)$, where $P_A$ ($P_B$) is the price of superior- (inferior-) voting shares and $r$ is the ratio between the number of votes for the inferior- and superior-voting shares (e.g., 1/10);\footnote{Under the assumption that $r$ < 1, the voting premium is the ratio between the price of one voting right and the price of one cash-flow right.} $\alpha_i$ and $\alpha_t$ represent firm and year fixed effects;\footnote{The relatively small size of the analysis sample does now allow us to include industry by year fixed effects.} ${\Large \mathbb{1}}[\text{age in quartile} \,n]_{it}$ is an indicator variable for the $n^{th}$ firm age quartile, and firm age is defined using years since a firm's IPO; $X_{it}$ includes proxies for the probability of a control contest: log market equity and the Shapley value of ownership structure, and the log of relative trading volume (\cite{zingales1995determines}); and $\epsilon_{it}$ represents random errors clustered at the firm level. In the presence of firm fixed effects, year fixed effects and firm age are collinear. We address this issue by normalizing year fixed effects following \cite{deaton1997}.\footnote{\cite{deaton1997}'s approach normalizes year fixed effects to have a zero mean and orthorgonalizes them to a linear time trend. Firm age therefore captures the linear trend, while the orthogonalized year fixed effects capture deviations from the trend. Consistent with this normalization, the year fixed effects when estimated without normalization and age indicators do not exhibit a significant trend.} The coefficient of interest is $\beta_n$ which we interpret as the relation between firm maturity and the private benefits. \looseness=-1

Table \ref{tab:voting_premium0} presents the estimation results for Eq. (\ref{eq:vp}). Column (1) shows that more mature firms exhibit a higher voting premium in the cross-section (without firm fixed effects). The estimates suggest that, relative to the youngest quartile of firms, voting premia for firms in the third and fourth age quartiles increase by 3.4 and 5.4 percentage points (the latter's \textit{p}-value = 0.038), respectively. Similarly, when we control for both firm and year fixed effects in column (2), voting premia increase by 4.8 and 9.6 percentage points for firms in the third and fourth quartiles (\textit{p}-values = 0.054 and 0.025), respectively. The economic magnitudes are considerable, given the mean voting premium of 5.1\%.\footnote{The mean is similar to those reported in previous research on US firms (e.g., \cite{lease1983market}; \cite{zingales1995determines}). The coefficients on control variables have expected signs: the log market value of equity is negatively and the Shapley value is positively related with the voting premium.} In column (3), as a common alternative normalization approach (see, e.g., \citet{argente_2024jpe}), we normalize one year fixed effect to zero and find estimates that are quantitatively similar to those in column (2). These results are robust when we exclude log market equity which might be mechanically correlated with the voting premium (Online Appendix Table \ref{tab:voting_premium_robust}). Overall, the evidence in this section shows that private benefits of control reflected in the value of votes increase over firm maturity, providing support for a key premise of our model and empirical analysis. \looseness=-1

\section{Discussions and Policy Implications} \label{sec:policy}

Given the increasing private benefits associated with managing dual-class firms over maturity, founders may have little incentive to unify dual-class shares when their firms are mature. Does this support the case for so-called ``sunset provisions?''---clauses in corporate charters that eliminate dual-class shares once a specific date is reached or a threshold event occurs. Some academics and policy makers have argued that dual-class shares should be precluded altogether or, if they are used, should include sunset provisions (e.g., \citet{bebchuk2017untenable}; \citet{jackson2018}; \cite{cremers2024life}). This paper's theoretical and empirical analysis however suggests that these policy implications are not clear-cut. \looseness=-1

First, our analysis shows that a key benefit of dual-class shares is to provide incentives for making firm-specific investments by founders and their teams---which spur technological innovation, for example. An important context for this incentive effect is that the nature of firm investments and production has become more specialized (see, e.g., \cite{ekerdt_wu2025}; \cite{barry2026mobility}). This trend toward more specific investments, rather than general ones, suggests that benefits of dual-class shares may have become more important. The trend also appears consistent with recent policy changes that allow for or even encourage public listing with dual-class shares---decisions by the Hong Kong and Singapore Exchanges in 2018, the London Stock Exchange's Main Market premier tier in 2021, and the EU's Multiple Voting Rights Directive in 2024. Precluding dual-class structures will prevent firms, and their investors and entrepreneurs from reaping these, perhaps increasingly important, benefits.

Second, our model shows that sunset provisions will weaken the incentive that dual-class shares provide---by reducing the founder's private benefits. Therefore, the model implies that the effect of dual-class shares with sunset provisions is more similar to that of single-class shares. In Figure \ref{fig:appendix_event_study_sunsets}, we find evidence consistent with this implication: when dual-class firms with sunsets unify share classes, their valuations do not change (Panel A). If dual-class shares with sunsets provide relatively little incentive to begin with, then their removal would also have a smaller impact. \looseness=-1 

More broadly, our theoretical framework allows us to analyze the impact of firm-specific and market-wide factors on the choice of dual-class shares (including sunset provisions) or single-class shares, and the resulting effects on firms. If a firm has a small deadweight loss from private benefits consumption, as captured by high $\gamma$ in the model, all else equal, dual-class shares are preferred to single-class. Cases in which the founder derives private benefits from ``altruistic preferences'' would be such an example. Additionally, a larger fraction of firms will adopt dual-class shares when external governance forces such as activist investors in the economy become stronger. Finally, an increase in the importance of the founder's firm-specific investments for the firm's success---the core mechanism of our model---would lead to increased adoption of dual-class shares. Some combination of these forces might explain the recent rise in dual-class IPOs. Analyzing this heterogeneity in the cross-section and over time is an important topic for future research.\footnote{See \cite{philippon2006} for time-varying effects of corporate governance on firm and aggregate outcomes.} \looseness=-1

\section{Conclusion} \label{sec:conclusion}

In this paper, we argue that dual-class shares provide entrepreneurs incentives to make firm-specific investments using both theoretical and empirical analyses. The heightened incentives allow dual-class firms to be more innovative and increase their value, whereas growing agency costs over maturity erode the value added from these incentives. Consistent with the model, the private benefits of control, as measured by the voting premium and reluctance to cut investment when opportunities are poor, are most pronounced for mature dual-class firms. Our model provides an analytical framework to study the treatment and selection effects of dual-class shares, highlighting their starkly different implications for the dynamics of private benefits, founder investment, and firm value. Our empirical analysis exploits new data on US dual-class firms over 50 years in dynamic difference-in-differences designs for recapitalizations and unifications. The evidence across the outcomes is consistent with a treatment effect interpretation. \looseness=-1 

Our results have broad implications beyond dual-class shares, because other forms of deviation from one share--one vote such as pyramids and cross-ownership---which are widely used outside the US to consolidate control---can be replicated by dual-class structure (\cite{bebchuk2000stock}; \cite{claessens2002disentangling}; \cite{faccio2002ultimate}; \cite{becht2020loyalty}). Moreover, our theoretical framework can be applied in future research to analyze the treatment and selection effects of other governance mechanisms, such as classified boards and takeover defenses. \looseness=-1


\pagebreak
\setstretch{0.9}
\bibliographystyle{Bibliography/jf}
\bibliography{Bibliography/dualclass_JF}
\clearpage
\clearpage
\setstretch{1}
\newpage
\newgeometry{left=1cm,bottom=2cm, right=1cm, top=1cm}
\renewcommand{\thefigure}{\arabic{figure}}

\begin{figure}[t] 
    \centering
    \captionsetup{font = large, justification=centering}
    \caption{\\\textbf{Fraction of Dual-Class IPOs among the Universe of IPOs, 1980--2025}}
    \includegraphics[width = \textwidth]{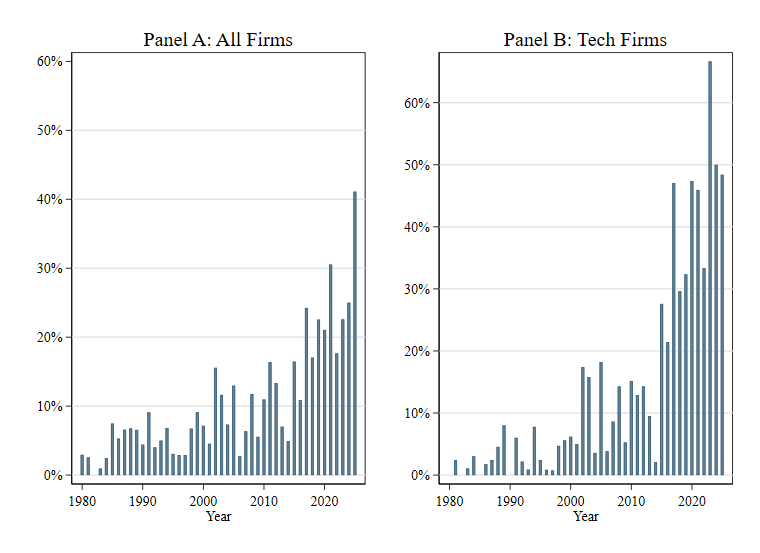}
    \captionsetup{font = small, justification=justified}
    \caption*{\textit{Notes}: This figure shows the fraction of dual-class IPOs among the universe of all IPOs (Panel A) and IPOs in the technology sectors (Panel B) as defined in \cite{loughran2004ipo}, plus SIC codes 3559, 3576, and 7389, from 1980 through 2025. The sample contains all offerings of common shares from Refinitiv (SDC Platinum) with an offer price of at least \$5.00, excluding REITs, ADRs, closed-end-funds, units (such as special purpose acquisition companies, SPACs), and companies not listed on CRSP within 7 days of the IPO. The data for 2025 are from Jay Ritter's website.}
    \label{fig:IPOs}
\end{figure}{}

\clearpage

\begin{figure}[t] 
    \centering
    \captionsetup{font = large, justification=centering}
    \caption{\\\textbf{Number and Fraction of Dual-Class Firms among Compustat Universe}}
    \includegraphics[width = \textwidth]{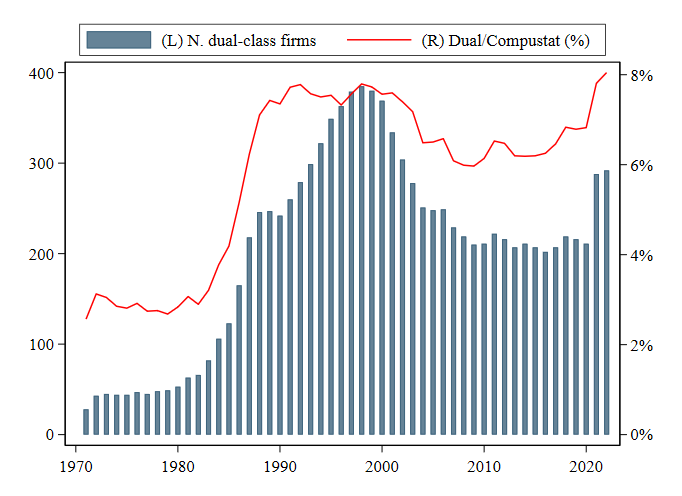}
    \captionsetup{font = small, justification=justified}
    \caption*{\textit{Notes}: This figure shows the number (blue bars, left y-axis) and percent (red line, right y-axis) of dual-class firms relative to Compustat firms that satisfy the sample selection criteria in Section \ref{sec:sample} from 1971 through 2022.}
    \label{fig:DC_counts}
\end{figure}{}

\clearpage

\begin{figure}[t] 
    \centering
    \captionsetup{font = large, justification=centering}
    \caption{\\\textbf{Event Study of Dual-Class Recapitalization and Stock Unification: Tobin's \textit{q}}}
    \includegraphics[width = 0.85\textwidth]{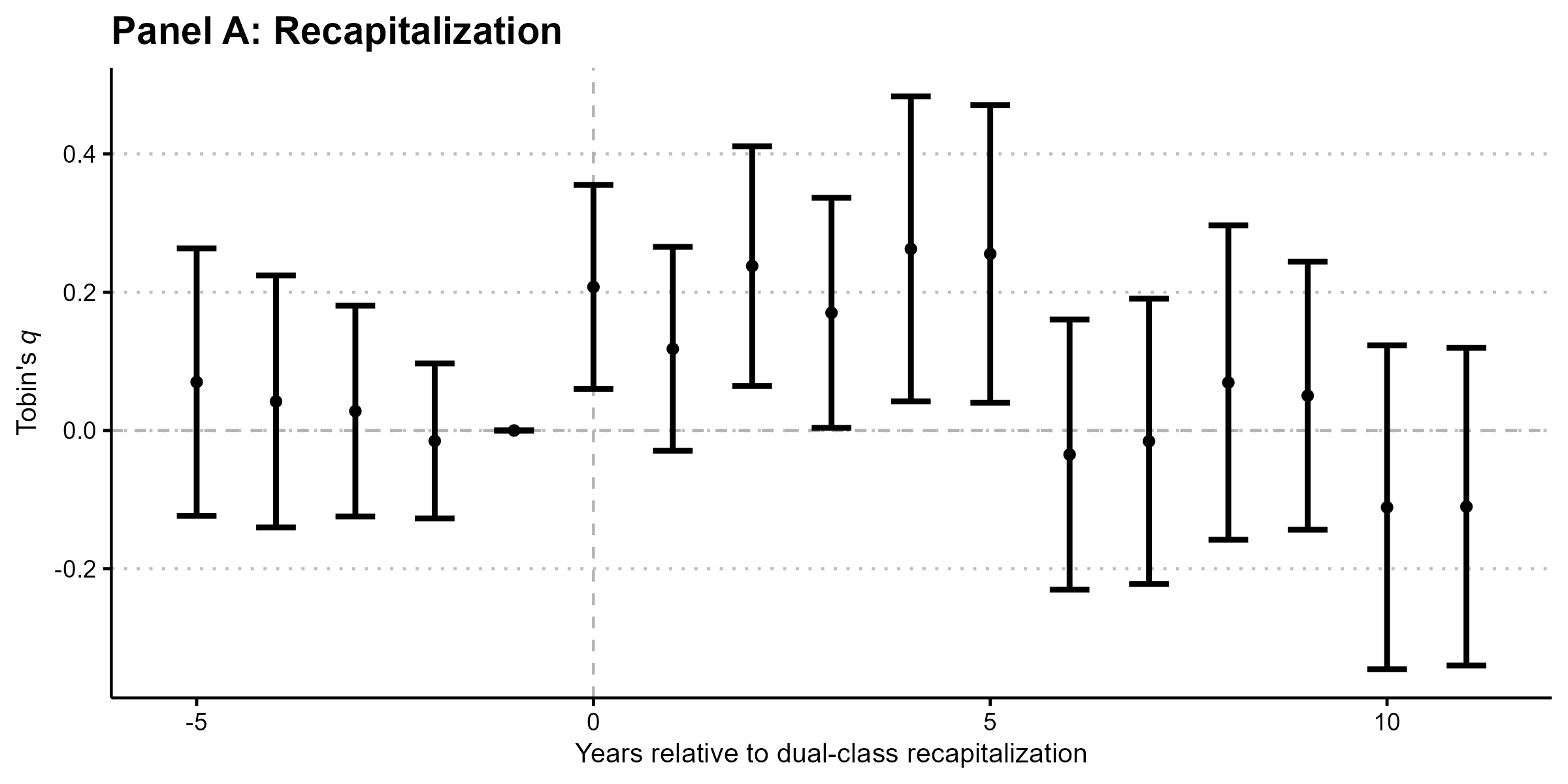}
    \includegraphics[width = 0.85\textwidth]{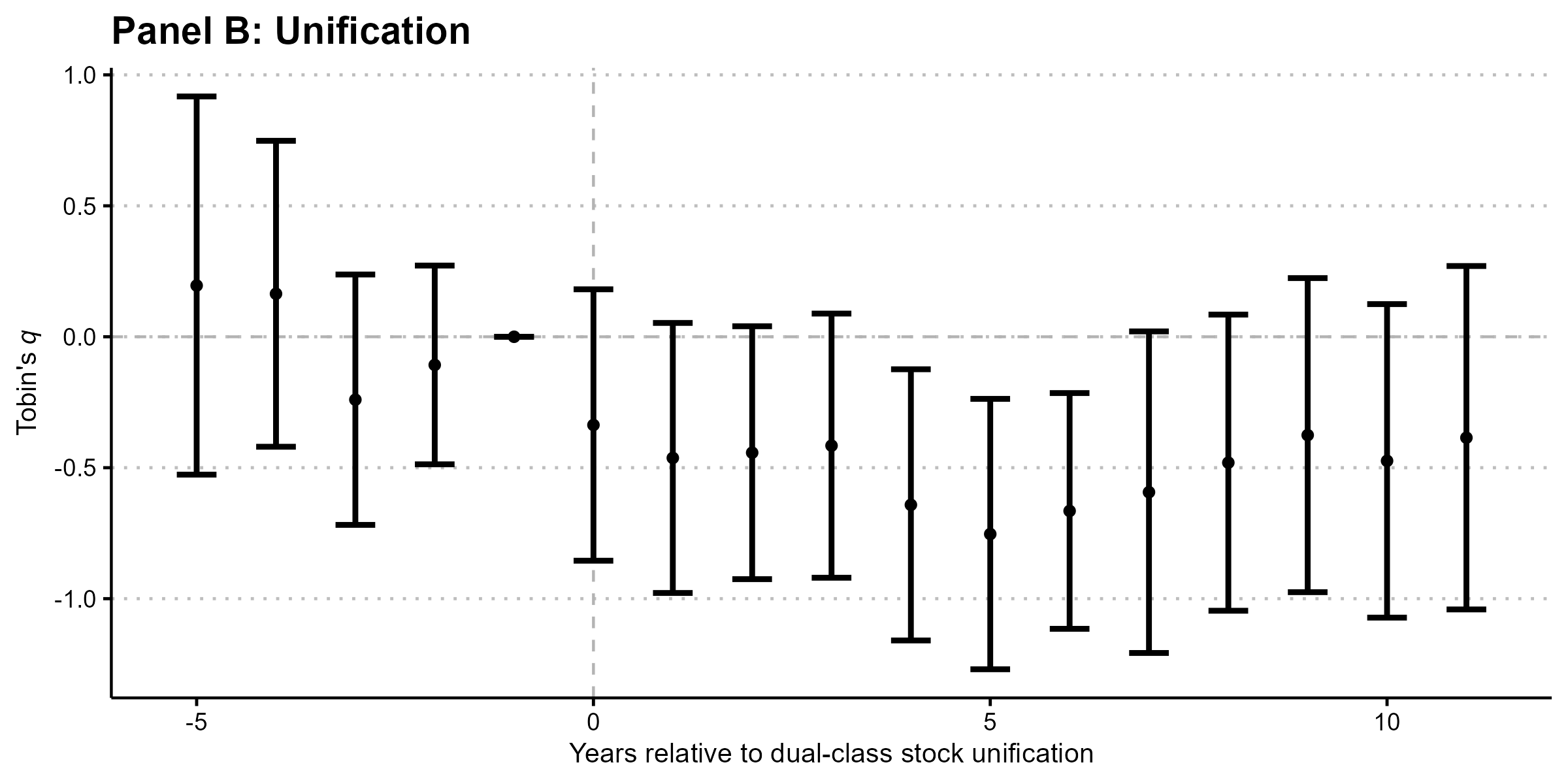}
    \captionsetup{font = small, justification=justified}
    \caption*{\textit{Notes}: This figure plots the stacked dynamic difference-in-differences event-study estimates of the effect of dual-class recapitalization (Panel A) and stock unification (Panel B) on Tobin's \textit{q} from Eq. (\ref{eq:dd}). The event study requires that both the treated and control groups of firms exist during the $[t-2, t+2]$ window. The markers on the lines show point estimates. The 90\% confidence intervals in vertical lines are based on standard errors clustered at the firm level. }
    \label{fig:event_study_q}
\end{figure}{}

\begin{figure}[t] 
    \centering
    \captionsetup{font = large, justification=centering}
    \caption{\\\textbf{Event Study of Dual-Class Recapitalization and Stock Unification: Innovative Output}}
    \includegraphics[width = 0.85\textwidth]{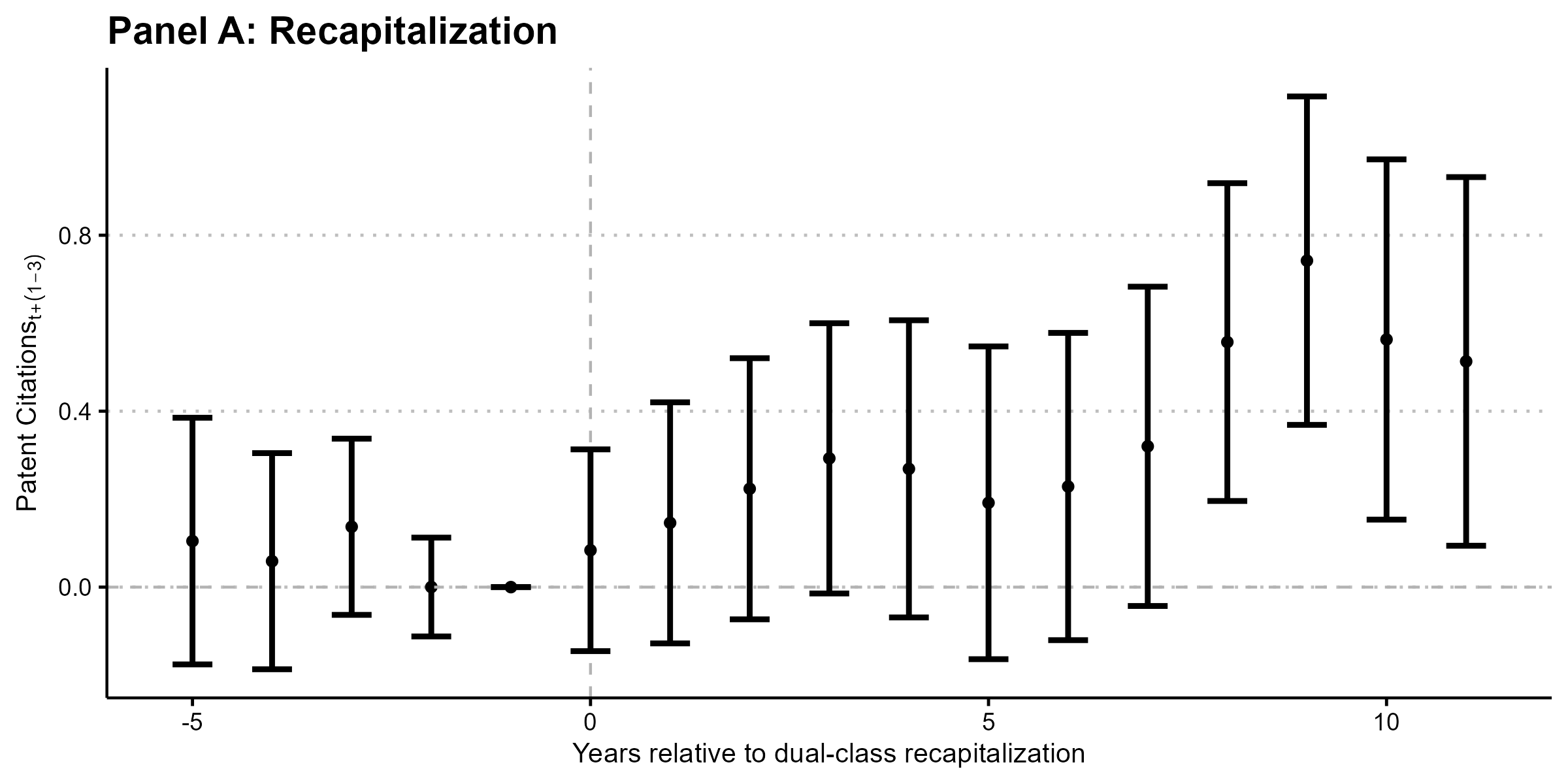}
    \includegraphics[width = 0.85\textwidth]{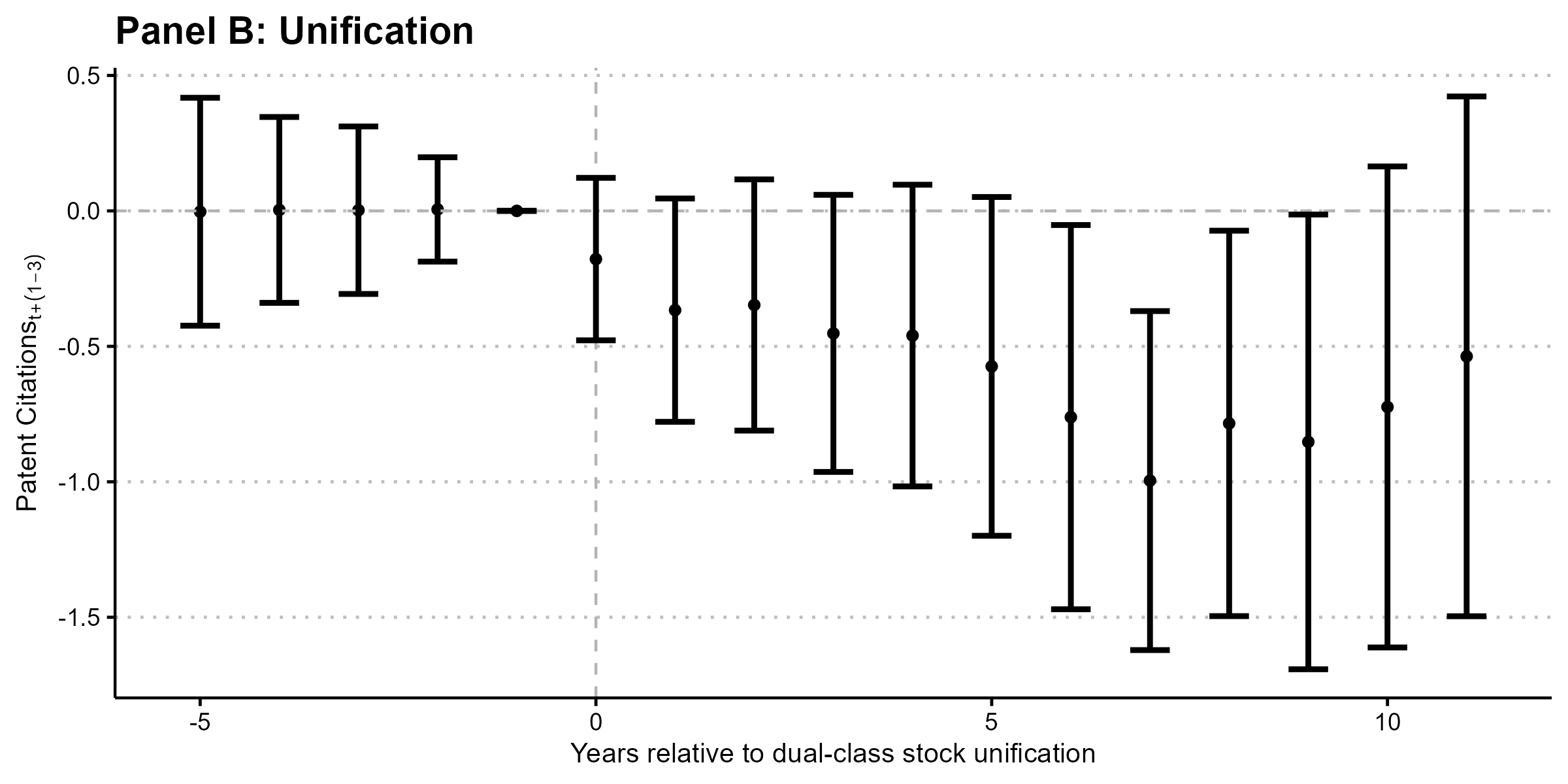}
    \captionsetup{font = small, justification=justified}
    \caption*{\textit{Notes}: This figure plots the stacked dynamic difference-in-differences event-study estimates of the effect of dual-class recapitalization (Panel A) and stock unification (Panel B) on innovative output from a version of Eq. (\ref{eq:dd}) in which the dependent variable is the inverse hyperbolic sine-transformed number of citations the firm's patents receive one, two, or three years ahead. The event study requires that both the treated and control groups of firms exist during the $[t-2, t+2]$ window. The markers on the lines show point estimates. The 90\% confidence intervals in vertical lines are based on standard errors clustered at the firm level.
    }
    \label{fig:event_study_cites}
\end{figure}{}

\clearpage
\setstretch{1}

\newpage
\newgeometry{left=1cm,bottom=2cm, right=1cm, top=1cm}
\renewcommand{\thefigure}{\Roman{figure}}

\begin{table}[]
    \centering
    \small
    \captionsetup{font = large, justification=centering}
    \caption{\textbf{Descriptive Statistics on Dual-Class and Single-Class Firms}}
\begin{tabular}{lcccccc}
    \toprule \toprule
     & \multicolumn{2}{c}{Dual-class} & & \multicolumn{2}{c}{Single-class} & Dual -- Single\\
     \cline{2-3} \cline{5-7}
        \multicolumn{1}{c}{Variable} & Mean & Std Dev &  & Mean  & Std Dev  & Diff.  \\ 
\midrule \addlinespace[\belowrulesep]
Book assets&3,966.8&18,872.3&&3,533.3&1,734.3&433.5 \tabularnewline
Log book assets&6.344&1.791&&5.756&2.016&0.588*** \tabularnewline
Age&15.053&12.985&&13.257&11.850&1.796*** \tabularnewline
Tobin's $q$&2.407&2.890&&2.126&2.198&0.281*** \tabularnewline
Sales growth&0.190&0.592&&0.220&0.705&-0.029*** \tabularnewline
ROA&0.128&0.165&&0.111&0.222&0.017*** \tabularnewline
Market leverage&0.293&0.261&&0.265&0.249&0.028*** \tabularnewline
Capex&0.072&0.092&&0.083&0.106&-0.011*** \tabularnewline
R\&D&0.024&0.065&&0.042&0.091&-0.018*** \tabularnewline
Tangibility&0.290&0.213&&0.313&0.242&-0.023*** \tabularnewline
Payout ratio&0.028&0.042&&0.026&0.042&0.002* \tabularnewline
Cash flows&0.073&0.175&&0.056&0.238&0.017*** 
\tabularnewline
IHS(citations)&1.502&2.550&&1.597&2.764&-0.095 
\tabularnewline
Observations&9,950&-&&150,752&-&- \tabularnewline
\toprule \toprule 
\end{tabular}
\begin{tablenotes}
\small
\item \textit{Notes:} This table presents descriptive statistics on financial variables for dual-class and single-class firm-years merged with CRSP and Compustat from 1971 through 2022. ``Book assets'' is book value of assets; ``Log book assets'' is the log of book assets; ``Age'' is the number of years since an IPO (proxied by the first appearance in Compustat with a valid stock price); ``Tobin’s \textit{q}'' is the ratio of the market value of capital to the book value of capital; ``Sales growth'' is the first difference in the log of sales; ``ROA'' is operating income before depreciation divided by lagged book assets; ``Market leverage'' is total debt divided by the sum of total debt and market equity; ``Capex'' is capital expenditures divided by lagged book assets; ``R\&D'' is research and development expenses divided by lagged book assets; ``Tangibility'' is net property, plant, and equipment divided by book assets; ``Payout ratio'' is total payout including dividends and repurchases divided by market equity; ``Cash flows'' is net income plus depreciation and amortization divided by lagged book assets; ``IHS(citations)'' is the inverse hyperbolic sine-transformed number of citations the firm's patents receive one, two, or three years ahead. The column ``Dual -- Single'' shows differences between the means for dual-class and single-class firms. *, **, and *** represent significance at the 10\%, 5\%, and 1\% levels, respectively, based on standard errors clustered at the firm level.
\end{tablenotes}
        \label{tab:descriptive_stats}
\end{table}

\clearpage

\begin{landscape}
\begin{table}[!htbp]
    \captionsetup{font = large, justification=centering}
    \caption{\textbf{The Effects of Dual-Class Recapitalization and Stock Unification on Tobin's $q$}}
\label{tab:q_robustness}
\centering
\begin{threeparttable}
\footnotesize
\setlength{\tabcolsep}{4pt}
\renewcommand{\arraystretch}{1.10}

\begin{tabular*}{\textwidth}{@{\extracolsep{\fill}}l*{8}{c}}
\toprule \toprule
& (1) & (2) & (3) & (4) & (5) & (6) & (7) & (8) \\
\midrule
\multicolumn{8}{l}{\textit{Panel A: Recapitalization}} \\
Pre-event 
& 0.042 & -0.059 & 0.038 & 0.046 & -0.011 & 0.056 & -& 0.075  \\
& (0.090) & (0.103) & (0.090) & (0.121)  & (0.099) & (0.099) & -& (0.104)  \\
$\tau = 0$ to $3$
& 0.189** & 0.141 & 0.182** & 0.265** & 0.153** & 0.187** & -& 0.249**  \\
& (0.083) & (0.089) & (0.083) & (0.113) & (0.075) & (0.090) & -& (0.101)\\
$\tau = 4$ to $7$
& 0.180* & 0.149 & 0.178* & 0.301** & 0.143* & 0.168 & -& 0.267**  \\
& (0.108) & (0.122) & (0.109) & (0.145)  & (0.085) & (0.120) & -& (0.131)  \\
$\tau = 8$ to $11$
& 0.057 & 0.064 & 0.052 & 0.133 & 0.017 & 0.024 & - & 0.105 \\
& (0.113) & (0.134) & (0.113) & (0.188) & (0.085) & (0.125) & -& (0.140) \\

\addlinespace[0.75em]
\multicolumn{8}{l}{\textit{Panel B: Unification}} \\
Pre-event
& -0.021 & -0.158 & -0.004 & 0.040  & -0.162 & -0.015 & -0.125 & -0.072 \\
& (0.242) & (0.222) & (0.270) & (0.256) & (0.146) & (0.234) & (0.268) & (0.302)  \\
$\tau = 0$ to $3$
& -0.415 & -0.610** & -0.445 & -0.276 & -0.524*** & -0.402* & -0.473 & -0.388  \\
& (0.268) & (0.305) & (0.286) & (0.296) & (0.152) & (0.241) & (0.292) & (0.332)  \\
$\tau = 4$ to $7$
& -0.670** & -0.814** & -0.601** & -0.533 & -0.845*** & -0.697*** & -0.694** & -0.687*  \\
& (0.284) & (0.321) & (0.305) & (0.327) & (0.181) & (0.270) & (0.323) & (0.364)  \\
$\tau = 8$ to $11$
& -0.436 & -0.306 & -0.337 & -0.236  & -0.822*** & -0.457 & -0.596 & -0.447  \\
& (0.329) & (0.419) & (0.332) & (0.371)  & (0.221) & (0.317) & (0.373) & (0.435)   \\

\addlinespace[0.75em]
\midrule
Matched sample 
&  & \checkmark &  &  &  &  &      \\
Carveout exclusion 
&  &  & \checkmark &  &  &  &     \\
Equal price
&  &  &  & \checkmark  &  &  &     \\
Sun \& Abraham
&  &  &  & & \checkmark &  &  &   \\
Weighted by log assets 
&  &  &  &  & & \checkmark & &    \\
Exclude family firms
&  &  &  &  &  & & \checkmark &    \\
Winsorize at 1\%
&  &  &  &  &  &  & & \checkmark   \\

\bottomrule \bottomrule
\end{tabular*}

\end{threeparttable}

\vspace{0.5em}

\footnotesize
\justifying
\noindent\textit{Notes:}
This table presents event-study estimates of the effects of dual-class recapitalization (Panel A) and stock unification (Panel B) on Tobin's $q$ from a version of Eq. (\ref{eq:dd}) which uses groups of event-time indicators. ``$\tau$'' refers to the number of years following recapitalization or unification while ``Pre-event'' refers to the period from $\tau =$ -5 to -2. Column (1) shows the baseline estimates. Each of columns (2) through (8) uses an identical specification to column (1) except for one change to the baseline. Column (2) employs a matched sample. Column (3) excludes dual-class firms that are equity carveouts. Column (4) assumes the same price between superior and inferior classes of shares. Column (5) performs the event-study analysis using the \citet{sunabraham2021eventstudy} estimator. Column (6) performs the DD regression weighted by the log of total assets. Column (7) in Panel B excludes unification events in which a founder or controlling family member steps down from management between two years before and after the unification. Column (8) Winsorizes variables at the 1\% tails. Standard errors clustered at the firm level are reported in parentheses, except for column (2) which clusters at the matched pair level. *, **, and *** represent significance at the 10\%, 5\%, and 1\% levels, respectively.
\end{table}
\end{landscape}

\clearpage

\begin{table}[]
    \small
    \captionsetup{font = large, justification=centering}
    \caption{\textbf{The Effects of Dual-Class Recapitalization on Tobin's $q$: \\ AMEX and Nasdaq vs. NYSE Firms Pre-1986}}
\begin{center}
    
\begin{tabular}{l*{2}{c}}
\toprule \toprule
                    
                    &\multicolumn{1}{c}{(1)}&\multicolumn{1}{c}{(2)}\\
                    \multicolumn{1}{c}{Dependent variable:} 
                    & \multicolumn{2}{c}{Tobin's $q$} \\
\midrule
Pre-event    &      0.031&     0.027   \\
                    &     (0.121)   &    (0.122)   \\
\addlinespace
$\tau =$ 0 to 4       &      0.073 &       0.077  \\
                    &     (0.084)   &     (0.085)   \\
\addlinespace
$\tau =$ 5 to 11        &      0.271*  &       0.285** \\
                    &     (0.136)   &     (0.137)   \\
\midrule
Firm fixed effects &    Y        &   Y \\
Matched pair fixed effects &    N        &   Y \\
Year-quarter fixed effects &    Y        &   Y \\
Firm-level controls &    Y        &   Y \\
$R^2$                & 0.908 & 0.909  \\
Observations        &      470   &      470   \\
\bottomrule \bottomrule
\end{tabular}

\begin{tablenotes}
\small
\item \textit{Notes:} This table presents event-study estimates of the effect of dual-class recapitalization on Tobin's \textit{q} from a modified, quarterly version of Eq. (\ref{eq:dd}), with details described in Section \ref{sec:event_study_add}. The treated group consists of firms listed on the AMEX and Nasdaq that recapitalized between 1980 and September 1986, while the control group consists of matched firms listed on the NYSE that remain single-class during the same period. The event study requires that both the treated and control groups of firms exist during the $[t-2, t+2]$ window. ``$\tau$'' refers to the number of quarters following recapitalization while ``Pre-event'' refers to the period from $\tau =$ -5 to -2. Standard errors clustered at the matched pair level are reported in parentheses. *, **, and *** represent significance at the 10\%, 5\%, and 1\% levels, respectively, based on wild cluster bootstrapped $p$-values (\cite{cameron2008bootstrap}). 

\end{tablenotes}
\label{tab:event_study_nyse}
\end{center}
\end{table}

\clearpage
\begin{table}[]
    \centering
    \small
    \captionsetup{font = large, justification=centering}
    \caption{\textbf{Market Reactions to Dual-Class Recapitalizations and Stock Unifications Conditional on Maturity}}
\begin{tabular}{lcc}
                                        & \multicolumn{1}{l}{}    & \multicolumn{1}{l}{}   \\ \toprule \toprule
\multicolumn{1}{c}{} & (1) & (2) \\ 
\multicolumn{1}{c}{Dependent variable:} & \multicolumn{2}{c}{Cumulative abnormal return $[-3, +3]$} \\
\multicolumn{1}{c}{Event:} & \multicolumn{1}{c}{Dual-class recapitalization} & \multicolumn{1}{c}{Dual-class stock unification} \\ 
\midrule
$\mathbb{1}[\text{age $>$ 12}]$ & -0.035** & 0.052** \\
& (0.015) & (0.021) \\
Constant & 0.029** & -0.003 \\
& (0.011) & (0.016) \\
\midrule
$R^2$ & 0.050 & 0.094 \\
Observations & 116 & 61 \\
\bottomrule \bottomrule
                                        & \multicolumn{1}{l}{}    & \multicolumn{1}{l}{}  
\end{tabular}
\begin{tablenotes}
\small
\item \textit{Notes:} This table examines the effects of dual-class recapitalization (column (1)) and stock unification (column (2)) announcements on stock returns conditional on firm maturity. The dependent variable is the cumulative abnormal return from three days before to three days after the announcement of a dual-class recapitalization and stock unification. ``$\mathbb{1}[\text{age $>$ 12}]$ '' is an indicator variable equal to one if firm age is greater than 12 years, the sample median for dual-class firms, and zero otherwise. White robust standard errors are reported in parentheses. ** represents significance at the 5\% level.
\end{tablenotes}
\label{tab:recaps_and_unifications}
\end{table}

\begin{landscape}
\begin{table}[!htbp]
\centering
    \captionsetup{font = large, justification=centering}
    \caption{\textbf{The Effects of Dual-Class Recapitalization and Stock Unification on Innovative Output}}
\label{tab:patent_robustness}
\begin{threeparttable}
\footnotesize
\setlength{\tabcolsep}{4pt}
\renewcommand{\arraystretch}{1.10}

\begin{tabular*}{\textwidth}{@{\extracolsep{\fill}}l*{9}{c}}
\toprule \toprule
& (1) & (2) & (3) & (4) & (5) & (6) & (7) & (8) & (9) \\
\midrule

\multicolumn{10}{l}{\textit{Panel A: Recapitalization}} \\
Pre-event 
& 0.138 & 0.163  & 0.129 & 0.131 & 0.165 & 0.024  & -0.051     &  - & 0.169 \\
& (0.119)  & (0.144) & (0.119) & (0.116) & (0.136) & (0.025)  & (0.098)     &  - & (0.120)  \\
$\tau = 0$ to $3$
& 0.267*  & 0.322 & 0.254 & 0.224* & 0.293* & 0.052* & [-0.201,-0.096]  &  -  & 0.264 \\
& (0.159) & (0.207) & (0.159) & (0.133) & (0.172) & (0.031) & [(0.156),(0.159)]  &  - & (0.161)  \\
$\tau = 4$ to $7$
& 0.355*  & 0.529** & 0.350* & 0.291* & 0.393* & 0.046 & [0.039,0.101]  &  -  & 0.359* \\
& (0.200) & (0.258) & (0.201) & (0.173) & (0.208) & (0.037) & [(0.232),(0.230)]    &-  & (0.202)   \\
$\tau = 8$ to $11$
& 0.673***  & 0.800*** & 0.665*** & 0.627*** & 0.731*** & 0.127*** & [-0.234,-0.035]  &  - & 0.711***  \\
& (0.212) & (0.289) & (0.212) & (0.197) & (0.220) & (0.040) & [(0.264),(0.258)]  &  - & (0.219)  \\

\addlinespace[0.75em]
\multicolumn{10}{l}{\textit{Panel B: Unification}} \\
Pre-event 
& 0.026  & -0.077  & 0.037 & 0.074  & 0.041  & -0.003  & 0.346     & -0.047  &  -   \\
& (0.149)  & (0.177)  & (0.157)  & (0.118)  & (0.149)  & (0.032)  & (0.269)  & (0.164)  &  -   \\
$\tau = 0$ to $3$ 
& -0.365 & -0.451  & -0.372  & -0.276* & -0.397* & -0.063  & [-0.148,-0.086]   &  -0.298  &  -  \\
& (0.223) & (0.289)  & (0.239)  & (0.161)  & (0.229)  & (0.047)  & [(0.277),(0.295)]  & (0.256)  &  -   \\
$\tau = 4$ to $7$ 
& -0.708** & -0.648  & -0.674*  & -0.566** & -0.759** & -0.118*  & [-0.369,-0.327]   &  -0.487 &  -   \\
& (0.338) & (0.416)  & (0.352)  & (0.220)  & (0.362)  & (0.068)  & [(0.516),(0.514)]  & (0.357)  &  -   \\
$\tau = 8$ to $11$ 
& -0.775* & -1.187*  & -0.881*  & -0.796*** & -0.719 & -0.062  & [-0.511,-0.451]   &  -0.474  &  -  \\
& (0.451) & (0.652)  & (0.452)  & (0.251)  & (0.487)  & (0.087)  & [(0.495),(0.481)]  & (0.423)  &  -   \\

\addlinespace[0.75em]
\midrule
Matched sample 
&  & \checkmark &  &  &  &  &  &  &   \\
Carveout exclusion 
&  &  & \checkmark &  &  &  &  & &     \\
Sun \& Abraham 
&  &  &  & \checkmark &  &  &  &  &    \\
Weighted by log assets 
&  &  &  &  &  \checkmark &  &  & &     \\
Extensive margin 
&  &  &  &  &  &  \checkmark &  &  &    \\
Intensive margin with Lee bounds
&  &  &  &  &  &  & \checkmark &  &    \\
Exclude family firms
&  &  &  &  &  &  &   & \checkmark &  \\
Supp. patent linkage, all firms
&  &  &  &  &  &  &   &  & \checkmark  \\

\bottomrule \bottomrule
\end{tabular*}
\end{threeparttable}

\vspace{0.5em}

\footnotesize
\justifying
\noindent\textit{Notes:}
This table presents event-study estimates of the effects of dual-class recapitalization (Panel A) and stock unification (Panel B) on innovative output from a version of Eq. (\ref{eq:dd}) which uses groups of event-time indicators, and the inverse hyperbolic sine-transformed number of citations the firm's patents receive one, two, or three years ahead as dependent variable. ``$\tau$'' refers to the number of years following recapitalization or unification while ``Pre-event'' refers to the period from $\tau =$ -5 to -2. Column (1) shows the baseline estimates. Each of columns (2) through (9) uses an identical specification to column (1) except for one change to the baseline. Column (2) employs a matched sample. Column (3) excludes dual-class firms that are equity carveouts. Column (4) performs the event-study analysis using the \citet{sunabraham2021eventstudy} estimator. Column (5) performs the DD regression weighted by the log of total assets. Columns (6)-(7) examine the extensive and the intensive margins of innovation separately. Column (7) estimates [lower, upper bounds] on the intensive margin effect following \cite{lee2009bounds}. Column (8) in Panel B excludes unification events in which a founder or controlling family member steps down from management between two years before and after the unification. Column (9) uses the supplementary patent linkage from \citet{ma2025obsolescence} for all firms, instead of only ever-dual class firms. This change does not affect the sample in Panel B (unification), which consists of ever-dual class firms. Standard errors clustered at the firm level are reported in parentheses, except for column (2) which clusters at the matched pair level. *, **, and *** represent significance at the 10\%, 5\%, and 1\% levels, respectively.

\end{table}
\end{landscape}

\clearpage
\begin{table}[htbp]
\def\sym#1{\ifmmode^{#1}\else\(^{#1}\)\fi}
    \captionsetup{font = large, justification=centering}
    \caption{\textbf{The Effects of Recapitalization and Unification by Specificity of Investments}}
    \centering
    
\begin{tabular}{l*{4}{c}}
\toprule \toprule
                    \multicolumn{5}{c}{\textbf{Panel A: Tobin's \textit{q}}} \\
\midrule
                    &\multicolumn{2}{c}{Recapitalization}        &\multicolumn{2}{c}{Unification}    \\
\midrule
                    &\multicolumn{1}{c}{(1)}&\multicolumn{1}{c}{(2)}&\multicolumn{1}{c}{(3)}&\multicolumn{1}{c}{(4)}\\
Specificity:            &\multicolumn{1}{c}{High}&\multicolumn{1}{c}{Low}&\multicolumn{1}{c}{High}&\multicolumn{1}{c}{Low}\\
\midrule
Pre-event        &      0.062   &      -0.087   &       -0.122   &       0.048   \\
                    &     (0.112)   &     (0.138)   &     (0.318)   &     (0.414)   \\
\addlinespace
$\tau =$ 0 to 3      &       0.343***  &       0.033   &      -0.851   &       -0.153   \\
                    &     (0.122)   &     (0.118)   &     (0.523)   &     (0.321)   \\
\addlinespace
$\tau =$ 4 to 7    &       0.258*   &     0.096   &       -1.043**   &       -0.409  \\
                    &     (0.151)   &     (0.164)   &     (0.520)   &     (0.385)   \\
\addlinespace
$\tau =$ 8 to 11    &       0.059   &      -0.003  &       -0.649 &       -0.306  \\
                    &     (0.173)   &     (0.155)   &     (0.491)   &     (0.564)   \\
\midrule

           \multicolumn{5}{c}{} \\

\toprule
                    \multicolumn{5}{c}{\textbf{Panel B: IHS(citations)}} \\
\midrule
                    &\multicolumn{2}{c}{Recapitalization}        &\multicolumn{2}{c}{Unification}    \\
 \midrule
                    &\multicolumn{1}{c}{(1)}&\multicolumn{1}{c}{(2)}&\multicolumn{1}{c}{(3)}&\multicolumn{1}{c}{(4)}\\
Specificity:               &\multicolumn{1}{c}{High}&\multicolumn{1}{c}{Low}&\multicolumn{1}{c}{High}&\multicolumn{1}{c}{Low}\\

\midrule

Pre-event        &      0.133   &      0.160   &       -0.004  &       -0.003   \\
                    &     (0.193)   &     (0.158)   &     (0.241)   &     (0.245)   \\
\addlinespace
$\tau =$ 0 to 3      &       0.224  &       0.222  &      -0.640**   &       -0.116   \\
                    &     (0.239)   &     (0.209)   &     (0.317)   &     (0.327)   \\
\addlinespace
$\tau =$ 4 to 7    &       0.373   &     0.227   &       -1.058**   &       -0.349  \\
                    &     (0.306)   &     (0.235)   &     (0.437)   &     (0.512)   \\
\addlinespace
$\tau =$ 8 to 11    &       0.443   &      0.676**  &       -1.433** &       -0.107  \\
                    &     (0.293)   &     (0.269)   &     (0.718)   &     (0.574)   \\
\bottomrule \bottomrule
\end{tabular}
\begin{tablenotes}
\small
\item \textit{Notes:} This table presents event-study estimates of the effects of dual-class recapitalization and stock unification on Tobin's $q$ and innovative output on split samples by the specificity of investments from a version of Eq. (\ref{eq:dd}) which uses groups of event-time indicators. ``$\tau$'' refers to the number of years following recapitalization or unification while ``Pre-event'' refers to the period from $\tau =$ -5 to -2. See Section \ref{sec:redeployability} for details of measuring investment specificity that combines capital and labor specificity. Panels A and B show estimates for Tobin's $q$ and the inverse hyperbolic sine (IHS)-transformed number of citations the firm's patents receive one, two, or three years ahead, respectively. Columns (1)-(2) show the results for recapitalization  and columns (3)-(4) show the results for unification. Columns (1) and (3) report each result for firms with above-median specificity, and columns (2) and (4) report each result for firms with below-median specificity. Standard errors clustered at the firm level are reported in parentheses. *, **, and *** represent significance at the 10\%, 5\%, and 1\% levels, respectively.
\end{tablenotes}
\label{tab:dd_redeployability}
\end{table}


\begin{table}[]
    \centering
    \small
    \captionsetup{font = large, justification=centering}
    \caption{\textbf{Investment-$q$ Sensitivity for Dual-Class vs. Single-Class Firms over Maturity}}
    {
\def\sym#1{\ifmmode^{#1}\else\(^{#1}\)\fi}
\begin{tabular}{l*{3}{c}}
\toprule \toprule
                    &\multicolumn{1}{c}{(1)}&\multicolumn{1}{c}{(2)}&\multicolumn{1}{c}{(3)}\\
                    \multicolumn{1}{c}{Dependent variable:} &\multicolumn{3}{c}{Capex}\\
                    \multicolumn{1}{c}{Sample:} &\multicolumn{1}{c}{Full}&\multicolumn{1}{c}{Young}&\multicolumn{1}{c}{Mature}\\
\midrule
$ q$             &       0.003\sym{***}&       0.003\sym{***}&       0.003\sym{***}\\
                    &     (0.000)         &     (0.001)         &     (0.001)         \\
\addlinespace
$q$ $\times$ Dual   &      -0.001         &       0.003         &      -0.003\sym{**} \\
                    &     (0.001)         &     (0.002)         &     (0.001)         \\
\midrule
Firm $\times$ Dual fixed effects&           Y         &           Y         &           Y         \\
Year $\times$ Dual fixed effects&           Y         &           Y         &           Y         \\
Cash flow $\times$ Dual controls  &           Y         &           Y         &           Y         \\
$R^2$               &       0.634         &       0.730         &       0.645         \\
Observations        &      32,038       &      17,545        &      14,493        \\
\midrule 
$q$ $\times$ Dual $\times$ (Mature -- Young)   &  \multicolumn{3}{c}{-0.006**} \\
$p$-value   &  \multicolumn{3}{c}{0.014} \\
\bottomrule \bottomrule
\end{tabular}
}
\begin{tablenotes}
\small
\item \textit{Notes:} This table examines the relationship between dual-class shares and sensitivities of investment to investment opportunities conditional on firm maturity. The dependent variable ``Capex'' is capital expenditures divided by lagged book assets. ``Dual'' is an indicator variable equal to one if a firm-year has multiple classes of shares with different voting rights, and zero otherwise. The full sample includes firm-years with sales growth in the first quintile of the distribution over 1971--2022. The ``Young'' firm sample includes firm-years with ages less than or equal to 12 years, the sample median for dual-class firms, while the ``Mature'' firm sample includes firm-years with ages greater than 12 years. ``$q$'' is Tobin's $q$ as defined in Table \ref{tab:descriptive_stats}. ``Cash flow'' is net income plus depreciation and amortization divided by lagged book assets. All regressions include ``Cash flow'' interacted with ``Dual,'' as well as the standalone variable, as controls. Standard errors clustered at the firm level are reported in parentheses. *, **, and *** represent significance at the 10\%, 5\%, and 1\% levels, respectively.
\end{tablenotes}
\label{tab:invest_q_sensitivity}
\end{table}

\clearpage

\begin{table}[]
    \centering
    \small
    \captionsetup{font = large, justification=centering}
    \caption{\textbf{Voting Premium over Firm Maturity}}
\label{tab:voting_premium0}

{
\def\sym#1{\ifmmode^{#1}\else\(^{#1}\)\fi}
\begin{tabular}{lccc}
\toprule \toprule
                    & (1) & (2) & (3) \\ 
\multicolumn{1}{c}{Dependent variable:} 
                    & \multicolumn{3}{c}{Voting premium} \\ 
\midrule
${\Large \mathbb{1}}[\text{age in quartile 2}]$
                    & 0.032        & 0.025        & 0.022        \\
                    & (0.019)      & (0.015)      & (0.015)      \\
${\Large \mathbb{1}}[\text{age in quartile 3}]$
                    & 0.034 & 0.048\sym{*} & 0.043 \\
                    & (0.023)      & (0.025)      & (0.027)      \\
${\Large \mathbb{1}}[\text{age in quartile 4}]$
                    & 0.054\sym{**}& 0.096\sym{**}& 0.085\sym{*}\\
                    & (0.026)      & (0.042)      & (0.048)      \\
Log market equity
                    & -0.010\sym{*}& -0.024\sym{***}& -0.025\sym{***}\\
                    & (0.006)      & (0.007)      & (0.007)      \\
Shapley value
                    & 0.057\sym{***}& -0.000        & 0.002        \\
                    & (0.022)      & (0.021)      & (0.022)      \\
Log volume (sup. to inf.)
                    & 0.006 & 0.005        & 0.006        \\
                    & (0.006)      & (0.005)      & (0.005)      \\
\midrule
Firm fixed effects   & N & Y & Y \\
Year fixed effects   & Y & Y & Y \\
$R^2$                & 0.073 & 0.535 & 0.535 \\
Observations         & 1,182 & 1,182 & 1,182 \\
\bottomrule \bottomrule
\end{tabular}
}
\begin{tablenotes}
\small
\item \textit{Notes:} This table examines how the voting premium changes with firm maturity using a sample of dual-class firms for which information on stock prices and trading volume is available from CRSP for both inferior- and superior-voting shares, and ownership structure information is available from 1980 through 2022. The dependent variable ``Voting premium'' is defined as $(P_A - P_B)/(P_B - rP_A)$, where $P_A$ ($P_B$) is the price of the superior- (inferior-) voting shares and $r$ < 1 is the relative number of votes between the inferior- and superior-voting shares. Firm age is measured as years since IPO and grouped into quartiles. Firms in the first age quartile serve as the omitted group. ``Log market equity'' is the log of market equity; ``Shapley value'' is the Shapley value of the ownership structure; ``Log volume (sup. to inf.)'' is the log of the ratio of trading volume between the superior- and inferior-voting shares. In column (2), we include year fixed effects using \cite{deaton1997}'s normalization. In column (3), we use an alternative restriction that normalizes one year fixed effect to zero. Standard errors clustered at the firm level are reported in parentheses. *, **, and *** denote statistical significance at the 10\%, 5\%, and 1\% levels, respectively.
\end{tablenotes}
\end{table}

\clearpage

\clearpage
\setstretch{1}
\newpage
\newgeometry{left=1cm,bottom=2cm, right=1cm, top=1cm}
\setcounter{figure}{0} 
\renewcommand{\thefigure}{A\arabic{figure}}

\clearpage

\begin{figure}[t]
    \centering
    \captionsetup{font = large, justification=centering}
    \caption{\\\textbf{Event Study of Dual-Class Stock Unification by Sunset Provision: Tobin's \textit{q}}}
    \includegraphics[width = 0.85\textwidth]{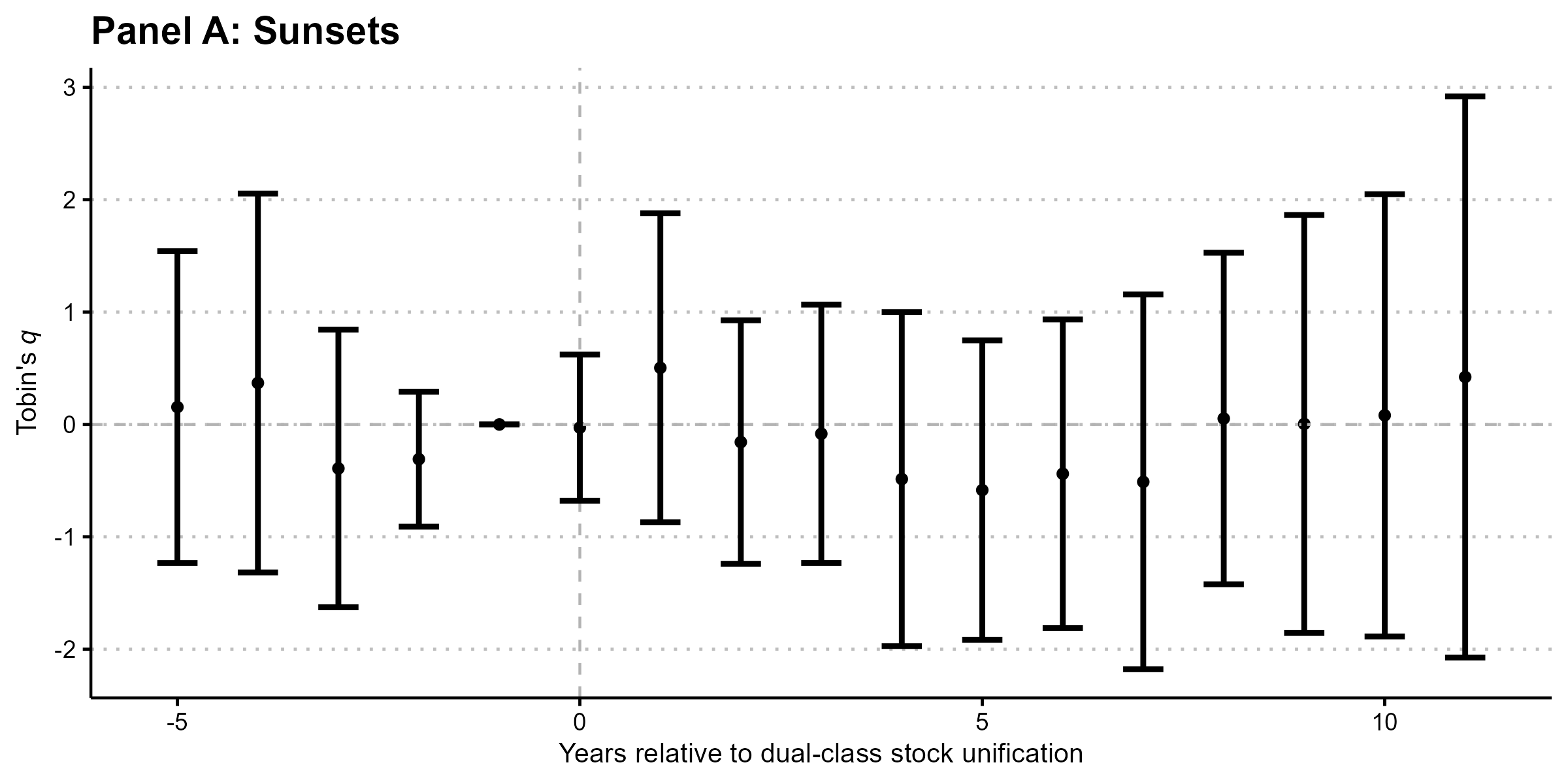}
    \includegraphics[width = 0.85\textwidth]{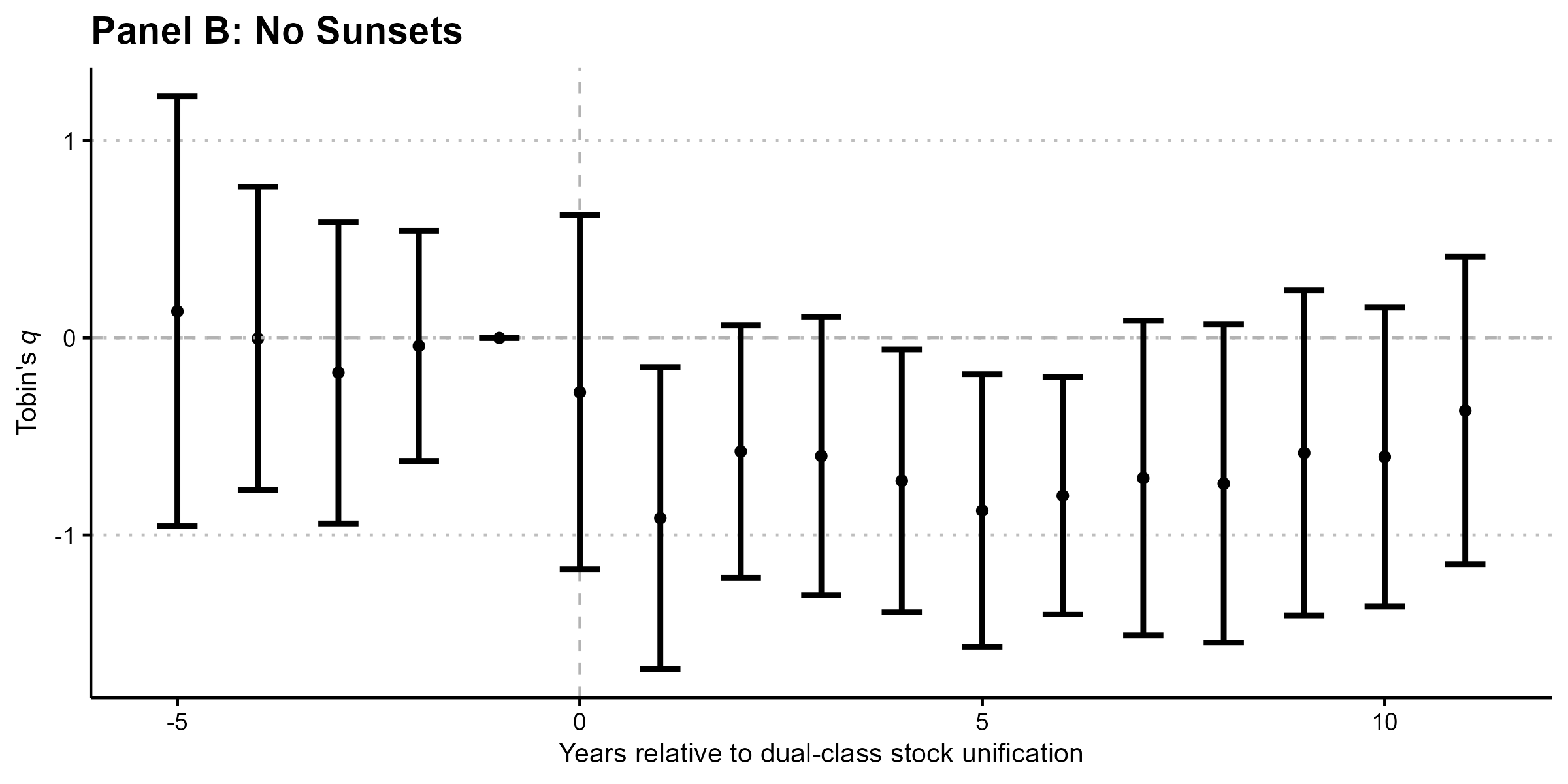}
    \captionsetup{font = small, justification=justified}
    \caption*{\textit{Notes}: This figure plots the stacked dynamic difference-in-differences event-study estimates of the effect of dual-class stock unification on Tobin's \textit{q} for firms with sunset provisions (Panel A) and without (Panel B) from Eq. (\ref{eq:dd}). The event study requires that both the treated and control groups of firms exist during the $[t-2, t+2]$ window. The markers on the lines show point estimates. The 90\% confidence intervals in vertical lines are based on standard errors clustered at the firm level.}
    \label{fig:appendix_event_study_sunsets}
\end{figure}{}

\clearpage
\setstretch{1}

\newpage
\newgeometry{left=1cm,bottom=2cm, right=1cm, top=1cm}
\setcounter{table}{0} 
\renewcommand{\thetable}{A\arabic{table}} 

\clearpage

\begin{table}[]
    \small
    \captionsetup{font = large, justification=centering}
    \caption{\textbf{Post-IPO Within-Firm Dynamics of Valuation and Innovative Output}}
\begin{center}
    
\begin{tabular}{l*{2}{c}}
\toprule \toprule
                    &\multicolumn{1}{c}{(1)}&\multicolumn{1}{c}{(2)}\\
        \multicolumn{1}{c}{Dependent variable:}            &\multicolumn{1}{c}{Tobin's \textit{q}}&\multicolumn{1}{c}{IHS(citations)}\\
\midrule
$\text{Dual} \times {\Large \mathbb{1}}[\text{age in quartile 2}]$    &      -0.518***&     0.040   \\
                    &     (0.194)   &    (0.110)   \\
\addlinespace
$\text{Dual} \times {\Large \mathbb{1}}[\text{age in quartile 3}]$      &      -0.691** &       0.546**  \\
                    &     (0.303)   &     (0.270)   \\
\addlinespace
$\text{Dual} \times {\Large \mathbb{1}}[\text{age in quartile 4}]$       &      -0.783*  &       1.343***\\
                    &     (0.461)   &     (0.335)   \\
\midrule
Firm fixed effects &    Y        &   Y \\
SIC3 $\times$ Year $\times$ IPO cohort fixed effects &    Y        &   Y \\
Firm-level controls &    Y        &   Y \\
$R^2$ & 0.706 & 0.836 \\
Observations        &      100,021   &      119,263   \\
\bottomrule \bottomrule
\end{tabular}

\begin{tablenotes}
\small
\item \textit{Notes:} This table presents the results of estimating within-firm dynamics of Tobin's \textit{q} and innovative output for firms that IPO with dual-class shares relative to firms that IPO with single-class shares. We estimate the following equation:

\begin{equation*}
    y_{ijth} = \alpha_{i} + \alpha_{jth} + \sum_{\substack{2}}^{4} \lambda_{n}{\Large \mathbb{1}}[\text{age in quartile}\,  n]_{it}  + \sum_{\substack{2}}^{4} \delta_{n}\textit{Dual}_{i} \times {\Large \mathbb{1}}[\text{age in quartile}\,  n]_{it}  + \gamma'\textit{X}_{it} + \epsilon_{ijth},
\end{equation*}

where ${\Large \mathbb{1}}[\text{age in quartile}\,  n]_{it}$ is an indicator variable equal to one if firm $i$'s age in year $t$ is in quartile $n$; $y_{ijth}$ is either Tobin's \textit{q} or the inverse hyperbolic sine (IHS)-transformed number of citations that patents receive one, two, or three years ahead for firm $i$ of IPO cohort $h$ in three-digit SIC industry $j$ in year $t$; $\alpha_{i}$ and $\alpha_{jth}$ represent firm $i$ and industry $j$ by year $t$ by IPO cohort $h$ fixed effects; $\textit{Dual}_{i}$ is an indicator variable equal to one if firm $i$ IPOs with dual-class shares and zero if with single-class shares; $\textit{X}_{it}$ is a vector of control variables in Eq. (\ref{eq:q}); and $\epsilon_{ijth}$ represents random errors clustered at the firm level. Firm-year observations are dropped from the sample upon switching from one type of governance structure to the other. Column (1) reports results for Tobin's \textit{q} and column (2) for IHS(citations). Standard errors clustered at the firm level are reported in parentheses. *, **, and *** represent significance at the 10\%, 5\%, and 1\% levels, respectively.
\end{tablenotes}
\label{tab:ipo_dynamics}
\end{center}
\end{table}


\numberwithin{equation}{section} 
\renewcommand\thefigure{A.\arabic{figure}} 
\renewcommand\thetable{A.\arabic{table}} 
\setcounter{figure}{0} 
\setcounter{table}{0} 
    \clearpage
    \begin{appendices}

    \section{Theory Appendix\label{appendix:theory}}

\justifying
This appendix provides a detailed analysis of the theoretical model
presented in Section \ref{sec:theory}. We consider a generalized governance
structures. Under governance structure $n>0$, the founder can be fired
only after period $n$. Thus, $n=0$ corresponds to single-class shares, and $%
n=\infty $ to dual-class shares.

Once the founder exerts effort, his impact on the shareholder value is only
through the consumption of private benefits. Therefore, shareholders fire
the founder as soon as they can, at period $t=n$ (the founder has no
incentives to resign voluntarily). Thus, $b_{t}=0$ for all $t\geq n$. The
share price at time $t\geq 0$ given effort $x$ is%
\begin{equation*}
p_{n}\left( x,t\right) =x\left[ D\left( t\right) -B_{n}\left( t\right) %
\right] 
\end{equation*}%
where%
\begin{eqnarray*}
D\left( t\right)  &\equiv &\int_{t}^{\infty }e^{-r\left( s-t\right) }e^{gs}ds
\\
\text{\ }B_{n}\left( t\right)  &\equiv &\int_{t}^{\infty }e^{-r\left(
s-t\right) }e^{gs}b_{s}\cdot 1_{s<n}ds,
\end{eqnarray*}%
and $r>g$ is the firm's cost of capital and $g>0$ is the firm's growth rate.
Notice 
\begin{eqnarray*}
D\left( t\right)  &=&\frac{e^{gt}}{r-g} \\
B_{n}\left( t\right)  &=&\phi e^{gt}\left[ \frac{1-e^{-\left( r-g\right)
\left( n-t\right) }}{r-g}-e^{-\tau t}\frac{1-e^{-\left( r-g+\tau \right)
\left( n-t\right) }}{r-g+\tau }\right] \cdot 1_{t<n}.
\end{eqnarray*}%
Moreover, $B_{n}\left( t\right) \in \lbrack 0,D_{t})$ and it increases in $n$
for $t<n$.

\subsection{Proofs of main results}

\begin{lemma}
\label{res:effort} Given governance structure $n$, the founder's optimal
effort at $t=0$ is 
\begin{equation}
x_{n}^{\ast }=\frac{\alpha \Lambda D\left( 0\right) +\left(
\gamma -\alpha \Lambda \right) B_{n}\left( 0\right) }{\kappa },
\label{effort}
\end{equation}%
and $x_{n}^{\ast }$ increases in $n$.
\end{lemma}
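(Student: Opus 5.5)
We will write the founder's time-0 payoff as a function of effort $x$, taking the governance structure $n$ and the (rational) market's pricing rule as given, and then maximize it pointwise. The key modeling observation is that effort is unobservable, so the market prices the shares the founder sells at the conjectured equilibrium effort $\hat{x}$, not at the actual $x$. The founder therefore takes the sale proceeds as fixed when choosing $x$.

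Concretely, the founder's utility decomposes into three parts:
\begin{itemize}
\item the proceeds from selling $\alpha(1-\Lambda)$ shares at price $p_n(\hat{x},0)$, which are independent of the actual $x$;
\item the value of the retained stake $\alpha\Lambda$, which is worth $\alpha\Lambda\, x\,[D(0)-B_n(0)]$ because shareholders receive dividends net of diversion;
\item the private benefit of diversion, $\gamma x B_n(0)$, since diversion occurs at rate $b_s$ only while $s<n$ and is discounted at $r$.
\end{itemize}

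Net of the effort cost, the objective is
\begin{equation*}
U(x)=\text{const}+\alpha\Lambda\, x\,D(0)+(\gamma-\alpha\Lambda)\,x\,B_n(0)-\tfrac{1}{2}\kappa x^{2}.
\end{equation*}
This is strictly concave, so the first-order condition delivers the unique maximizer in (\ref{effort}). Nonnegativity holds because $\gamma>\alpha\Lambda$ and $B_n(0)\ge 0$, so the constraint $x\ge 0$ does not bind. In equilibrium $\hat{x}=x_n^*$ is consistent, and the price is $p_n(x_n^*,t)$.

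We also need to justify that the founder fires at the earliest date under governance $n$, so that diversion runs exactly on $[0,n)$. This follows because the founder never resigns: diversion is individually rational, since $\gamma>\alpha\Lambda$ makes his marginal gain from diverting exceed his share of the dividend loss. This is why the truncation $1_{s<n}$ in $B_n$ is correct.

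For monotonicity, since $\gamma-\alpha\Lambda>0$ and $D(0)$ does not depend on $n$, it suffices that $B_n(0)$ increases in $n$. This holds directly from the integral definition: the integrand $e^{-rs}e^{gs}b_s$ is positive for $s>0$, and enlarging the domain $[0,n)$ adds positive mass. This matches the claim already noted before the lemma. Strict monotonicity follows since $b_s>0$ for all $s>0$. The case $n=\infty$ is handled as a limit, where convergence of the integral is guaranteed by $r>g$.

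The main point needing care is the treatment of the sale price as insensitive to the founder's actual deviation. This is the step where unobservability enters, and without it the $\alpha(1-\Lambda)$ term would add to the marginal incentive. Once this is settled, the rest is routine.
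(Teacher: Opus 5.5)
Your proposal is correct and follows the paper's proof: because effort is unobservable, the sale proceeds $\alpha(1-\Lambda)p_n(E[x],0)$ do not depend on the actual $x$. The remaining payoff is $x[\alpha\Lambda D(0)+(\gamma-\alpha\Lambda)B_n(0)]-\frac{1}{2}\kappa x^2$, so the first-order condition gives the formula, and monotonicity follows from $B_n(0)$ increasing in $n$ together with $\gamma>\alpha\Lambda$.

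One small slip (it is shareholders who fire the founder, not the other way round): the truncation at $n$ needs more than the founder never resigning. Shareholders must also remove him as soon as they can, since after effort is sunk he affects their value only through diversion; the paper states this premise just before the lemma.
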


\begin{proof}
The founder's expected utility at $t=0$ is%
\begin{equation*}
u_{t=0}\left( x,n\right) =x\left[ \alpha \Lambda D\left( 0\right) +\left(
\gamma -\alpha \Lambda \right) B_{n}\left( 0\right) \right] +\alpha \left(
1-\Lambda \right) p_{n}\left( E\left[ x\right] ,0\right) -\frac{1}{2}\kappa
x^{2}.
\end{equation*}%
Since effort is not observed by outside shareholders, the proceeds from
selling shares do not depend on the actual choice of effort, but only on the
expectations of shareholders. The founder's optimal effort is given by $%
x_{n}^{\ast } $. Since $B_{n}\left( 0\right) $ increases in $n
$ and $\gamma >\alpha \Lambda $, $x_{n}^{\ast }$ increases
in $n$. That is, the founder exerts more effort when he has more job
security. Specifically, the founder exerts more effort under dual-class
shares.
\end{proof}

\ \ 

\begin{proof}[\textbf{Proof of Proposition \protect\ref{price_dynamic}}]
Let%
\begin{equation*}
p_{n}^{\ast }\left( t\right) =x_{n}^{\ast }\left[ D\left( t\right)
-B_{n}\left( t\right) \right] .
\end{equation*}%
Recall $B_{0}\left( t\right) =0$. Let $n>0$. Then, 
\begin{eqnarray*}
p_{0}^{\ast }\left( t\right)  &<&p_{n}^{\ast }\left( t\right)
\Leftrightarrow  \\
\frac{\alpha \Lambda D\left( 0\right) +\left( \gamma -\alpha \Lambda \right)
B_{0}\left( 0\right) }{\kappa }\left[ D\left( t\right) -B_{0}\left( t\right) %
\right]  &<&\frac{\alpha \Lambda D\left( 0\right) +\left( \gamma -\alpha
\Lambda \right) B_{n}\left( 0\right) }{\kappa }\left[ D\left( t\right)
-B_{n}\left( t\right) \right] \Leftrightarrow  \\
\frac{D\left( 0\right) }{B_{n}\left( 0\right) (\frac{D\left( t\right) }{%
B_{n}\left( t\right) }-1)} &<&\frac{\gamma }{\alpha \Lambda }-1.
\end{eqnarray*}%
Let $n=\infty $. Then%
\begin{equation*}
B_{\infty }\left( t\right) =\phi e^{gt}\left[ \frac{1}{r-g}-\frac{e^{-\tau t}%
}{r-g+\tau }\right] 
\end{equation*}%
and $\frac{D\left( t\right) }{B_{\infty }\left( t\right) }=\frac{1}{\phi }%
\frac{1}{1-\frac{r-g}{r-g+\tau }e^{-\tau t}}$ decreases in $t$, and hence
the left-hand side above increases in $t$. Thus, to prove the result it's
sufficient to require $p_{0}^{\ast }\left( 0\right) <p_{\infty }^{\ast
}\left( 0\right) $ and $p_{0}^{\ast }\left( \infty \right) >p_{\infty
}^{\ast }\left( \infty \right) $. The former holds if and only if%
\begin{equation*}
\frac{D\left( 0\right) }{D\left( 0\right) -B_{\infty }\left( 0\right) }<%
\frac{\gamma }{\alpha \Lambda }-1\Leftrightarrow \phi <\overline{\phi }%
\equiv \frac{\frac{\gamma }{\alpha \Lambda }-2}{\frac{\gamma }{\alpha
\Lambda }-1}\frac{r-g+\tau }{\tau }.
\end{equation*}%
Notice $\overline{\phi }>0$ only if $\frac{\gamma }{\alpha \Lambda }%
-2>0\Leftrightarrow \gamma -\alpha \Lambda >\alpha \Lambda $. The latter
holds if and only if%
\begin{equation*}
p_{0}^{\ast }\left( \infty \right) >p_{\infty }^{\ast }\left( \infty \right)
\Leftrightarrow \frac{D\left( 0\right) }{B_{\infty }\left( 0\right) \left(
\lim_{t\rightarrow \infty }\frac{D\left( t\right) }{B_{\infty }\left(
t\right) }-1\right) }>\frac{\gamma }{\alpha \Lambda }-1.
\end{equation*}%
Notice $\lim_{t\rightarrow \infty }\frac{D\left( t\right) }{B_{\infty
}\left( t\right) }=\frac{1}{\phi }$. Thus, we require%
\begin{equation*}
\phi >\underline{\phi }\equiv \frac{\frac{\gamma }{\alpha \Lambda }-2}{\frac{%
\gamma }{\alpha \Lambda }-1}-\frac{r-g}{\tau }\frac{\alpha \Lambda }{\gamma
-\alpha \Lambda}.
\end{equation*}%
Notice $\underline{\phi }<\min \left\{ 1,\overline{\phi }\right\} $ as
required.
\end{proof}

\ \ 

\begin{proof}[\textbf{Proof of Proposition \protect\ref{optimal_gov}}]
The founder's utility at $t=0$ is%
\begin{eqnarray*}
u_{n}^{\ast } &=&x_{n}^{\ast }\left[ \alpha \Lambda D\left( 0\right) +\left(
\gamma -\alpha \Lambda \right) B_{n}\left( 0\right) \right] +\alpha \left(
1-\Lambda \right) p_{n}^{\ast }\left( 0\right) -\frac{1}{2}\kappa
x_{n}^{\ast 2} \\
&=&\alpha \left( 1-\Lambda \right) p_{n}^{\ast }\left( 0\right) +\frac{1}{2}%
\kappa x_{n}^{\ast 2}.
\end{eqnarray*}%
The founder prefers dual class if and only if%
\begin{equation*}
u_{\infty }^{\ast }>u_{0}^{\ast }\Leftrightarrow p_{\infty }^{\ast }\left(
0\right) -p_{0}^{\ast }\left( 0\right) >\frac{x_{0}^{\ast 2}-x_{\infty
}^{\ast 2}}{2\alpha \left( 1-\Lambda \right) }\kappa .
\end{equation*}%
Since $x_{0}^{\ast }<x_{\infty }^{\ast }$, if $p_{\infty }^{\ast }\left(
0\right) >p_{0}^{\ast }\left( 0\right) $ then it must be $u_{\infty }^{\ast
}>u_{0}^{\ast }$. Substituting in the terms for $x_{0}^{\ast },$ $x_{\infty
}^{\ast },$ $p_{\infty }^{\ast }\left( 0\right) ,$ $p_{0}^{\ast }\left(
0\right) $, the inequality above holds if and only if%
\begin{eqnarray*}
x_{\infty }^{\ast }\left[ D\left( 0\right) -B_{\infty }\left( 0\right) %
\right] -x_{0}^{\ast }D\left( 0\right)  &>&\frac{x_{0}^{\ast 2}-x_{\infty
}^{\ast 2}}{2\alpha \left( 1-\Lambda \right) }\kappa \Leftrightarrow  \\
D\left( 0\right) -\frac{x_{\infty }^{\ast }}{x_{\infty }^{\ast }-x_{0}^{\ast
}}B_{\infty }\left( 0\right)  &>&-\frac{x_{0}^{\ast }+x_{\infty }^{\ast }}{%
2\alpha \left( 1-\Lambda \right) }\kappa \Leftrightarrow  \\
\frac{1}{1-\Lambda }-\frac{\Lambda }{\frac{\gamma }{\alpha }-\Lambda }
&>&\left( 1-\frac{1}{2}\frac{\frac{\gamma }{\alpha }-\Lambda }{1-\Lambda }%
\right) \frac{B_{\infty }\left( 0\right) }{D\left( 0\right) }\Leftrightarrow 
\end{eqnarray*}%
\begin{equation}
\frac{1}{1-\Lambda }-\frac{\Lambda }{\frac{\gamma }{\alpha }-\Lambda }%
>\left( 1-\frac{1}{2}\frac{\frac{\gamma }{\alpha }-\Lambda }{1-\Lambda }%
\right) \phi \frac{\tau }{r-g+\tau }.  \label{eq}
\end{equation}%
The left-hand side of (\ref{eq}) increases in $\frac{\gamma }{\alpha }$ and
the right-hand side decreases in $\frac{\gamma }{\alpha }$. At $\frac{\gamma 
}{\alpha }=1$ (\ref{eq}) holds. Thus, $\frac{\gamma }{\alpha }\geq
1\Rightarrow $ $u_{\infty }^{\ast }>u_{0}^{\ast }$. Notice that $\frac{%
\gamma }{\alpha }<1\Rightarrow 1-\frac{1}{2}\frac{\frac{\gamma }{\alpha }%
-\Lambda }{1-\Lambda }>0$ and $\frac{1}{1-\Lambda }-\frac{\Lambda }{\frac{%
\gamma }{\alpha }-\Lambda }>0\Leftrightarrow \frac{\gamma }{\alpha }%
>1-\left( 1-\Lambda \right) ^{2}$, where $1-\left( 1-\Lambda \right)
^{2}<2\Lambda $. Thus, if $2\Lambda <$ $\frac{\gamma }{\alpha }<1$ then $%
u_{\infty }^{\ast }>u_{0}^{\ast }$ if and only if%
\begin{equation*}
\phi <\phi ^{\ast }\equiv \frac{\tau +r-g}{\tau }\frac{\frac{1}{1-\Lambda }-%
\frac{\Lambda }{\frac{\gamma }{\alpha }-\Lambda }}{1-\frac{1}{2}\frac{\frac{%
\gamma }{\alpha }-\Lambda }{1-\Lambda }},
\end{equation*}%
where: $\phi ^{\ast }>0$, $\phi ^{\ast }$ increases in $\frac{\gamma }{%
\alpha }$, $\lim_{\frac{\gamma }{\alpha }\searrow 2\Lambda }\phi ^{\ast }=%
\frac{\Lambda }{\frac{2}{3}-\Lambda }>0$ for $2\Lambda <1$. Moreover, $\phi
^{\ast }$ decreases in $\tau $. Finally, note that $\phi ^{\ast }>\overline{%
\phi }$ in this range.
\end{proof}

\ \ 

\begin{proof}[\textbf{Proof of Proposition \protect\ref{selection_experiment}}]
According to the proof of Proposition \ref{optimal_gov}, when the founder is making the governance decision, firm $i$ adopts a
dual-class structure if and only if $\phi _{i}<\phi _{i}^{\ast }$. If instead shareholders  make the governance decision, then based on the proof of Proposition \protect\ref{price_dynamic}, they would choose a dual-class structure if and only if
\begin{equation*}
p_{0,i}^{\ast }\left( \infty \right) >p_{\infty ,i}^{\ast }\left( \infty \right)
\Leftrightarrow \phi_i <\overline{\phi_i }%
\equiv \frac{\frac{\gamma }{\alpha \Lambda }-2}{\frac{\gamma }{\alpha
\Lambda }-1}\frac{r-g+\tau }{\tau },
\end{equation*}%
and importantly, $\overline{\phi }$ decreases in $\tau$, which is the only property of the cutoff that we exploit for the proof. So all the arguments below hold for cases where shareholders make the governance decision.

Consider two
firms: firm $d$ chooses dual-class shares (i.e., $\phi _{d}<\phi
_{d}^{\ast }$), while firm $s$ chooses single-class shares (i.e., $\phi
_{s}>\phi _{s}^{\ast }$). We examine whether this selection pattern can
arise purely from one-dimensional heterogeneity, that is, whether firms can
differ along only a single dimension from the set $\left\{ g,r,\tau ,\phi
,\alpha ,\Lambda ,\gamma ,\kappa \right\} $ while still making different
governance choices. To test this, we conduct two counterfactual exercises.
In the first scenario, we assume dual-class firms are unexpectedly forced to
convert to single-class structure after their initial choice. In the second
scenario, we assume single-class firms are unexpectedly forced to adopt
dual-class structure. These exercises allow us to isolate whether observed
performance differences reflect causal effects of governance structure or
merely selection on firm characteristics.

\begin{enumerate}
\item \textbf{Dual-class firms are forced to be single-class firms. }In this
case, firm $d$ is forced to be single-class in spite of its choice to be
dual-class. Under this assumption, and according to Lemma \ref{res:effort}, the founder of firm $i\in \left\{ d,s\right\} $ chooses
the effort level%
\begin{equation*}
x_{\text{single,i}}^{\ast }\equiv x_{0,i}^{\ast }=\frac{\alpha _{i}\Lambda
_{i}D_{i}\left( 0\right) }{\kappa _{i}}=\frac{1}{\kappa _{i}}\frac{\alpha
_{i}\Lambda _{i}}{r_{i}-g_{i}},
\end{equation*}

and the share price at time $t$ is%
\begin{equation*}
p_{\text{single,i}}^{\ast }\left( t\right) \equiv p_{0,i}^{\ast }\left(
t\right) =x_{\text{single,i}}^{\ast }\frac{e^{g_{i}t}}{r_{i}-g_{i}}.
\end{equation*}%
To replicate the pattern in Proposition \ref{price_dynamic}, we ask whether
there exists $T^{\ast }\in \left( 0,\infty \right) $ such that $p_{\text{%
single,d}}^{\ast }\left( t\right) >p_{\text{single,s}}^{\ast }\left(
t\right) $ if $t<T^{\ast }$ and $p_{\text{single,d}}^{\ast }\left( t\right)
<p_{\text{single,s}}^{\ast }\left( t\right) $ otherwise.

An increase in any parameter from $\left\{ g,r,\tau ,\phi ,\alpha ,\Lambda
,\gamma ,\kappa \right\} $ affects $p_{\text{single,i}}^{\ast }\left(
t\right) $ in one of three ways: (i) it increases $p_{\text{single,i}}^{\ast
}\left( t\right) $ for all $t$, (as with $g$, $\alpha $, and $\Lambda $),
(ii) it decreases $p_{\text{single,i}}^{\ast }\left( t\right) $ for all $t$
(as with $r$ and $\kappa $), or (iii) it has no effect at all (as with $\tau
,$ $\phi ,$ and $\gamma $). Therefore, if firms differ along only one of
these dimensions, the share price ranking between dual-class and
single-class firms must be uniform across all time periods. Consequently, if
we observe performance reversals over time, such patterns cannot be
explained by one-dimensional heterogeneity alone.

\item \textbf{Single-class firms are forced to be dual-class firms. }In this
case, firm $s$ is forced to be dual-class in spite of its choice to be
single-class. Under this assumption, and according to Lemma \ref%
{res:effort}, the founder of firm $i\in \left\{ d,s\right\} $ chooses
the effort level%
\begin{eqnarray*}
x_{\text{dual,i}}^{\ast } &\equiv &x_{\infty ,i}^{\ast } =%
\frac{\alpha _{i}\Lambda _{i}D_{i}\left( 0\right) +\left( \gamma _{i}-\alpha
_{i}\Lambda _{i}\right) B_{n,i}\left( 0\right) }{\kappa _{i}} \\
&=&\frac{1}{\kappa _{i}}\frac{\alpha _{i}\Lambda _{i}}{r_{i}-g_{i}}+\frac{%
\left( \gamma _{i}-\alpha _{i}\Lambda _{i}\right) B_{n,i}\left( 0\right) }{%
\kappa _{i}} \\
&=&\frac{1}{\kappa _{i}}\frac{\alpha _{i}\Lambda _{i}}{r_{i}-g_{i}}+\left(
\gamma _{i}-\alpha _{i}\Lambda _{i}\right) \frac{\phi _{i}}{\kappa _{i}}%
\left[ \frac{1}{r_{i}-g_{i}}-\frac{1}{r_{i}-g_{i}+\tau _{i}}\right] \\
&=&\frac{1}{\kappa _{i}}\frac{\alpha _{i}\Lambda _{i}}{r_{i}-g_{i}}\left(
1+\phi _{i}\frac{\gamma _{i}-\alpha _{i}\Lambda _{i}}{\alpha _{i}\Lambda _{i}%
}\frac{\tau _{i}}{r_{i}-g_{i}+\tau _{i}}\right),
\end{eqnarray*}%
and the share price at time $t$ is%
\begin{eqnarray*}
p_{\text{dual,i}}^{\ast }\left( t\right) &\equiv &p_{\infty ,i}^{\ast
}\left( t\right) =x_{\text{dual,i}}^{\ast }\left[ D_{i}\left( t\right)
-B_{\infty ,i}\left( t\right) \right] \\
&=&x_{\text{dual,i}}^{\ast }\left[ \frac{e^{g_{i}t}}{r_{i}-g_{i}}-\phi
_{i}e^{g_{i}t}\left[ \frac{1}{r_{i}-g_{i}}-\frac{e^{-\tau _{i}t}}{%
r_{i}-g_{i}+\tau _{i}}\right] \right] \\
&=&\frac{1}{\kappa _{i}}\frac{\alpha _{i}\Lambda _{i}}{r_{i}-g_{i}}\left(
1+\phi _{i}\frac{\gamma _{i}-\alpha _{i}\Lambda _{i}}{\alpha _{i}\Lambda _{i}%
}\frac{\tau _{i}}{r_{i}-g_{i}+\tau _{i}}\right) \left( \frac{1-\phi _{i}}{%
r_{i}-g_{i}}+\frac{\phi _{i}e^{-\tau _{i}t}}{r_{i}-g_{i}+\tau _{i}}\right)
e^{g_{i}t}.
\end{eqnarray*}%
To replicate the pattern in Proposition \ref{price_dynamic}, we ask whether
there exists $T^{\ast }\in \left( 0,\infty \right) $ such that $p_{\text{%
dual,d}}^{\ast }\left( t\right) >p_{\text{dual,s}}^{\ast }\left( t\right) $
if $t<T^{\ast }$ and $p_{\text{dual,d}}^{\ast }\left( t\right) <p_{\text{%
dual,s}}^{\ast }\left( t\right) $ otherwise.

An increase in any parameter from $\left\{ r,g,\alpha ,\Lambda ,\gamma
,\kappa \right\} $ either increases $p_{\text{single,i}}^{\ast }\left(
t\right) $ for all $t$ (as with $g$, $\alpha $, $\Lambda $, and $\gamma $)
or decreases $p_{\text{single,i}}^{\ast }\left( t\right) $ for all $t$ (as
with $r$ and $\kappa $). Therefore, the selection effect cannot stem from
one-dimensional heterogeneity in these parameters.

\begin{itemize}
\item Consider the effect of parameter $\phi $. Recall firm $i$ chooses
dual-class shares if and only if $\phi _{i}<\phi _{i}^{\ast }$. Therefore,
it must be $\phi _{d}<\phi _{s}$. Therefore, $p_{\text{dual,d}}^{\ast }\left(
t\right) >p_{\text{dual,s}}^{\ast }\left( t\right) $ if and only if%
\begin{eqnarray*}
&&\frac{1}{\kappa }\frac{\alpha \Lambda }{r-g}\left( 1+\phi _{d}\frac{\gamma
-\alpha \Lambda }{\alpha \Lambda }\frac{\tau }{r-g+\tau }\right) \left( 
\frac{1-\phi _{d}}{r-g}+\frac{\phi _{d}e^{-\tau t}}{r-g+\tau }\right) e^{gt}
\\
&>&\frac{1}{\kappa }\frac{\alpha \Lambda }{r-g}\left( 1+\phi _{s}\frac{%
\gamma -\alpha \Lambda }{\alpha \Lambda }\frac{\tau }{r-g+\tau }\right)
\left( \frac{1-\phi _{s}}{r-g}+\frac{\phi _{s}e^{-\tau t}}{r-g+\tau }\right)
e^{gt}\Leftrightarrow \\
\frac{\phi _{s}-\phi _{d}}{r-g}-\frac{\phi _{s}-\phi _{d}}{r-g+\tau }%
e^{-\tau t} &>&\frac{\gamma -\alpha \Lambda }{\alpha \Lambda }\frac{\tau }{%
r-g+\tau }\left[ \frac{\phi _{s}-\phi _{d}-\left( \phi _{s}^{2}-\phi
_{d}^{2}\right) }{r-g}+\left( \phi _{s}^{2}-\phi _{d}^{2}\right) \frac{%
e^{-\tau t}}{r-g+\tau }\right] \Leftrightarrow \\
e^{-\tau t} &<&\frac{r-g+\tau }{r-g}\frac{1+\frac{\gamma -\alpha \Lambda }{%
\alpha \Lambda }\frac{\tau }{r-g+\tau }\left( \phi _{s}+\phi _{d}-1\right) }{%
1+\frac{\gamma -\alpha \Lambda }{\alpha \Lambda }\frac{\tau }{r-g+\tau }%
\left( \phi _{s}+\phi _{d}\right) }.
\end{eqnarray*}%
This means that either the inequality has the same sign for all $t$, or that
it holds only for large $t$. Either way, it does not replicate the dynamic
effect Proposition \ref{price_dynamic} describes.

\item Consider the effect of parameter $\tau $. Recall firm $i$ chooses
dual-class shares if and only if $\phi _{i}<\phi _{i}^{\ast }$, where $\phi _{i}^{\ast }$ decreases in $\tau$.
Therefore, it must be $\tau _{d}<\tau _{s}$. Notice%
\begin{equation*}
x_{\text{dual,i}}^{\ast }=\frac{1}{\kappa }\frac{\alpha \Lambda }{r-g}\left(
1+\phi \frac{\gamma -\alpha \Lambda }{\alpha \Lambda }\frac{\tau _{i}}{%
r-g+\tau _{i}}\right)
\end{equation*}%
increases in $\tau _{i}$, and thus, $x_{\text{dual,d}}^{\ast }<x_{\text{%
dual,s}}^{\ast }$. Moreover, notice%
\begin{equation*}
\frac{\partial B_{\infty }\left( t\right) }{\partial \tau }=-\phi e^{gt}%
\frac{\partial }{\partial \tau }\frac{e^{-\tau t}}{r-g+\tau }>0,
\end{equation*}%
which implies $B_{\infty ,d}\left( t\right) <B_{\infty ,s}\left( t\right) $
for all $t$.

Finally, we show by examples that it is possible to generate selection ($\phi ^{\ast }(\tau_s)<\phi<\phi^{\ast }(\tau_d)$) and the dynamic pattern in
Proposition \ref{price_dynamic} ($\underline{\phi }<\phi <\overline{\phi }%
\left( \tau _{s}\right) <\overline{\phi }\left( \tau _{d}\right) $). Suppose%
\begin{eqnarray*}
\left\{ \tau _{s},\tau _{d}\right\}  &=&\left\{ 0.3,0.03\right\}  \\
\left\{ g,r,\phi ,\alpha ,\Lambda ,\gamma \right\}  &=&\left\{ 0.02,0.03,%
0.85,0.5,0.05,0.1\right\}. 
\end{eqnarray*}%
Then $\frac{\gamma }{\alpha \Lambda }=\frac{0.1}{0.5\cdot 0.05}=4$, $%
\underline{\phi }=\frac{4-2}{4-1}-\frac{0.01}{0.03}\frac{1}{4-1}=\frac{5}{9}%
\approx 0.556$, $\overline{\phi }\left( \tau _{s}\right) =\frac{2}{3}\frac{%
0.01+0.3}{0.3}\approx 0.689$, $\overline{\phi }\left( \tau _{d}\right) =%
\frac{2}{3}\frac{0.01+0.03}{0.03}=\frac{8}{9}\approx 0.889$, $\phi ^{\ast }(\tau_s)=\frac{1271}{1575}\approx0.807$, and $\phi ^{\ast }(\tau_d)=\frac{328}{315}\approx1.041$. Under these parameters, firm $d$ (low $\tau _{d}$)
adopts dual-class and firm $s$ (high $\tau _{s}$) single-class shares, while the
forced-dual class price paths $p_{\text{dual},d}^{\ast }$ and $p_{\text{dual},s%
}^{\ast }$ cross once (at $t^{\ast }\approx 66.9$), replicating Proposition %
\ref{price_dynamic}.
\end{itemize}
\end{enumerate}
\end{proof}

    \clearpage

    \section{Variable Definitions} \label{appendix:variables}
This appendix provides definitions of firm-level variables used in the analysis. 

\begin{center}
\renewcommand{\arraystretch}{1.15}
\begin{tabular}{p{3.5cm} p{10.5cm}}
\toprule
\textbf{Variable} & \textbf{Definition} \\
\midrule

\textit{Log book assets} & Log of book assets. \\

\textit{Age} & Number of years since IPO, proxied by the first appearance in Compustat with a valid stock price. \\

\textit{Tobin's $q$} & Ratio of the market value of capital to the book value of capital. The market value is market equity plus book debt (proxy for market debt), and the book value is book equity plus deferred taxes plus book debt. Computing market equity for dual-class firms uses the observed prices for both classes when available and otherwise assumes a 5\% premium for superior-voting shares relative to inferior-voting shares.\\

\textit{Sales growth} & First difference of the log of sales. \\

\textit{ROA} & Operating income before depreciation divided by lagged book assets. \\

\textit{Market leverage} & Total debt divided by the sum of total debt and market equity. \\

\textit{Capex} & Capital expenditures divided by lagged book assets. \\

\textit{R\&D} & Research and development expenses divided by lagged book assets. \\

\textit{Tangibility} & Net property, plant, and equipment divided by book assets. \\

\textit{Payout ratio} & Total payout (dividends plus repurchases) divided by market equity. \\

\textit{IHS(citations)} & Inverse hyperbolic sine-transformed number of citations the firm's patents receive one, two, or three years ahead. \\

\textit{Cash flow} & Net income plus depreciation and amortization divided by lagged book assets. \\

\textit{Voting premium} & $(P_A - P_B)/(P_B - rP_A)$, where $P_A$ ($P_B$) is the price of the superior- (inferior-) voting shares and $r$ < 1 is the relative number of votes between the inferior- and superior-voting shares. \\

\textit{Log market equity} & Log of market equity. \\

\textit{Shapley value} & Shapley value of the firm ownership structure. \\

\textit{Log volume (sup. to inf.)} & Log of the ratio of trading volume between the superior- and inferior-voting shares. \\

\bottomrule
\end{tabular}
\end{center}
    \clearpage

\setcounter{table}{0} 
\renewcommand{\thetable}{\thesection\arabic{table}}
\renewcommand{\thefigure}{\thesection\arabic{figure}}
\section{Cross-Sectional Relation between Dual-Class Shares and Valuation} \label{appendix:average}

This appendix describes how we estimate the cross-sectional relation between dual-class shares and firm valuation, as measured by Tobin's $q$, building upon the literature that examines the relation between deviations from one share--one vote and firm valuation (e.g., \citet{cronqvist2003agency}; \citet{gompers2010extreme}). We estimate the following equation:
\begin{align} \label{eq:q}
    \textit{q}_{ijt} = \alpha_{jt} + \beta\textit{Dual}_{it} + \gamma'\textit{X}_{it} + \epsilon_{ijt},
\end{align}
where $\textit{q}_{ijt}$ is Tobin's \textit{q} for firm $i$ in three-digit SIC industry $j$ in year $t$; $\alpha_{jt}$ represents industry $j$ by year $t$ fixed effects; $\textit{Dual}_{it}$ is an indicator variable equal to one if firm $i$ has multiple classes of shares with different voting rights in year $t$, and zero otherwise; $\textit{X}_{it}$ is a vector of control variables including log book assets, market leverage, R\&D expenses scaled by lagged book assets, tangibility, sales growth, payout ratio, and ROA; and $\epsilon_{ijt}$ represents random errors clustered at the firm level. \looseness=-1
Table \ref{tab:appendix_c1} presents the results of estimating Eq. (\ref{eq:q}). Figure
\ref{fig:appendix_c1_q_dynamics} presents the results of estimating a version of Eq. (\ref{eq:q}) that includes \text{$\sum_{k=0}^{50} \mathbb{1}\!\left[\text{age}=k\right]$}, where $\mathbb{1}\!\left[\text{age}=k\right]$ is an indicator equal to one if firm age = $k$ (0 $\leq k$ $\leq$ 50), and zero otherwise, and their interactions with $\textit{Dual}_{it}$. We restrict the estimation sample to firms with ages less than or equal to 50 (sample size is too small to allow for precise estimation for ages over 50).

\begin{table}[]
    \centering
    \small 
    \captionsetup{font = large, justification=centering} 
    \caption{\textbf{Average Cross-Sectional Relation between Dual-Class Shares and Valuation}}
\begin{tabular}{lcccc}
    \toprule \toprule
        & (1) & (2)  \\ 
        \multicolumn{1}{c}{Dependent variable:} & \multicolumn{2}{c}{Tobin's $q$}  \\  
\midrule

Dual                &       0.309\sym{***}&       0.327\sym{***}\\
                    &      (0.080)         &      (0.074)         \\
\addlinespace
Log book assets     &          -           &       0.015\sym{**} \\
                    &          -           &      (0.007)         \\
\addlinespace
Market leverage     &          -          &      -1.968\sym{***}\\
                    &          -           &    (0.045)         \\
\addlinespace
R\&D                &          -           &       6.637\sym{***}\\
                    &          -           &     (0.230)         \\
\addlinespace
Tangibility         &          -           &      -0.323\sym{***}\\
                    &          -           &     (0.072)         \\
\addlinespace
Sales growth        &          -           &       0.141\sym{***}\\
                    &          -           &     (0.013)         \\
\addlinespace
Payout ratio        &          -           &      -2.452\sym{***}\\
                    &          -           &    (0.161)         \\
\addlinespace
ROA                 &          -           &       1.089\sym{***}\\
                    &          -           &     (0.088)         \\
\midrule
SIC3 $\times$ year fixed effects&           Y         &           Y         \\
$R^2$               &       0.218         &       0.310         \\
Observations        &     160,702         &     160,702         \\
\bottomrule \bottomrule
    \end{tabular}
\label{tab:appendix_c1}
\begin{tablenotes}
\small
\item \textit{Notes:} This table examines the average cross-sectional relation between dual-class shares and firm valuation as measured by Tobin’s $q$. ``Dual'' is an indicator variable equal to one if a firm-year has multiple classes of shares with different voting rights, and zero otherwise. Definitions of the other variables are as in Table \ref{tab:descriptive_stats}. Standard errors clustered at the firm level are reported in parentheses. *, **, and *** represent significance at the 10\%, 5\%, and 1\% levels, respectively.
\end{tablenotes}
\end{table}

\clearpage

\begin{figure}[t] 
    \centering
    \captionsetup{font = large, justification=centering}
    \caption{\\\textbf{Cross-Sectional Relation between Dual-Class Shares and Valuation over Maturity}}
    \includegraphics[width = \textwidth]{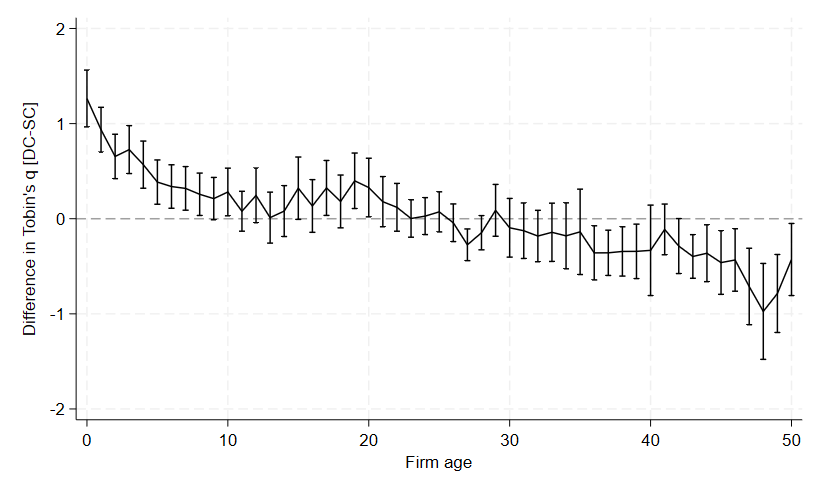}
    \captionsetup{font = small, justification=justified}
    \caption*{\textit{Notes}: This figure plots the differences in Tobin's \textit{q} between dual-class and single-class firms over their age. To construct the graph, we first estimate a version of Eq. (\ref{eq:q}) that includes \text{$\sum_{k=0}^{50} \mathbb{1}\!\left[\text{age}=k\right]$}, where $\mathbb{1}\!\left[\text{age}=k\right]$ is an indicator equal to one if firm age = $k$ (0 $\leq k$ $\leq$ 50), and zero otherwise, and their interactions with $\textit{Dual}$. The line plots the coefficients on $\textit{Dual} \times \mathbb{1}\!\left[\text{age}=k\right]$ along with 95\% confidence intervals (in vertical lines).}
    \label{fig:appendix_c1_q_dynamics}
\end{figure}{}
    \clearpage

\end{appendices}

\clearpage

\numberwithin{equation}{section} 
\renewcommand\thefigure{A.\arabic{figure}} 
\renewcommand\thetable{A.\arabic{table}} 
\setcounter{figure}{0} 
\setcounter{table}{0} 

\begin{appendices}
    \renewcommand{\appendixname}{Online Appendix}
    \renewcommand{\thesection}{\Alph{section}}
    \renewcommand{\thesubsection}{\thesection.\arabic{subsection}}
    \setcounter{section}{0}
    
    \setcounter{table}{0}
    \setcounter{figure}{0}
    \renewcommand{\thetable}{\Alph{section}\arabic{table}}
    \renewcommand{\thefigure}{\Alph{section}\arabic{figure}}

\clearpage
\pagenumbering{arabic}
\setcounter{page}{1}

\thispagestyle{empty}

\begin{center}

\vspace*{1.5cm}

{\LARGE\bfseries
Online Appendix
\par}

\vspace{0.5cm}

{\large for}

\vspace{0.4cm}

{\LARGE\bfseries
The Dynamic Trade-Off of Dual-Class Shares
\par}

\vspace{1.2cm}

{\large
\href{https://sites.google.com/view/hyunseobkim/}{Hyunseob Kim}\textsuperscript{1}
\qquad
\href{https://sites.google.com/view/doronlevit}{Doron Levit}\textsuperscript{2}
\qquad
\href{https://www.hkubs.hku.hk/people/roni-michaely/}{Roni Michaely}\textsuperscript{3}
}

\vspace{0.6cm}

\begin{tabular}{c}
\textsuperscript{1}Federal Reserve Bank of Chicago, Economic Research Department\\
\textsuperscript{2}University of Washington, Foster School of Business\\
\textsuperscript{3}The University of Hong Kong, Faculty of Business and Economics
\end{tabular}

\vspace{1.5cm}

\begin{minipage}{0.80\textwidth}
\centering
\small
This online appendix provides supplementary empirical analyses, additional figures and tables and theoretical results, and details of data collection and the matched-sample approach that complement the main paper.
\end{minipage}

\end{center}

\vfill
\setcounter{footnote}{0}
\clearpage
\setstretch{1}
\setcounter{page}{1}
\newpage
\newgeometry{left=1cm,bottom=2cm, right=1cm, top=1cm}
\setcounter{figure}{0} 
\renewcommand{\thefigure}{OA\arabic{figure}}
\renewcommand{\figurename}{Online Appendix Figure}

\clearpage

\begin{figure}
    \centering
    \captionsetup{font = large, justification=centering}
    \caption{\\\textbf{Model Prediction: Dynamic Treatment Effects of Dual-Class vs. Single-Class Shares on Firm Value}}
    \includegraphics[width=0.75\linewidth]{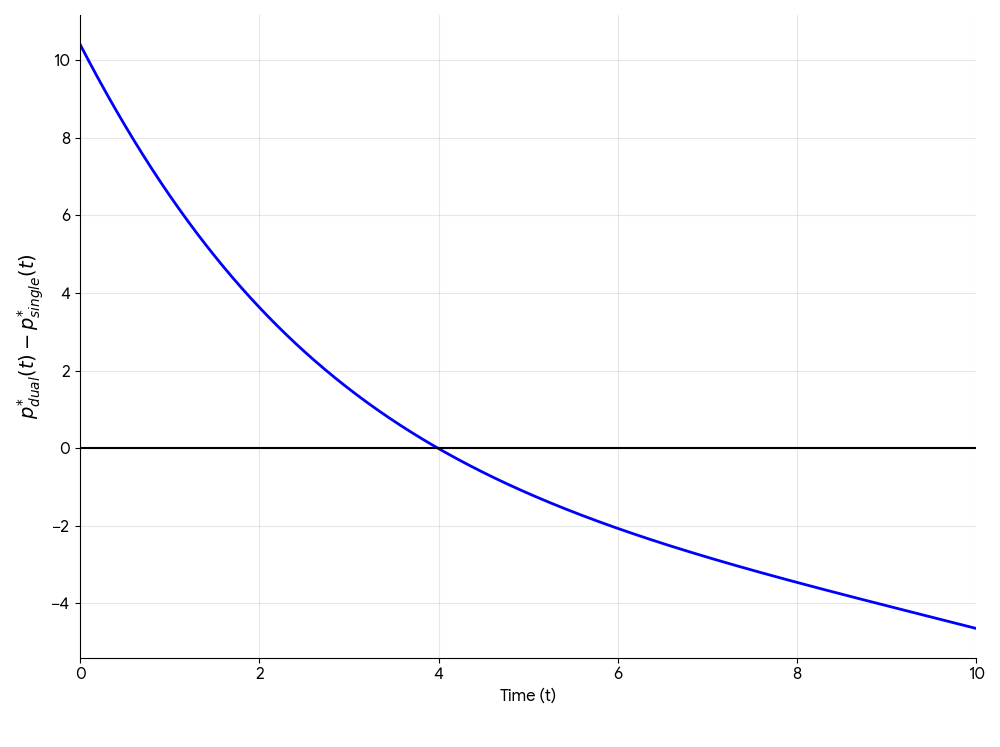}
    \captionsetup{font = small, justification=justified}    
    \caption*{\textit{Notes}: The figure plots how the value difference between two otherwise identical firms, one with dual-class shares and another with single-class shares, evolves over time, when the governance structures are randomly assigned (Proposition \ref{price_dynamic}). We use the following parameter values: $r = 0.1, g = 0.05, \tau = 0.5, \alpha = 0.5, \Lambda = 0.1, \kappa = 1, \gamma = 0.5, \phi = 0.9$.}
    \label{fig:prop1}
\end{figure}

\clearpage
\setstretch{1}

\newpage
\newgeometry{left=1cm,bottom=2cm, right=1cm, top=1cm}
\setcounter{table}{0} 
\renewcommand{\thetable}{OA\arabic{table}} 
\renewcommand{\tablename}{Online Appendix Table}


\begin{table}[!htbp] \centering
\captionsetup{font = large, justification=centering}
  \caption{\textbf{Descriptive Statistics on Matched Samples for Difference-in-Differences Analysis of Recapitalization: AMEX and Nasdaq vs. NYSE Firms Pre-1986}}
  \label{tab:nyse_balance}
\scalebox{1}{
\begin{tabular}{@{\extracolsep{5pt}} lccc} 
\\[-1.8ex]
\toprule 
\toprule \\[-1.8ex] 
\multicolumn{1}{c}{Variable} & Treated Mean & Control Mean & Diff. \\ 
\midrule \\[-1.8ex] 
Log book assets & 4.280 & 4.419 & -0.140  \\ 
Age & 9.500 & 9.591 & -0.091 \\ 
Tobin's $q$ & 1.742 & 1.920 & -0.178 \\ 
Market leverage & 0.260 & 0.187 & 0.072 \\ 
R\&D & 0.001 & 0.019 & -0.012 \\
Tangibility & 0.399 & 0.298 & 0.101 \\ 
Sales growth & 0.072 & 0.055 & 0.016 \\
Payout Ratio & 0.021 & 0.038 & -0.017 \\
ROA & 0.051 & 0.040 & 0.011 \\ 
Observations & 22 & 22 & - \\ 
 \\[-1.8ex]
\bottomrule 
\bottomrule \\[-1.8ex] 
\end{tabular} 
}

\begin{tablenotes}
\small
\item \textit{Notes:} This table presents descriptive statistics for treated and matched control groups for the subsample of recapitalization events used for the difference-in-differences analysis in Section \ref{sec:event_study_add}, measured one quarter prior to the recapitalization. The treated group consists of firms listed on the AMEX and Nasdaq that recapitalized between 1980 and September 1986, while the control group consists of matched firms listed on the NYSE that remain single-class during the same period. We construct the matched control group using firm age, (quarterly) log book assets, and sales growth. We additionally trim the matched pairs with the largest absolute differences in (quarterly) ROA and log book assets, whose exclusion would make the mean difference in ROA and log book assets statistically insignificant. Variables are defined as in Table \ref{tab:descriptive_stats}. The column ``Diff.'' shows differences between the means for treated and matched control groups of firms. *, **, and *** represents significance at the 10\%, 5\%, and 1\% levels, respectively, based on standard errors clustered at the match pair level.
\end{tablenotes}
\end{table} 

\clearpage

\begin{table}[]
    \small
    \captionsetup{font = large, justification=centering}
    \caption{\textbf{The Effects of Dual-Class Recapitalization on Tobin's $q$: \\ AMEX and Nasdaq vs. NYSE Firms Pre-1986 - Standard Clustered vs. Wild Cluster Bootstrapped $p$-Values}}
\begin{center}
    
\begin{tabular}{l*{2}{c}}
\toprule \toprule
                    
                    &\multicolumn{1}{c}{(1)}&\multicolumn{1}{c}{(2)}\\
                    \multicolumn{1}{c}{Dependent variable:} 
                    & \multicolumn{2}{c}{Tobin's $q$} \\
\midrule
Pre-event    &      0.031 &     0.027   \\
                    &     [0.799]   &    [0.825]   \\
                    &     \{0.807\}    &    \{0.831\}   \\
\addlinespace
$\tau =$ 0 to 4       &      0.073 &       0.077  \\
                    &     [0.394]   &     [0.374] \\
                    &     \{0.351\}   &     \{0.334\} 
                    \\
\addlinespace
$\tau =$ 5 to 11        &      0.271  &       0.285 \\
                    &     [0.059]   &     [0.050] \\
                    &     \{0.055\}   &     \{0.047\} 
                    \\
\midrule
Firm fixed effects &    Y        &   Y \\
Matched pair fixed effects &    N        &   Y \\
Year-quarter fixed effects &    Y        &   Y \\
Firm-level controls &    Y        &   Y \\
$R^2$                & 0.908 & 0.909  \\
Observations        &      470   &      470   \\
\bottomrule \bottomrule
\end{tabular}

\begin{tablenotes}
\small
\item \textit{Notes:} This table presents coefficient estimates from Table \ref{tab:event_study_nyse} as well as $p$-values from the standard clustering adjustments [in braces] and $p$-values from the wild cluster bootstrap \{in brackets\} (\cite{cameron2008bootstrap}), both  at the matched pair level.

\end{tablenotes}
\label{tab:event_study_nyse_bootstrap}
\end{center}
\end{table}

\clearpage

\begin{table}[]
    \centering
    \small
    \captionsetup{font = large, justification=centering}
    \caption{\textbf{Market Reactions to Dual-Class Recapitalizations and Stock Unifications Conditional on Maturity with Alternate Event Windows}}
\begin{tabular}{lcc}
                                        & \multicolumn{1}{l}{}    & \multicolumn{1}{l}{}   \\ 
\toprule \toprule
& (1) & (2) \\
\multicolumn{3}{c}{\textbf{Panel A: Cumulative abnormal return [--3, +2]}} \\
\midrule
Event: & \multicolumn{1}{c}{Dual-class recapitalization} & \multicolumn{1}{c}{Dual-class stock unification} \\ 
\midrule
$\mathbb{1}[\text{age $>$ 12}]$ & -0.034** & 0.047** \\
& (0.014) & (0.022) \\
Constant & 0.029*** & -0.000 \\
& (0.011) & (0.017) \\
\midrule
$R^2$ & 0.050 & 0.072 \\
Observations & 116 & 61 \\
\toprule
\multicolumn{3}{c}{\textbf{Panel B: Cumulative abnormal return [--2, +2]}} \\
\midrule
$\mathbb{1}[\text{age $>$ 12}]$ & -0.023** & 0.050** \\
& (0.012) & (0.021) \\
Constant & 0.017** & -0.004 \\
& (0.009) & (0.016) \\
\midrule
$R^2$ & 0.034 & 0.087 \\
Observations & 116 & 61 \\
\bottomrule\bottomrule
                                        & \multicolumn{1}{l}{}    & \multicolumn{1}{l}{}  
\end{tabular}
\begin{tablenotes}
\small
\item \textit{Notes:} This table examines robustness of results in Table \ref{tab:recaps_and_unifications} using alternative event windows. The dependent variable is the cumulative abnormal return for three days before to two days after (two days before to two days after) the announcement of a dual-class recapitalization and stock unification in Panel A (Panel B). ``$\mathbb{1}[\text{age $>$ 12}]$ '' is an indicator variable equal to one if firm age is greater than 12 years, the sample median for dual-class firms, and zero otherwise. White robust standard errors are reported in parentheses. *, **, and *** represent significance at the 10\%, 5\%, and 1\% levels, respectively.
\end{tablenotes}
\label{tab:recaps_and_unifications_robust}
\end{table}

\clearpage

\begin{table}[]
    \centering
    \small
    \captionsetup{font = large, justification=centering}
    \caption{\textbf{Robustness of Investment-\emph{q} Sensitivity to Different Sales Growth Cutoffs}}
\begin{tabular}{l*{5}{c}}
\toprule \toprule
                    &\multicolumn{1}{c}{(1)}&\multicolumn{1}{c}{(2)}&\multicolumn{1}{c}{(3)}&\multicolumn{1}{c}{(4)}\\
                    \multicolumn{1}{c}{Dependent variable:}&\multicolumn{4}{c}{Capex}\\
                    \multicolumn{1}{c}{Sales growth percentile:}
                        & \multicolumn{2}{c}{15\%}
                        & \multicolumn{2}{c}{25\%} \\
                    \multicolumn{1}{c}{Sample:} &\multicolumn{1}{c}{Young}&\multicolumn{1}{c}{Mature}&\multicolumn{1}{c}{Young}&\multicolumn{1}{c}{Mature}\\
\midrule
$ q$                &       0.003\sym{***}&       0.003\sym{***}&       0.004\sym{***}&       0.004\sym{***}\\
                    &     (0.001)         &     (0.001)         &     (0.000)         &     (0.000)         \\
\addlinespace
$q$ $\times$ Dual   &       0.004  &      -0.003\sym{**} &       0.000         &      -0.003\sym{***} \\
                    &     (0.003)         &     (0.001)         &     (0.001)         &     (0.001)         \\

\midrule
Firm $\times$ Dual fixed effects&           Y         &           Y         &           Y         &           Y         \\
Year $\times$ Dual fixed effects&           Y         &           Y         &           Y         &           Y         \\
Cash flow $\times$ Dual controls  &           Y         &           Y         &           Y         &           Y         \\
$R^2$               &       0.778         &       0.689         &       0.727         &       0.645         \\
Observations        &      13,657         &      9,862         &      21,035         &      18,922         \\
\midrule 
$q$ $\times$ Dual $\times$ (Mature -- Young)   &  \multicolumn{2}{c}{-0.006**} &  \multicolumn{2}{c}{-0.003*} \\
$p$-value   &  \multicolumn{2}{c}{0.040} &  \multicolumn{2}{c}{0.065} \\
\bottomrule \bottomrule
\end{tabular}
\begin{tablenotes}
\small
\item \textit{Notes:} This table examines robustness of results in Table \ref{tab:invest_q_sensitivity} using different sales growth cutoffs to define an estimation sample of firms that are likely near the disinvestment threshold, the $15^{th}$ and $25^{th}$ percentiles. All variables are defined in Table \ref{tab:invest_q_sensitivity}. Standard errors clustered at the firm level are in parentheses. *, **, and *** represent significance at the 10\%, 5\%, and 1\% levels, respectively.
\end{tablenotes}
    \label{tab:invest_q_sensitivity_robust}
    \end{table}
    
\clearpage


\begin{table}[htbp]
    \centering
    \small
    \captionsetup{font = large, justification=centering}
    \caption{\textbf{Voting Premium over Firm Maturity Excluding Market Equity Control}}
    \label{tab:voting_premium_robust}

    \centering
    {\def\sym#1{\ifmmode^{#1}\else\(^{#1}\)\fi}
    \begin{tabular}{lccc}
    \toprule\toprule

                        & (1) & (2) & (3) \\
    \multicolumn{1}{c}{Dependent variable:} 
                        & \multicolumn{3}{c}{Voting premium} \\ \midrule
    ${\Large \mathbb{1}}[\text{age in quartile 2}]$
                        & 0.038\sym{**}  & 0.024        & 0.028\sym{*}        \\
                        & (0.018)      & (0.015)      & (0.014)      \\
    ${\Large \mathbb{1}}[\text{age in quartile 3}]$
                        & 0.043\sym{*} & 0.039        & 0.048\sym{*}\\
                        & (0.022)      & (0.024)      & (0.027)      \\
    ${\Large \mathbb{1}}[\text{age in quartile 4}]$
                        & 0.057\sym{**}& 0.069\sym{*} & 0.088\sym{*}\\
                        & (0.025)      & (0.038)      & (0.049)      \\
    Shapley value
                        & 0.051\sym{**} & -0.001        & -0.003        \\
                        & (0.023)      & (0.021)      & (0.022)      \\
    Log volume (sup.to inf.)
                        & 0.007 & 0.012\sym{**}        & 0.010\sym{**}        \\
                        & (0.006)      & (0.005)      & (0.005)      \\
    \midrule
    Firm fixed effects   & N & Y & Y \\
    Year fixed effects   & Y & Y & Y \\
    $R^2$                & 0.062 & 0.521 & 0.522 \\
    Observations         & 1,182 & 1,182 & 1,182 \\
    \bottomrule\bottomrule
    \end{tabular}
    }

    \vspace{0.4cm}

    \begin{tablenotes}
    \small
    \item \textit{Notes:} This table presents the robustness of Table \ref{tab:voting_premium0} to the exclusion of market equity as a control. All variables are defined in Table \ref{tab:voting_premium0}. In column (2), we include year fixed effects using \cite{deaton1997}'s normalization. In column (3), we use an alternative restriction that normalizes one year fixed effect to zero. Standard errors clustered at the firm level are in parentheses. *, **, and *** represent significance at the 10\%, 5\%, and 1\% levels, respectively.
    \end{tablenotes}

\end{table}

\clearpage
\section{Additional predictions of the model}\label{appendix:theory_online}

\subsection{Unexpected recapitalization and unification}

Consider the following variant of the baseline model. There exists a
threshold $\overline{t}>0$ such that, for all $t<\overline{t},$ the firm
maintains its original governance structure $n$ and generates cash-flows $\overline{x}_{n}e^{gt}$, where $%
\overline{x}_{n}\geq 0$ denotes the founder's initial effort choice. At $t=%
\overline{t}$, the firm is randomly assigned a new governance
structure: $n=\overline{n}>\overline{t}$ (unexpected recapitalization) with probability $%
\xi $, or $n=0$ (unexpected unification) with probability $1-\xi $. Given
new governance structure, the founder then chooses a new effort level $x$,
incurring a cost of $\frac{1}{2}\kappa x^{2}$. From this point onward, the
firm's cash flow is $xe^{g(t-\overline{t})}$. The rate of private benefit
extraction, $b_{t}=\phi \left( 1-e^{-\tau t}\right) $, is independent of $%
\overline{t}$. Thus, the later the recapitalization or unification occurs
(i.e., the larger $\overline{t}$ is), the more mature the firm and the
greater the relative importance of agency costs at the time of the event.
The baseline model obtains as the special case $\overline{t}=0$. 

\begin{lemma}
\label{res:unexpected}Suppose $\gamma -\alpha \Lambda >\alpha \Lambda $ and $%
\phi \in (\underline{\phi },$\ $\overline{\phi })$, as in the conditions of
Proposition \ref{price_dynamic}. Then, there exist $0\leq \hat{t}_{L}<\hat{t}%
_{H}<\infty $ such that:

\begin{itemize}
\item[$\left( i\right) $] Consider dual-class firms without sunset
provisions:

\begin{itemize}
\item[$\left( a\right) $] The abnormal return following an unexpected
recapitalization (unification) is positive if and only if $\overline{t}<\hat{%
t}_{H}$ ($\overline{t}>\hat{t}_{H}$).

\item[$\left( b\right) $] The abnormal return following an unexpected
recapitalization (unification) is increasing in $\overline{t}$ if and only
if $\overline{t}<\hat{t}_{L}$ ($\overline{t}>\hat{t}_{L}$).
\end{itemize}

\item[$\left( ii\right) $] If $\overline{t}<\hat{t}_{L}$, the abnormal
return following an unexpected unification is larger for dual-class firms
with sunset provisions than for those without such provisions.
\end{itemize}
\end{lemma}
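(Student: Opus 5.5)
The plan is to reduce every claim to the sign and monotonicity of one scalar function of maturity, namely the normalized value gap between the post-event dual-class and single-class firms. The first step is to recompute the pricing and effort objects of Lemma \ref{res:effort} for the post-event continuation game. Normalize cash flows after $\overline{t}$ as $xe^{g(s-\overline{t})}$. Then the continuation dividend value is $D=\frac{1}{r-g}$, independent of $\overline{t}$. Because $b_{s}=\phi(1-e^{-\tau s})$ runs in calendar time, the continuation private-benefit value under permanent dual class is
\begin{equation*}
B_{\infty}(\overline{t})=\phi\left[\frac{1}{r-g}-\frac{e^{-\tau\overline{t}}}{r-g+\tau}\right].
\end{equation*}
Under a sunset at calendar date $n>\overline{t}$, the value $B_{n}(\overline{t})<B_{\infty}(\overline{t})$ is obtained by truncating the integral. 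Repeating the argument of Lemma \ref{res:effort}, the re-optimized effort is $x_{n}=[\alpha\Lambda D+(\gamma-\alpha\Lambda)B_{n}(\overline{t})]/\kappa$, and the post-event price is $x_{n}[D-B_{n}(\overline{t})]$. Write $\beta_{n}(\overline{t})\equiv B_{n}(\overline{t})/D$. The post-event value gap relative to single class ($n=0$, $\beta=0$) is then proportional to $f(\beta_{n})$, where
\begin{equation*}
f(\beta)\equiv\big(\alpha\Lambda+(\gamma-\alpha\Lambda)\beta\big)(1-\beta)-\alpha\Lambda=\beta\big[(\gamma-2\alpha\Lambda)-(\gamma-\alpha\Lambda)\beta\big].
\end{equation*}

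The second step translates this gap into abnormal returns. Just before $\overline{t}$, the price is the $\xi$-mixture $P_{0}+\xi\Delta$ of the two post-event prices, where $\Delta\propto f(\beta_{\infty})$ and $P_{0}$ is the single-class price, which does not depend on $\overline{t}$ after normalization. Hence the recapitalization return is $(1-\xi)\Delta/(P_{0}+\xi\Delta)$, which is strictly increasing in $\Delta$. The unification return is $-\xi\Delta/(P_{0}+\xi\Delta)$, which is strictly decreasing in $\Delta$. So both the signs and the $\overline{t}$-monotonicity of the two returns are inherited, with opposite orientations, from $f(\beta_{\infty}(\overline{t}))$.

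The third step analyzes $f$. Under $\gamma-\alpha\Lambda>\alpha\Lambda$, $f$ is a concave quadratic with roots $0$ and $\beta_{1}=\frac{\gamma-2\alpha\Lambda}{\gamma-\alpha\Lambda}$, and it peaks at $\beta_{1}/2$. Moreover, $\beta_{\infty}(\overline{t})=\phi(1-\frac{r-g}{r-g+\tau}e^{-\tau\overline{t}})$ is strictly increasing, from $\phi\frac{\tau}{r-g+\tau}$ to $\phi$. The conditions $\phi\in(\underline{\phi},\overline{\phi})$ from Proposition \ref{price_dynamic} should give exactly $\beta_{\infty}(0)<\beta_{1}<\phi$; checking this against the formulas for $\overline{\phi}$ and $\underline{\phi}$ is the step I expect to need the most care. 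Given that, define $\hat{t}_{H}$ by $\beta_{\infty}(\hat{t}_{H})=\beta_{1}$, and $\hat{t}_{L}=\max\{0,t: \beta_{\infty}(t)=\beta_{1}/2\}$, so that $0\le\hat{t}_{L}<\hat{t}_{H}<\infty$. Part (i)(a) follows from the sign of $f$, and part (i)(b) from $f$ increasing exactly on $\beta<\beta_{1}/2$, combined with the monotone transforms of the second step.

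For part (ii), note that for $\overline{t}<\hat{t}_{L}$ we have $\beta_{n}(\overline{t})<\beta_{\infty}(\overline{t})<\beta_{1}/2$. Since $f$ is increasing there, $f(\beta_{n})<f(\beta_{\infty})$, so the gap $\Delta$ is smaller under the sunset. Because the unification return is decreasing in $\Delta$, it is larger for firms with sunsets.

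The main obstacle is being precise about the definition of the abnormal return, so that the normalizing pre-event price does not overturn monotonicity. The ratio form above handles this, since $P_{0}$ is constant in normalized units. A second point to confirm is that the sunset date is measured in calendar time with $n>\overline{t}$, so that $B_{n}(\overline{t})\in(0,B_{\infty}(\overline{t}))$.
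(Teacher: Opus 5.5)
Your reduction is the same as the paper's. Your $\frac{D^{2}}{\kappa}f(\beta)$ is exactly $-\Delta_{\overline{n}}$ in the paper's proof. The thresholds $\hat{t}_{H}$ and $\hat{t}_{L}$ are where $B_{\infty}(\overline{t},\overline{t})$ crosses $\beta_{1}D$ and $\beta_{1}D/2$. Part (ii) rests on the same ordering $B_{\overline{n}}<B_{\infty}<\beta_{1}D/2$. Using ratio returns instead of price changes is harmless, because $P_{0}$ is a positive constant.

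The gap is precisely the step you flagged, and it cannot be closed under the stated hypothesis. The condition $\phi<\overline{\phi}$ is indeed equivalent to $\beta_{\infty}(0)<\beta_{1}$, since $\overline{\phi}=\beta_{1}\frac{r-g+\tau}{\tau}$. But $\phi>\underline{\phi}$ does not give $\phi>\beta_{1}$, because
\begin{equation*}
\underline{\phi}=\beta_{1}-\frac{r-g}{\tau}\,\frac{\alpha\Lambda}{\gamma-\alpha\Lambda}<\beta_{1}.
\end{equation*}
The reason is that the two results compare different objects.
\begin{itemize}
\item In Proposition \ref{price_dynamic}, effort is fixed at its $t=0$ level, so the long-run comparison is $(\alpha\Lambda+(\gamma-\alpha\Lambda)\beta_{\infty}(0))(1-\phi)<\alpha\Lambda$.
\item Here, effort is re-optimized at $\overline{t}$ with $\beta_{\infty}(\overline{t})\to\phi$. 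So the long-run sign is that of $f(\phi)$, which is negative only if $\phi>\beta_{1}$.
\end{itemize}

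Now take $\phi\in(\underline{\phi},\beta_{1}]$, which is a nonempty set. Then $\beta_{\infty}(\overline{t})<\phi\le\beta_{1}$ for every $\overline{t}$, so $f>0$ throughout. The recapitalization return is positive and the unification return negative at every maturity, so no finite $\hat{t}_{H}$ exists and part (i)(a) fails.

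Concretely, take the parameters of Online Appendix Figure \ref{fig:prop1}: $r=0.1$, $g=0.05$, $\tau=0.5$, $\alpha=0.5$, $\Lambda=0.1$, $\gamma=0.5$. Then $\underline{\phi}\approx0.878<\beta_{1}=8/9<\overline{\phi}\approx0.978$. Choosing $\phi=0.885$ gives $f(\phi)=0.885\,(0.4-0.45\cdot0.885)>0$.

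The paper's own proof has the same slip. It asserts $\lim_{\overline{t}\to\infty}B_{\infty}(\overline{t},\overline{t})<\frac{\beta_{1}}{r-g}\Leftrightarrow\phi<\underline{\phi}$. But that limit equals $\frac{\phi}{r-g}$, so the correct equivalence is $\phi<\beta_{1}$.

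The fix is to strengthen the hypothesis to $\frac{\gamma-2\alpha\Lambda}{\gamma-\alpha\Lambda}<\phi<\overline{\phi}$. This is a nonempty subinterval of $(\underline{\phi},\overline{\phi})\cap(0,1]$, because $\beta_{1}<\min\{1,\overline{\phi}\}$. Under that condition $\beta_{\infty}(0)<\beta_{1}<\phi$ holds, and your definitions give $0\le\hat{t}_{L}<\hat{t}_{H}<\infty$. The rest of your argument, including part (ii), then goes through as written.
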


\begin{proof}
The share price at time $t<$ $\overline{t}$ under governance structure $n\in
\left\{ 0,\overline{n}\right\} $ is%
\begin{eqnarray*}
p_{n}\left( t,\overline{t}\right)  &=&\overline{x}_{n}\int_{t}^{\overline{t}%
}e^{-r\left( s-t\right) }e^{gs}\left( 1-b_{s}\cdot 1_{t<n}\right) ds \\
&&+e^{-r\left( \overline{t}-t\right) }\left[ \xi x_{\overline{n}}^{\ast
}\,\left( \overline{t}\right) \left( D\left( \overline{t},\overline{t}%
\right) -B_{\overline{n}}(\overline{t},\overline{t})\right) +\left( 1-\xi
\right) x_{0}^{\ast }\,\left( \overline{t}\right) D\left( \overline{t},%
\overline{t}\right) \right] 
\end{eqnarray*}%
where for $t\geq $ $\overline{t}\,$,%
\begin{eqnarray*}
D\left( t,\overline{t}\right)  &=&\frac{e^{g\left( t-\overline{t}\right) }}{%
r-g} \\
B_{\overline{n}}\left( t,\overline{t}\right)  &=&\phi e^{g\left( t-\overline{%
t}\right) }\left[ \frac{1-e^{-\left( r-g\right) \left( \overline{n}-t\right)
}}{r-g}-e^{-\tau t}\frac{1-e^{-\left( r-g+\tau \right) \left( \overline{n}%
-t\right) }}{r-g+\tau }\right] \cdot 1_{t<\overline{n}}
\end{eqnarray*}%
and for $n\in \left\{ 0,\overline{n}\right\} $,%
\begin{equation*}
x_{n}^{\ast }\left( \overline{t}\right) =\frac{\alpha \Lambda D\left( 
\overline{t},\overline{t}\right) +\left( \gamma -\alpha \Lambda \right)
B_{n}\left( \overline{t},\overline{t}\right) }{\kappa }.
\end{equation*}%
Notice, $x_{\overline{n}}^{\ast }\left( \overline{t}\right) >x_{0}^{\ast
}\left( \overline{t}\right) $, that is, the founder exerts more effort
following recapitalization.

If an unexpected recapitalization (unification) happens at time $\overline{t}
$, then the share price changes from $p_{n}\left( \overline{t},\overline{t}%
\right) $ to $x_{\overline{n}}^{\ast }\,\left( \overline{t}\right) \left(
D\left( \overline{t},\overline{t}\right) -B_{\overline{n}}(\overline{t},%
\overline{t})\right) $ ($x_{0}^{\ast }\,\left( \overline{t}\right) D\left( 
\overline{t},\overline{t}\right) $), and the price difference is given by $%
-\left( 1-\xi \right) \Delta _{\overline{n}}$ ($\xi \Delta _{\overline{n}}$%
), where%
\begin{eqnarray*}
\Delta _{\overline{n}} &\equiv &x_{0}^{\ast }\,\left( \overline{t}\right)
D\left( \overline{t},\overline{t}\right) -x_{\overline{n}}^{\ast }\,\left( 
\overline{t}\right) \left[ D\left( \overline{t},\overline{t}\right) -B_{%
\overline{n}}(\overline{t},\overline{t})\right]  \\
&=&\frac{\alpha \Lambda \frac{1}{r-g}}{\kappa }\frac{1}{r-g}-\frac{\alpha
\Lambda \frac{1}{r-g}+\left( \gamma -\alpha \Lambda \right) B_{\overline{n}}(%
\overline{t},\overline{t})}{\kappa }\left[ \frac{1}{r-g}-B_{\overline{n}}(%
\overline{t},\overline{t})\right]  \\
&=&\frac{\gamma -\alpha \Lambda }{\kappa }B_{\overline{n}}(\overline{t},%
\overline{t})\left[ B_{\overline{n}}(\overline{t},\overline{t})-\frac{1}{r-g}%
\frac{\frac{\gamma }{\alpha \Lambda }-2}{\frac{\gamma }{\alpha \Lambda }-1}%
\right] .
\end{eqnarray*}%
Notice that $\Delta _{\overline{n}}>0$ if and only if $B_{\overline{n}}(%
\overline{t},\overline{t})>\frac{1}{r-g}\frac{\frac{\gamma }{\alpha \Lambda }%
-2}{\frac{\gamma }{\alpha \Lambda }-1}$.  

Suppose $\overline{n}=\infty $. Notice $B_{\infty }(\overline{t},\overline{t}%
)=\phi \left[ \frac{1}{r-g}-e^{-\tau \overline{t}}\frac{1}{r-g+\tau }\right] 
$ is increasing in $\overline{t}$ from $\phi \frac{1}{r-g}\frac{\tau }{%
r-g+\tau }$ to $\phi \frac{1}{r-g}$. Therefore, $\Delta _{\infty }$
increases in $\overline{t}$ if and only if $2B_{\infty }(\overline{t},%
\overline{t})-\frac{1}{r-g}\frac{\frac{\gamma }{\alpha \Lambda }-2}{\frac{%
\gamma }{\alpha \Lambda }-1}>0$. Notice%
\begin{eqnarray*}
B_{\infty }\left( 0,0\right)  &>&\frac{1}{r-g}\frac{\frac{\gamma }{\alpha
\Lambda }-2}{\frac{\gamma }{\alpha \Lambda }-1}>0\Leftrightarrow \phi >\text{%
\ }\overline{\phi } \\
\lim_{\overline{t}\rightarrow \infty }B_{\infty }\left( \overline{t},%
\overline{t}\right)  &<&\frac{1}{r-g}\frac{\frac{\gamma }{\alpha \Lambda }-2%
}{\frac{\gamma }{\alpha \Lambda }-1}>0\Leftrightarrow \phi <\underline{\phi }%
.
\end{eqnarray*}%
Thus, under the conditions of Proposition \ref{price_dynamic}, there exist $%
0\leq \hat{t}_{L}<\hat{t}_{H}<\infty $ such that%
\begin{eqnarray*}
B_{\infty }(\overline{t},\overline{t})-\frac{1}{r-g}\frac{\frac{\gamma }{%
\alpha \Lambda }-2}{\frac{\gamma }{\alpha \Lambda }-1} &>&0\Leftrightarrow 
\overline{t}>\hat{t}_{H} \\
2B_{\infty }(\overline{t},\overline{t})-\frac{1}{r-g}\frac{\frac{\gamma }{%
\alpha \Lambda }-2}{\frac{\gamma }{\alpha \Lambda }-1} &>&0\Leftrightarrow 
\overline{t}>\hat{t}_{L},
\end{eqnarray*}%
as required. 

Finally, notice 
\begin{equation*}
B_{\overline{n}}(\overline{t},\overline{t})=\phi \left[ \frac{1-e^{-\left(
r-g\right) \left( \overline{n}-\overline{t}\right) }}{r-g}-e^{-\tau 
\overline{t}}\frac{1-e^{-\left( r-g+\tau \right) \left( \overline{n}-%
\overline{t}\right) }}{r-g+\tau }\right] 
\end{equation*}%
and 
\begin{equation*}
\frac{\partial B_{\overline{n}}(\overline{t},\overline{t})}{\partial 
\overline{n}}=\phi \left( 1-e^{-\tau \overline{n}}\right) e^{-\left(
r-g\right) \left( \overline{n}-\overline{t}\right) }>0.
\end{equation*}%
Therefore, $B_{\infty }(\overline{t},\overline{t})>B_{\overline{n}}(%
\overline{t},\overline{t})>0$ and%
\begin{equation*}
\Delta _{\infty }>\Delta _{\overline{n}}\Leftrightarrow B_{\infty }(%
\overline{t},\overline{t})+B_{\overline{n}}(\overline{t},\overline{t})>\frac{%
1}{r-g}\frac{\frac{\gamma }{\alpha \Lambda }-2}{\frac{\gamma }{\alpha
\Lambda }-1}.
\end{equation*}%
Thus, if $\overline{t}<\hat{t}_{L}$ then%
\begin{equation*}
B_{\overline{n}}(\overline{t},\overline{t})<B_{\infty }(\overline{t},%
\overline{t})<B_{\infty }(\overline{t},\overline{t})+B_{\overline{n}}(%
\overline{t},\overline{t})<2B_{\infty }(\overline{t},\overline{t})<\frac{1}{%
r-g}\frac{\frac{\gamma }{\alpha \Lambda }-2}{\frac{\gamma }{\alpha \Lambda }%
-1},
\end{equation*}%
and hence, $\Delta _{\infty }<\Delta _{\overline{n}}<0$
\end{proof}

\subsection{Voting premium\label{app:vp}}

We define the voting premium under dual-class shares as the difference
between the founder's and shareholders' intrinsic valuations of the firm's
shares.

\begin{lemma}
\label{res:vp}The dual-class voting premium is given by%
\begin{equation}
\frac{\gamma }{\alpha \Lambda }x_{\infty }^{\ast }B_{\infty }(t),
\end{equation}
which is positive and increasing in $t$.
\end{lemma}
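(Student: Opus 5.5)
The plan is to compute each party's intrinsic per-unit valuation of the founder's stake under dual-class shares ($n=\infty$), take the difference, and then read off sign and monotonicity from the closed form of $B_\infty(t)$ given at the start of Appendix \ref{appendix:theory}.

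\textbf{Step 1: the two valuations.} An outside shareholder values the firm at the market price, which by the price formula in Appendix \ref{appendix:theory} is $p_\infty^\ast(t)=x_\infty^\ast\,[D(t)-B_\infty(t)]$. The founder, per unit of retained stake $\alpha\Lambda$, values the same cash-flow claim at this price. In addition, his flow utility from diversion is $\gamma b_s y_s$, so scaling by his stake adds $\frac{\gamma}{\alpha\Lambda}x_\infty^\ast B_\infty(t)$. This is the normalization used in the founder's utility $u_{t=0}$ in the proof of Lemma \ref{res:effort}, where private benefits enter as $\gamma x B_n(0)$ alongside the equity term $\alpha\Lambda x D(0)$.

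\textbf{Step 2: the premium.} Taking the difference leaves exactly $\frac{\gamma}{\alpha\Lambda}x_\infty^\ast B_\infty(t)$. Here $x_\infty^\ast$ is the effort from Lemma \ref{res:effort}, which is sunk at $t=0$ and therefore does not depend on $t$.

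\textbf{Step 3: sign and monotonicity.} Positivity follows because $\gamma,\alpha,\Lambda>0$, because $x_\infty^\ast>0$ by (\ref{effort}) (using $\gamma>\alpha\Lambda$ and $D(0)>0$), and because
\[
B_\infty(t)=\phi e^{gt}\left[\frac{1}{r-g}-\frac{e^{-\tau t}}{r-g+\tau}\right]>0,
\]
which holds since $\tau>0$. For monotonicity, the constant factor $\frac{\gamma}{\alpha\Lambda}x_\infty^\ast$ plays no role, so it suffices to show $B_\infty(t)$ is increasing in $t$. It is the product of two terms:
\begin{itemize}
\item $e^{gt}$, which is positive and increasing since $g>0$;
\item the bracket, which is positive and increasing because $e^{-\tau t}$ decreases in $t$.
\end{itemize}
A product of positive increasing functions is increasing, which gives the claim. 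Equivalently, one can differentiate:
\[
\frac{d}{dt}B_\infty(t)=\phi e^{gt}\left[\frac{g}{r-g}+\frac{(\tau-g)e^{-\tau t}}{r-g+\tau}\right].
\]
The bracket here is at least $\frac{g}{r-g}-\frac{g}{r-g+\tau}>0$, because $(\tau-g)e^{-\tau t}\geq -g$.

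\textbf{Main obstacle.} The only real difficulty is Step 1: justifying that the founder's private-benefit term should be scaled by $\gamma/(\alpha\Lambda)$ rather than by $\gamma$ or $\gamma-\alpha\Lambda$. I would fix this by following the paper's definition literally, namely founder valuation minus shareholder valuation, each expressed per unit of stake held. Once the expression in the lemma is pinned down, the remaining steps are routine.
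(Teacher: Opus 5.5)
Your proposal is correct and follows the paper's proof. The paper writes the founder's value as $x_\infty^\ast[\alpha\Lambda D(t)+(\gamma-\alpha\Lambda)B_\infty(t)]=\alpha\Lambda\bigl[p_\infty^\ast(t)+\tfrac{\gamma}{\alpha\Lambda}x_\infty^\ast B_\infty(t)\bigr]$, which is your stake-at-market-price plus $\gamma x_\infty^\ast B_\infty(t)$ decomposition, and it then simply cites that $B_\infty(t)$ is increasing, which you verify explicitly. One small slip: in $u_{t=0}$ the equity term is $\alpha\Lambda x[D(0)-B_n(0)]$, not $\alpha\Lambda x D(0)$, although your Step 1 computation already uses the correct diluted price $p_\infty^\ast(t)$.
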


\begin{proof}
Shareholder value at time $t$ is given by $p_{\infty }^{\ast }(t)=x_{\infty
}^{\ast }\left[ D(t)-B_{\infty }(t)\right] $. The expected value to the
founder at period $t\geq 0$ gross of the proceeds from selling shares and
the cost of effort is%
\begin{align*}
x_{\infty }^{\ast }\left[ \alpha \Lambda D(t)+(\gamma -\alpha \Lambda
)B_{\infty }(t)\right] & =\alpha \Lambda x_{\infty }^{\ast }\left[
D(t)-B_{\infty }(t)\right] +\gamma x_{\infty }^{\ast }B_{\infty }(t) \\
& =\alpha \Lambda \left[ p_{\infty }^{\ast }(t)+\frac{\gamma }{\alpha
\Lambda }x_{\infty }^{\ast }B_{\infty }(t)\right]. 
\end{align*}%
Thus, the voting premium is $\frac{\gamma }{\alpha \Lambda }x_{\infty
}^{\ast }B_{\infty }(t)$. Since $B_{\infty }(t)$ increases in $t$, so does
the voting premium.
\end{proof}

\subsection{Effect of $n$ on the founder's utility}\label{app:effect_n}

\begin{lemma}
\label{res:effect_n} $u_{n}^{\ast }$ increases in $n$ if $\frac{\gamma }{\alpha }>2-\Lambda $.
\end{lemma}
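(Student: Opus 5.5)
The plan is to reduce everything to a one-variable problem in $B\equiv B_n(0)$. The governance structure $n$ enters $u_n^{\ast}$ only through $B_n(0)$, and the appendix already notes that $B_n(0)$ increases in $n$. So it suffices to show that $u_n^{\ast}$, viewed as a function of $B$, is increasing on $[0,B_\infty(0)]$.

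From the proof of Proposition \ref{optimal_gov}, $u_n^{\ast}=\alpha(1-\Lambda)p_n^{\ast}(0)+\tfrac12\kappa x_n^{\ast 2}$. Lemma \ref{res:effort} gives $x_n^{\ast}$, and $p_n^{\ast}(0)=x_n^{\ast}(D-B)$ with $D\equiv D(0)$. Write $c\equiv\gamma-\alpha\Lambda$. Then
\begin{equation*}
\kappa u(B)=\alpha(1-\Lambda)(\alpha\Lambda D+cB)(D-B)+\tfrac12(\alpha\Lambda D+cB)^2 ,
\end{equation*}
which is a quadratic in $B$. I will differentiate with respect to $B$:
\begin{equation*}
\kappa u'(B)=\alpha(1-\Lambda)\left[cD-2cB-\alpha\Lambda D\right]+c\,\alpha\Lambda D+c^2B .
\end{equation*}

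Since $u'$ is linear in $B$, positivity on the whole interval follows from two checks: the slope is nonnegative, and the value at $B=0$ is positive.

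\textbf{Slope.} The slope is
\begin{equation*}
c^2-2c\alpha(1-\Lambda)=c\left[\gamma-\alpha(2-\Lambda)\right],
\end{equation*}
because $c-2\alpha(1-\Lambda)=\gamma-\alpha\Lambda-2\alpha+2\alpha\Lambda=\gamma-\alpha(2-\Lambda)$. This is exactly where the hypothesis $\gamma/\alpha>2-\Lambda$ is used, together with $c>0$ (recall $\gamma>\alpha\Lambda$). Both factors are positive, so the slope is positive.

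\textbf{Intercept.} At $B=0$, simplifying gives
\begin{equation*}
\kappa u'(0)=\alpha D\left[(1-\Lambda)(\gamma-2\alpha\Lambda)+\Lambda(\gamma-\alpha\Lambda)\right]=\alpha D\left[\gamma-\alpha\left(1-(1-\Lambda)^2\right)\right].
\end{equation*}
This is positive because $\gamma>\alpha(2-\Lambda)>\alpha>\alpha\left(1-(1-\Lambda)^2\right)$.

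Hence $u'(B)>0$ for all $B\ge 0$. Combining this with the monotonicity of $B_n(0)$ in $n$ (chain rule, or simple monotone composition since $n$ may be treated as continuous) yields that $u_n^{\ast}$ increases in $n$.

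The main obstacle is not conceptual but algebraic: getting the sign of the quadratic coefficient right. The term $\alpha(1-\Lambda)p_n^{\ast}(0)$ is concave in $B$, because more diversion lowers the sale proceeds, while $\tfrac12\kappa x^{\ast 2}$ is convex in $B$. The condition $\gamma/\alpha>2-\Lambda$ is precisely what makes the convex effort term dominate. I would double-check this coefficient carefully, since a sign slip there would change which condition is needed.
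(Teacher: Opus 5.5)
Your proof is correct and follows the paper's route: you reduce to monotonicity in $B_n(0)$ and identify $\frac{\gamma}{\alpha}>2-\Lambda$ as the sign condition under which the convex effort term dominates the concave sale-proceeds term. The only difference is the last step: the paper factors $u_n^{\ast}=x_n^{\ast}\bigl[\alpha(1-\Lambda)(D(0)-B_n(0))+\tfrac{1}{2}\bigl(\alpha\Lambda D(0)+(\gamma-\alpha\Lambda)B_n(0)\bigr)\bigr]$ into two positive factors, each increasing in $B_n(0)$ (the second exactly when $\frac{\gamma}{\alpha}>2-\Lambda$), which makes your separate intercept check $u'(0)>0$ unnecessary.
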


\begin{proof}
Recall that $B_{n}\left( 0\right) $ increases in $n$ and
that $n$ affects $u_{n}^{\ast }$ only through $B_{n}\left( 0\right) $;
it therefore suffices to show that $u_{n}^{\ast }$ increases in $B_{n}\left(
0\right) $. Note that
\begin{eqnarray*}
u_{n}^{\ast } &=&\alpha \left( 1-\Lambda \right) p_{n}^{\ast }\left(
0\right) +\frac{1}{2}\kappa x_{n}^{\ast 2} \\
&=&x_{n}^{\ast }\left[ \alpha \left( 1-\Lambda \right) \left( D\left(
0\right) -B_{n}\left( 0\right) \right) +\frac{\alpha \Lambda D\left(
0\right) +\left( \gamma -\alpha \Lambda \right) B_{n}\left( 0\right) }{2}%
\right] .
\end{eqnarray*}%
The first factor, $x_{n}^{\ast }$, increases in $B_{n}\left( 0\right) $
since $\gamma >\alpha \Lambda $. The bracketed term increases in $%
B_{n}\left( 0\right) $ if and only if $-\alpha \left( 1-\Lambda \right) +%
\frac{\gamma -\alpha \Lambda }{2}>0\Leftrightarrow \frac{\gamma }{\alpha }%
>2-\Lambda $. Hence, if $\frac{\gamma }{\alpha }>2-\Lambda $, both
factors increase in $B_{n}\left( 0\right) $ and $u_{n}^{\ast }$
increases in $n$, as required.
\end{proof}

\clearpage
\section{Identifying Dual-Class Firms, 1971--2022}\label{appendix:data}

We start by identifying candidate dual-class firms. We compare a given firm's number of shares outstanding obtained from CRSP and Compustat. CRSP provides the number of shares outstanding at the security level (i.e., for each class of shares), whereas Compustat provides the corresponding number at the firm level (i.e., the sum across all classes of shares). Thus, a significant difference between the two numbers indicates that the firm might have multiple classes of shares. If the numbers of shares from CRSP and Compustat differ by more than 2\%, we place those firm-years into a candidate set (\cite*{gompers2010extreme}). We supplement this set from a dataset of dual-class IPOs from Jay Ritter's website. We then hand-check whether firms in this candidate sample have multiple share classes and for confirmed dual-class firms, we collect information on the number of votes and shares outstanding with two data sources. For 1994--2022, we use the annual report (Form 10-K) and proxy statement (DEF 14A) taken from the SEC's EDGAR database for each firm-year, except for those covered by the \cite{gompers2010extreme} dataset. The SEC EDGAR does not provide information in electronic form for 1971--93, and so we use \emph{Moody's Manuals} (the Capital Stock section) during the period.

A notable fraction of firms with more than one class of common shares has the same number of votes across classes (e.g., one vote per share). We therefore determine whether these firms have different voting rights between share classes by manually examining security filings from the EDGAR (such as the DEF 14A, 424B, S-1, 10-K/Q) and \emph{Moody's}. We find that there are three possible reasons that these firms have multiple share classes, the first two of which represent differential voting rights: 

(1)	Different classes have differing voting rights for director election. In a typical case in this category, one class has the right to elect two-thirds of directors, and the other class, one-third. For these cases, we define the class with greater director election right as “superior.” There are 435 dual-class firm-years in this category.

(2)	Some firms use specific formulas to calculate the numbers of votes for different classes. A typical example involves a ``superior'' class of common stocks with the number of votes per share equal to the number of ``holdings units'' in a limited liability company that a small group of shareholders own. An ``inferior'' class typically carries one vote per share. These cases are rare, with only 24 firm-years involved.

(3)	The final category includes cases for which a dual-class structure appears to be set up for reasons other than giving different voting rights. For example, Triple-S Management has issued Class B common stocks as a capital asset for tax purposes. In other cases, non-US firms restrict ownership of one class of common stocks to citizens of specific countries.\footnote{For example, Grupo Iusacell, S.A. de C.V. restricts its Class B common stock ownership to Mexican citizens only, whereas their Class A common stock has no ownership restrictions.} Given that these cases do not represent a deviation from one share--one vote, we define them as ``non-dual-class'' and drop them from the analysis. This category includes 202 firm-years. \looseness=-1

\clearpage
\section{Constructing Matched Samples for Difference-in-Diffe
-rences Event Studies}\label{appendix:matching}
\setcounter{figure}{0}

In Section \ref{sec:q_robustness}, we perform the event studies with matched samples. We match each treated firm with a control firm with the smallest Mahalanobis distance over a parsimonious set of variables that are not balanced between treated and control groups (see Panel A of Online Appendix Tables \ref{tab:recap_balance_q}, \ref{tab:unif_balance_q}, \ref{tab:recap_balance_cites}, and \ref{tab:unif_balance_cites}). For recapitalizations, we match on log book assets and an indicator variable for positive R\&D expenses, while for unifications we match on age and sales growth. We require that the distance between the treated and matched firms is within the $60^{th}$ percentile of distance among all potential control firms. We also require that the matched firm be in the same industry as the treated firm---either the same three-digit SIC code, or the same two-digit SIC code if we cannot find a match with the three-digit. Finally, we trim the matched pairs with the largest absolute differences in selected variables to achieve full balance across covariates: for recapitalizations, we trim based on ROA and age, while for unifications based on leverage. Panel B of Online Appendix Tables \ref{tab:recap_balance_q}-\ref{tab:unif_balance_cites} shows the resulting balance for the matched samples. Using the matched samples, we estimate a version of Eq. (\ref{eq:dd}) in which we further include matched pair fixed effects and cluster standard errors at the matched pair level (\cite{Abadie_Spiess2022}). 

\newpage{}

\setcounter{table}{0}
\renewcommand{\thetable}{O\thesection\arabic{table}}
\begin{table}[!htbp] \centering
  \caption{\textbf{Descriptive Statistics on Baseline and Matched Samples for Difference-in-Differences Analysis of Recapitalization: Tobin's \emph{q}}} 
  \label{tab:recap_balance_q} 
\scalebox{1}{
\begin{tabular}{@{\extracolsep{5pt}} lccc} 
\\[-1.8ex]
\toprule 
\toprule \\[-1.8ex] 
\multicolumn{1}{c}{Variable} &
\multicolumn{1}{c}{Treated Mean} &
\multicolumn{1}{c}{Control Mean} &
\multicolumn{1}{c}{Diff.} \\
\midrule \\[-1.8ex] 
  \multicolumn{4}{c}{\textbf{Panel A: Baseline Sample}}   \\ 
\midrule \\[-1.8ex] 
Log book assets & 5.213 & 5.774 & -0.561\textasteriskcentered \textasteriskcentered \textasteriskcentered  \\ 
Age & 12.901 & 13.647 & -0.747 \\ 
Tobin's \emph{q} & 1.825 & 1.975 & -0.150 \\ 
Market leverage & 0.287 & 0.258 & 0.029\textasteriskcentered \textasteriskcentered  \\ 
R\&D & 0.017 & 0.034 & -0.017\textasteriskcentered \textasteriskcentered \textasteriskcentered  \\ 
Tangibility & 0.330 & 0.332 & -0.002 \\ 
Sales growth & 0.221 & 0.158 & 0.063 \\ 
Payout ratio & 0.022 & 0.028 & -0.006\textasteriskcentered \textasteriskcentered  \\ 
ROA & 0.181 & 0.141 & 0.041\textasteriskcentered \textasteriskcentered \textasteriskcentered  \\ 
Observations & 171 & 84,152 & - \\ 
\midrule \\[-1.8ex] 
   \multicolumn{4}{c}{\textbf{Panel B: Matched Sample}}   \\ 
\midrule \\[-1.8ex] 
Log book assets & 5.182 & 5.208 & -0.026 \\ 
Age & 12.529 & 11.471 & 1.057 \\ 
Tobin's \emph{q} & 1.795 & 1.801 & -0.006 \\ 
Market leverage & 0.293 & 0.260 & 0.033 \\ 
R\&D & 0.017 & 0.016 & 0.002 \\ 
Tangibility & 0.330 & 0.338 & -0.008 \\ 
Sales growth & 0.168 & 0.153 & 0.015 \\ 
Payout ratio & 0.023 & 0.027 & -0.004 \\ 
ROA & 0.177 & 0.162 & 0.015 \\ 
Observations & 157 & 157 & - \\ 
 \\[-1.8ex]
\bottomrule 
\bottomrule \\[-1.8ex] 
\end{tabular} 
}
\begin{tablenotes}
\small
\item \textit{Notes:} This table presents descriptive statistics for treated and control groups for recapitalization events used in the difference-in-differences analysis for Tobin's $q$, measured one year prior to the recapitalization. Panels A and B show for the baseline sample and the matched sample, respectively. In both samples, the treated group consists of firms that recapitalize with dual-class shares. The control group consists of firms that remain single-class (Panel A) and firms that remain single-class with matched characteristics (Panel B). Variables are defined as in Table \ref{tab:descriptive_stats}. The column ``Diff.'' shows differences between the means for treated and control groups of firms. *, **, and *** represent significance at the 10\%, 5\%, and 1\% levels, respectively, based on standard errors clustered at the firm (Panel A) and matched pair (Panel B) levels.
\end{tablenotes}
\end{table} 


\begin{table}[!htbp] \centering 
  \caption{\textbf{Descriptive Statistics on Baseline and Matched Samples for Difference-in-Differences Analysis of Unification: Tobin's \emph{q}}} 
  \label{tab:unif_balance_q} 
\scalebox{1}{
\begin{tabular}{@{\extracolsep{5pt}} lccc} 
\\[-1.8ex]
\toprule 
\toprule \\[-1.8ex] 
\multicolumn{1}{c}{Variable} &
\multicolumn{1}{c}{Treated Mean} &
\multicolumn{1}{c}{Control Mean} &
\multicolumn{1}{c}{Diff.} \\
\midrule \\[-1.8ex] 
   \multicolumn{4}{c}{\textbf{Panel A: Baseline Sample}}   \\ 
\midrule \\[-1.8ex]
Log book assets & 6.197 & 6.432 & -0.235 \\ 
Age & 12.514 & 16.761 & -4.248\textasteriskcentered \textasteriskcentered \textasteriskcentered  \\ 
Tobin's \emph{q} & 2.588 & 2.171 & 0.416* \\ 
Market leverage & 0.291 & 0.283 & 0.008 \\ 
R\&D & 0.023 & 0.019 & 0.004 \\ 
Tangibility & 0.328 & 0.294 & 0.034\textasteriskcentered  \\ 
Sales growth & 0.187 & 0.125 & 0.062\textasteriskcentered \textasteriskcentered  \\ 
Payout ratio & 0.028 & 0.029 & -0.001 \\ 
ROA & 0.139 & 0.141 & -0.002 \\ 
Observations & 185 & 6,539 & - \\ 
\midrule \\[-1.8ex] 
   \multicolumn{4}{c}{\textbf{Panel B: Matched Sample}}   \\ 
\midrule \\[-1.8ex] 
Log book assets & 6.222 & 6.317 & -0.095 \\ 
Age & 12.650 & 12.825 & -0.175 \\ 
Tobin's \emph{q} & 2.740 & 2.386 & 0.354 \\ 
Market leverage & 0.293 & 0.254 & 0.039 \\ 
R\&D & 0.026 & 0.034 & -0.008 \\ 
Tangibility & 0.315 & 0.290 & 0.025 \\ 
Sales growth & 0.182 & 0.136 & 0.046 \\ 
Payout ratio & 0.028 & 0.025 & 0.003 \\ 
ROA & 0.132 & 0.132 & -0.001 \\ 
Observations & 160 & 160 & - \\ 
 \\[-1.8ex]
\bottomrule 
\bottomrule \\[-1.8ex] 
\end{tabular} 
}
\begin{tablenotes}
\small
\item \textit{Notes:} This table presents descriptive statistics for treated and control groups for unification events used in the difference-in-differences analysis for Tobin's $q$, measured one year prior to the unification. Panels A and B show for the baseline sample and the matched sample, respectively. In both samples, the treated group consists of firms that unify share classes. The control group consists of firms that remain dual-class (Panel A) and firms that remain dual-class with matched characteristics (Panel B). Variables are defined as in Table \ref{tab:descriptive_stats}. The column ``Diff.'' shows differences between the means for treated and control groups of firms. *, **, and *** represent significance at the 10\%, 5\%, and 1\% levels, respectively, based on standard errors clustered at the firm (Panel A) and matched pair (Panel B) levels.
\end{tablenotes}
\end{table} 

\newpage{}

\begin{table}[!htbp] \centering
  \caption{\textbf{Descriptive Statistics on Baseline and Matched Samples for Difference-in-Differences Analysis of Recapitalization: Innovative Output}} 
  \label{tab:recap_balance_cites} 
\scalebox{1}{
\begin{tabular}{@{\extracolsep{5pt}} lccc} 
\\[-1.8ex]
\toprule 
\toprule \\[-1.8ex] 
\multicolumn{1}{c}{Variable} &
\multicolumn{1}{c}{Treated Mean} &
\multicolumn{1}{c}{Control Mean} &
\multicolumn{1}{c}{Diff.} \\
\midrule \\[-1.8ex] 
  \multicolumn{4}{c}{\textbf{Panel A: Baseline Sample}}   \\ 
\midrule \\[-1.8ex] 
Log book assets & 5.264 & 5.836 & -0.572\textasteriskcentered \textasteriskcentered \textasteriskcentered  \\ 
Age & 12.823 & 13.854 & -1.032 \\ 
Tobin's \emph{q} & 1.906 & 2.004 & -0.098 \\ 
Market leverage & 0.295 & 0.261 & 0.034\textasteriskcentered \textasteriskcentered  \\ 
R\&D & 0.021 & 0.035 & -0.013\textasteriskcentered \textasteriskcentered \textasteriskcentered  \\ 
Tangibility & 0.337 & 0.328 & 0.009 \\ 
Sales growth & 0.231 & 0.158 & 0.074 \\ 
Payout ratio & 0.023 & 0.028 & -0.005\textasteriskcentered  \\ 
ROA & 0.178 & 0.138 & 0.040\textasteriskcentered \textasteriskcentered  \\ 
IHS(citations) & 1.763 & 2.362 & -0.599 \\ 
Observations & 186 & 91,985 & - \\ 
\midrule \\[-1.8ex] 
   \multicolumn{4}{c}{\textbf{Panel B: Matched Sample}}   \\ 
\midrule \\[-1.8ex] 
Log book assets& 5.240 & 5.256 & -0.016 \\ 
Age & 12.676 & 11.585 & 1.091 \\ 
Tobin's \emph{q} & 1.806 & 1.759 & 0.046 \\ 
Market leverage & 0.294 & 0.261 & 0.033 \\ 
R\&D & 0.020 & 0.021 & -0.000 \\ 
Tangibility & 0.338 & 0.341 & -0.003 \\ 
Sales growth & 0.165 & 0.147 & 0.018 \\ 
Payout ratio & 0.024 & 0.026 & -0.003 \\ 
ROA & 0.177 & 0.158 & 0.020 \\ 
IHS(citations) & 1.769 & 2.058 & -0.289 \\ 
Observations & 176 & 176 & - \\ 
 \\[-1.8ex]
\bottomrule 
\bottomrule \\[-1.8ex] 
\end{tabular} 
}
\begin{tablenotes}
\small
\item \textit{Notes:} This table presents descriptive statistics for treated and control groups for recapitalization events used in the difference-in-differences analysis for patent citations, measured one year prior to the recapitalization. Panels A and B show for the baseline sample and the matched sample, respectively. In both samples, the treated group consists of firms that recapitalize with dual-class shares. The control group consists of firms that remain single-class (Panel A) and firms that remain single-class with matched characteristics (Panel B). Variables are defined as in Table \ref{tab:descriptive_stats}. The column ``Diff.'' shows differences between the means for treated and control groups of firms. *, **, and *** represent significance at the 10\%, 5\%, and 1\% levels, respectively, based on standard errors clustered at the firm (Panel A) and matched pair (Panel B) levels.
\end{tablenotes}
\end{table} 


\begin{table}[!htbp] \centering 
  \caption{\textbf{Descriptive Statistics on Baseline and Matched Samples for Difference-in-Differences Analysis of Unification: Innovative Output}} 
  \label{tab:unif_balance_cites}
\scalebox{1}{
\begin{tabular}{@{\extracolsep{5pt}} lccc} 
\\[-1.8ex]
\toprule 
\toprule \\[-1.8ex] 
\multicolumn{1}{c}{Variable} &
\multicolumn{1}{c}{Treated Mean} &
\multicolumn{1}{c}{Control Mean} &
\multicolumn{1}{c}{Diff.} \\
\midrule \\[-1.8ex] 
   \multicolumn{4}{c}{\textbf{Panel A: Baseline Sample}}   \\ 
\midrule \\[-1.8ex]
Log book assets & 6.263 & 6.472 & -0.209 \\ 
Age & 11.746 & 16.720 & -4.974\textasteriskcentered \textasteriskcentered \textasteriskcentered  \\ 
Tobin's \emph{q} & 2.626 & 2.190 & 0.436\textasteriskcentered \textasteriskcentered  \\ 
Market leverage & 0.321 & 0.285 & 0.036\textasteriskcentered  \\ 
R\&D & 0.026 & 0.021 & 0.005 \\ 
Tangibility & 0.323 & 0.288 & 0.035\textasteriskcentered \textasteriskcentered  \\ 
Sales growth & 0.187 & 0.129 & 0.058\textasteriskcentered \textasteriskcentered  \\ 
Payout ratio & 0.026 & 0.029 & -0.003 \\ 
ROA & 0.126 & 0.140 & -0.014 \\ 
IHS(citations) & 2.027 & 1.875 & 0.152 \\ 
Observations & 224 & 7,243 & - \\ 
\midrule \\[-1.8ex] 
   \multicolumn{4}{c}{\textbf{Panel B: Matched Sample}}   \\ 
\midrule \\[-1.8ex] 
Log book assets & 6.260 & 6.377 & -0.117 \\ 
Age & 11.968 & 12.126 & -0.158 \\ 
Tobin's \emph{q} & 2.751 & 2.620 & 0.131 \\ 
Market leverage & 0.302 & 0.269 & 0.033 \\ 
R\&D & 0.030 & 0.033 & -0.003 \\ 
Tangibility & 0.309 & 0.287 & 0.021 \\ 
Sales growth & 0.185 & 0.146 & 0.040 \\ 
Payout ratio & 0.026 & 0.021 & 0.006 \\ 
ROA & 0.120 & 0.136 & -0.016 \\ 
IHS(citations) & 1.769 & 2.058 & -0.289 \\ 
Observations & 190 & 190 & - \\ 
 \\[-1.8ex]
\bottomrule 
\bottomrule \\[-1.8ex] 
\end{tabular} 
}
\begin{tablenotes}
\small
\item \textit{Notes:} This table presents descriptive statistics for treated and control groups for unification events used in the difference-in-differences analysis for patent citations, measured one year prior to the unification. Panels A and B show for the baseline sample and the matched sample, respectively. In both samples, the treated group consists of firms that unify share classes. The control group consists of firms that remain dual-class (Panel A) and firms that remain dual-class with matched characteristics (Panel B). Variables are defined as in Table \ref{tab:descriptive_stats}. The column ``Diff.'' shows differences between the means for treated and control groups of firms. *, **, and *** represent significance at the 10\%, 5\%, and 1\% levels, respectively, based on standard errors clustered at the firm (Panel A) and matched pair (Panel B) levels.
\end{tablenotes}
\end{table}

\end{appendices}

\end{document}